\newcolumntype{C}[1]{>{\centering\arraybackslash}m{#1}}
\newcommand{\red}[1]{\textcolor{black}{#1}}
\newcommand{\blue}[1]{\textcolor{black}{#1}}
\newcommand{\orange}[1]{\textcolor{black}{#1}}
\begin{document}

\title{N2E: A General Framework to Reduce Node-Differential Privacy to Edge-Differential Privacy for Graph Analytics}

\author{Yihua Hu}
\affiliation{%
  \institution{Nanyang Technological University}
  \city{Singapore}
  \country{Singapore}
}
\email{yihua001@e.ntu.edu.sg}

\author{Hao Ding}
\affiliation{%
  \institution{Nanyang Technological University}
  \city{Singapore}
  \country{Singapore}
}
\email{HAO028@e.ntu.edu.sg}

\author{Wei Dong}
\authornote{Wei Dong is the corresponding author.}
\affiliation{%
  \institution{Nanyang Technological University}
  \city{Singapore}
  \country{Singapore}
}
\email{wei_dong@ntu.edu.sg}

\renewcommand{\shortauthors}{Yihua Hu, Hao Ding, and Wei Dong}

\begin{abstract}
Differential privacy (DP) has been widely adopted to protect sensitive information in graph analytics. While edge-DP, which protects privacy at the edge level, has been extensively studied, node-DP, offering stronger protection for entire nodes and their incident edges, remains largely underexplored due to its technical challenges.
A natural way to bridge this gap is to develop a general framework for reducing node-DP graph analytical tasks to edge-DP ones, enabling the reuse of existing edge-DP mechanisms.
A straightforward solution based on group privacy divides the privacy budget by a given degree upper bound, but this leads to poor utility when the bound is set conservatively large to accommodate worst-case inputs.
To address this, we propose node-to-edge (N2E), a general framework that reduces any node-DP graph analytical task to an edge-DP one, with the error dependency on the graph’s true maximum degree. 
N2E introduces two novel techniques: a distance-preserving clipping mechanism that bounds edge distance between neighboring graphs after clipping, and the first node-DP mechanism for maximum degree approximation, enabling tight, privacy-preserving clipping thresholds. 
By instantiating N2E with existing edge-DP mechanisms, we obtain the first node-DP solutions for tasks such as maximum degree estimation. 
For edge counting, our method theoretically matches the error of the state-of-the-art, which is provably optimal, and significantly outperforms existing approaches for degree distribution estimation. Experimental results demonstrate that our framework achieves up to a $2.5\times$ reduction in error for edge counting and up to an $80\times$ reduction for degree distribution estimation.
\end{abstract}

\begin{CCSXML}
<ccs2012>
   <concept>
       <concept_id>10002951.10002952</concept_id>
       <concept_desc>Information systems~Data management systems</concept_desc>
       <concept_significance>500</concept_significance>
       </concept>
   <concept>
       <concept_id>10002978.10003018</concept_id>
       <concept_desc>Security and privacy~Database and storage security</concept_desc>
       <concept_significance>500</concept_significance>
       </concept>
 </ccs2012>
\end{CCSXML}

\ccsdesc[500]{Information systems~Data management systems}
\ccsdesc[500]{Security and privacy~Database and storage security}

\keywords{Differential privacy, SJA query processing}

\received{April 2025}
\received[revised]{July 2025}
\received[accepted]{August 2025}

\maketitle

\section{Introduction}
\label{sec:intro}

In the era of big data, graph analytics has become essential across various domains such as social networks, healthcare, and finance, yielding significant societal and economic benefits.
For instance, social platforms such as Facebook and LinkedIn leverage graph analytics to support features like friend and professional connection recommendations. 
However, these graphs often contain sensitive information, and publishing analytical results can pose serious privacy risks, such as revealing specific edges or exposing node attributes~\cite{narayanan2009anonymizing, wu2021adapting, zhang2022inference}. 
To address these concerns, \textit{differential privacy} (DP)~\cite{dwork2006calibrating} has emerged as a rigorous standard for privacy-preserving graph analytics. 
Under DP, two graphs are considered neighbors if they differ by a single individual, represented either as a node or an edge.
DP ensures that the analytical results between two neighboring graphs are $(\varepsilon,\delta)$-indistinguishable, preventing the inference of any individual's information.
The privacy budget $\varepsilon$ controls privacy leakage, with smaller values providing stronger privacy guarantees, while $\delta$ controls the failure probability, quantifying the likelihood that the privacy guarantee does not hold.

A key distinction in applying DP to graph analytics lies in the choice of protecting either edges or nodes. 
Specifically, 
\emph{edge-DP}~\cite{hay2009accurate, karwa2011private, sala2011sharing} protects individual edges, while \emph{node-DP}~\cite{kasiviswanathan2013analyzing, chen2013recursive, day2016publishing} protects entire nodes with its incident edges.
Accordingly, two graphs are neighbors under edge-DP if they differ by a single edge, whereas under node-DP, they differ by a node and all its edges.   
Although node-DP offers stronger privacy guarantees, it is more challenging to achieve since a node has a greater contribution to the analytical result than a single edge.
This leads to a significant research gap between node-DP and edge-DP.
Many graph analytical tasks, such as maximum degree estimation~\cite{hay2009accurate, karwa2011private,dong2024continual}, and shortest path computation~\cite{sealfon2016shortest,fan2022private} have been extensively studied under edge-DP but lack corresponding solutions under node-DP.
Yet, node-DP is more meaningful in practice, as it better aligns with real-world privacy concerns, 
where users (nodes) and their associated relationships (edges) should be protected as a whole rather than in isolation.

Therefore, a well-motivated objective is to develop a general framework for reducing node-DP tasks to edge-DP. 
More precisely, given any graph task $Q$ and an edge-DP mechanism $\mathcal{M}_Q^{\mathrm{edge}}$, this framework produces a corresponding node-DP protocol $\mathcal{M}_Q^{\mathrm{node}}$.  
Recall that the key challenge in achieving node-DP, as opposed to edge-DP, lies in the greater impact of a single node on the graph structure.
However, this impact can be naturally bounded, as each node is incident to at most $O(\widehat{N})$~\footnote{\red{We use $O(\cdot)$ for standard asymptotic bounds and $\tilde{O}(\cdot)$ to suppress logarithmic factors. 
All logarithms are written as $\log$ and treated equivalently in $O(\cdot)$.}}
edges, where $\widehat{N}$ is a predefined upper bound on the number of nodes $N$ in the graph. 
Consequently, protecting an edge's privacy also protects part of the associated node's privacy.
This principle is formally captured in the DP literature as \textit{group privacy}~\cite{dwork2014algorithmic} (see Lemma~\ref{lem:gp} for a formal definition).
Intuitively, since two neighboring graphs under node-DP have an \textit{edge distance}, which is defined as the number of differing edges, bounded by $O(\widehat{N})$, any $(\varepsilon/\widehat{N},\delta/\widehat{N})$-edge-DP protocol naturally follows $(\varepsilon,\delta)$-node-DP.

However, this approach always breaks utility in practice. 
Since most edge-DP mechanisms exhibit errors that grow linearly with ${1}/{\varepsilon}$, the group privacy solution amplifies the error by a factor of $\widehat{N}$. 
In practice, $\widehat{N}$ should be set conservatively large to ensure coverage of all possible graphs, leading to a large degradation in utility.
Take edge counting as an example, applying group privacy with a state-of-the-art edge-DP protocol results in an error of $O({\widehat{N}}/{\varepsilon})$.
Instead, the state-of-the-art node-DP solution achieves an error of $\tilde{O}({\deg(G)}/{\varepsilon})$~\cite{dong2022r2t,fang2022shifted}, where $\deg(G)$ is the maximum degree of nodes in the graph $G$.
In real-world scenarios, $\deg(G)$ is typically much smaller than $\widehat{N}$.
For example, the average Facebook user has around $130$ friends, far fewer than the total number of users of the platform.

The bad performance of the group privacy approach stems from its dependency on $\widehat{N}$, the worst-case upper bound on a node’s contribution. 
A more preferable solution is to use the actual maximum node's contribution in the graph $G$, i.e., $\deg(G)$. 
However, $\mathrm{deg}(G)$ itself is highly sensitive, and using it as a parameter to scale randomness
with group privacy would break privacy guarantees~\cite{huang21mean,dong2022r2t,dong2023universal}.
A common strategy to mitigate worst-case error dependency is the \textit{clipping mechanism}~\cite{blocki2013differentially, huang21mean}.  
By clipping the input graph with a degree upper bound $\tau$, 
the resulting node-DP error shifts its dependency from $\widehat{N}$ to $\tau$.
Moreover, when $\tau$ is set close to $\deg(G)$, we achieve the error dependency that improves from $\widehat{N}$ to $\deg(G)$.
For example, in the node-DP edge counting problem, 
we can achieve the state-of-the-art error $O(\deg(G)/\varepsilon)$. 
However, this approach introduces two key challenges.

\textbf{Challenge 1. Distance-preserving clipping mechanism.}
The clipping mechanism needs to be distance-preserving, meaning that for a given degree upper bound $\tau$,
the edge distance of two neighboring graphs after clipping remains upper bounded by a function of $\tau$.  
However, no existing clipping mechanism satisfies this requirement under node-DP.
To illustrate this, consider the most straightforward clipping mechanism that deletes all nodes with degrees higher than $\tau$. 
Such a clipping mechanism fails to meet the distance-preserving requirement under node-DP.
Consider a graph $G$ where all $N$ nodes have degree $\tau$, and construct a neighboring graph $G'$ by adding a new node that connects to all $N$ nodes in $G$. 
After applying the clipping mechanism, $G$ remains unchanged, but $G'$ becomes empty.
This results in an edge distance of ${N \cdot \tau}/{2}$, which violates the distance-preserving requirement.

\textbf{Challenge 2. Node-DP maximum degree estimation.}
To achieve an error dependency on $\deg(G)$, we should estimate a $\tau$ as close to $\deg(G)$ as possible.
However, maximum degree estimation under node-DP remains an open problem.
First, extending edge-DP maximum degree estimators to support node-DP is fundamentally infeasible.
Under edge-DP, each edge affects the maximum degree by at most $1$, so adding $O(1/\varepsilon)$ noise is sufficient to mask its contribution.
In contrast, under node-DP, a single node can influence the maximum degree by up to $\widehat{N}$, requiring noise of scale $O(\widehat{N} / \varepsilon)$ to ensure privacy. However, this completely destroys utility, as the maximum degree is at most $\widehat{N}$.
This challenge can be further reflected in the fact that many node-DP techniques~\cite{kasiviswanathan2013analyzing,blocki2013differentially,day2016publishing,sajadmanesh2023gap,zhang2020community} assume a predefined degree upper bound rather than proposing a method to estimate it.

\textbf{Problem Statement.~}
These challenges raise a question:  \textit{Is there a general framework to reduce node-DP tasks to edge-DP with an error upgraded at most by $\tilde{O}(\deg(G))$?}  

\subsection{Our Results}
\label{subsec:intro-results}
This paper answers this question affirmatively.  
First, to address \textbf{Challenge 1}, we propose a distance-preserving clipping mechanism under node-DP. 
Specifically, given a degree upper bound $\tau$, the edge distance of any two neighboring graphs after clipping is bounded by $\tau + k$, where $k$ is the number of nodes in $G$ with degree at least $\tau$. 
When $\tau$ approaches $\deg(G)$, $k$ remains small, satisfying the distance-preserving requirement.
Furthermore, the proposed clipping mechanism is computationally efficient, operating in linear time.

Next, to address \textbf{Challenge 2}, 
we propose our node-DP protocol for maximum degree approximation.
For any input graph $G$, the mechanism outputs a $\tau^*$, which can be shown to be a good approximation of $\mathrm{deg}(G)$.
On one hand, $\tau^*$ is not much smaller than $\mathrm{deg}(G)$ with a bounded rank error: there are at most $\tilde{O}(1 / \varepsilon)$ nodes\footnote{All results in the introduction hold with constant probability.} in $G$ with degree at least $\tau^*$.
On the other hand, $\tau^*$ is at most $O(\deg(G)) + \tilde{O}(1/ \varepsilon)$.
Our node-DP maximum degree approximator relies on solving $O(\log\mathrm{deg}(G))$ number of linear programs, resulting in a polynomial-time complexity.

\begin{table}[htbp]
\centering
\resizebox{1\linewidth}{!}{
    \renewcommand{\arraystretch}{1.2}
    \setlength{\tabcolsep}{12pt}
    \begin{tabular}{c||c|c|c}
        \toprule
        \textbf{Task} & \textbf{Our Results} & \textbf{SOTA Edge-DP} & \textbf{SOTA Node-DP} \\
        \midrule\midrule
        Edge Counting & 
        $O\Bigl(
        \deg(G)\log\log\deg(G) + 
        \log\tfrac{1}{\delta}
        \Bigr)$ &
        $O\Bigl(1
        \Bigr)$ &
        $O\Bigl(
        \deg(G)\log\log \deg(G)\Bigr)$~\cite{dong2024instance} \\
        \midrule
        \begin{tabular}{@{}c@{}}Maximum Degree \\ Estimation\end{tabular} &
        \begin{tabular}{@{}c@{}}$|\deg(G) - \deg({G}')|,\,$ 
        $d_{\text{edge}}(G, {G}') = $ \\ 
        $O\Bigl(
        \deg(G)\log\deg(G) 
        + 
        \log\deg(G) 
        \log\tfrac{1}{\delta}
        \Bigr)$\end{tabular}
        &
        $O\Bigl(1\Bigr)$ &
        $O\Bigl({\widehat{N}}\Bigr)$ \\
        \midrule
        Degree Distribution &
        $O\Bigl(
        \deg(G)^{1.5} 
        +
        \mathrm{deg}^{0.5}(G)\log\frac{1}{\delta}
        \Bigr)$ &
        $O\Bigl(
        \deg(G)^{0.5}\Bigr)$ &
        $O\Bigl(
    \widehat{N}^{1.5}\Bigr)~$\cite{day2016publishing} \\
        \bottomrule
    \end{tabular}
}

\caption{Summary of node-DP error on common graph analytical tasks.\protect\footnotemark 
Here, SOTA stands for state-of-the-art. \cite{day2016publishing} also proposes a mechanism that relies on a heuristic solution to estimate $\mathrm{deg}(G)$, but it lacks formal utility guarantees. We include this heuristic as a baseline in our experiments.
We use $d_{\text{edge}}(G, {G}')$ to denote the edge distance between $G$ and ${G}'$. $\varepsilon$ is regarded as a constant and hidden in the $O$-notation.
}
\label{tab:node_dp_errors}
\end{table}

By combining these two proposed mechanisms, we develop our framework \textit{node-to-edge} (N2E), a framework that can reduce an arbitrary node-DP task to an edge-DP task. 
N2E works as follows. 
Given any graph analytical task $Q$ and an edge-DP mechanism $\mathcal{M}_Q^{\mathrm{edge}}$, for any graph $G$, we first compute $\tau^*$ as the approximation of $\mathrm{deg}(G)$.
Then, we clip the graph using our distance-preserving clipping mechanism with degree upper bound $\tau^*$ to obtain the clipped graph $\overline{G}$. 
Finally, this framework invokes the edge-DP mechanism $\mathcal{M}_Q^{\mathrm{edge}}$ on the clipped graph, with privacy budgets $\varepsilon'=\varepsilon/O(\tau^*)$ and $\delta'=\delta/O(\tau^*)$.

\footnotetext{
\red{
Our results satisfy $(\varepsilon, \delta)$-node-DP, while the node-DP SOTA methods satisfy pure $\varepsilon$-node-DP. 
We set $\delta$ to a negligible value of $2^{-30}$ to ensure minimal practical impact. 
Importantly, our N2E framework supports transforming arbitrary node-DP tasks into edge-DP
ones, which the SOTA methods cannot achieve. 
}
}

N2E achieves an error of:
\begin{equation*}
    |Q(G) - Q(\overline{G})| + \mathrm{Err}_Q^{\text{edge}}(\overline{G},\varepsilon',\delta') ,
\end{equation*}
where $\overline{G}$ is obtained by clipping edges of $O(\log \log \deg(G) / \varepsilon)$ nodes, 
and
$\mathrm{Err}_Q^{\text{edge}}$ denotes the error of the edge-DP mechanism on the task $Q$.  
Such an error consists of two components.
The first component is the clipping bias, which arises from the clipping process. This component can be further shown to be optimal (see Section~\ref{subsec:method-combine} for details).
The second component results from the edge-DP mechanism. Since $\overline{G}$ is derived by clipping only a small number of nodes from $G$, most edge-DP protocols exhibit similar errors on both $G$ and $\overline{G}$ under the same parameters. Moreover, since $\varepsilon$ is usually regarded as a constant in the DP literature, the second component aligns with our target error bound: $\mathrm{Err}_Q^{\text{edge}}(G,\varepsilon/\tilde{O}(\mathrm{deg}(G)),\delta/\tilde{O}(\mathrm{deg}(G)))$.
We will show our results upgrade the edge-DP error by $\tilde{O}(\mathrm{deg}(G))$ for the most common graph analytical tasks in Table~\ref{tab:node_dp_errors}.
For edge counting, our result matches the state-of-the-art node-DP error, up to an additional additive term of $O(\log(1/\delta))$.
For degree distribution, our result reduces the error from $O(\widehat{N}^{1.5})$
to 
${O}(\deg(G)^{1.5})+\tilde{O}(\deg(G)^{0.5})$.
For maximum degree estimation, we propose the first node-DP solution: we return the maximum degree of $G'$, where $G'$ deletes $\tilde{O}(\deg(G))$ edges from $G$.

Experimental results~\footnote{Code is available at https://github.com/Chronomia/N2E.} show that N2E achieves up to a $2.5\times$ error reduction compared to 
state-of-the-art methods in edge counting, 
and up to a \red{$80\times$} reduction compared to the existing state-of-the-art method in degree distribution.
Moreover, N2E delivers high utility as the first node-DP solution for maximum degree estimation, achieving a relative rank error of around $1\%$ in most experiments.

\section{Related Work}
\label{sec:related_work}

DP graph analytics has been extensively studied, with early works focusing primarily on graph pattern counting queries. Most of these studies adopt edge-DP and target specific graph patterns. \cite{nissim2007smooth} proposed \textit{smooth sensitivity}, which enables private triangle counting. \cite{karwa2011private} extended this approach to $k$-star counting and introduced \textit{higher-order local sensitivity} for $k$-triangle counting. \cite{zhang2015private} proposed the ladder function, applicable to $k$-triangle and $k$-clique counting. \cite{johnson2018towards} introduced \textit{elastic sensitivity}, the first edge-DP mechanism supporting arbitrary graph pattern counting. More recently, \cite{dong21:residual} proposed \textit{residual sensitivity}, which also supports arbitrary patterns and has been shown to achieve near-optimal error~\cite{dong2021nearly}. 
Besides, some works study edge-DP graph pattern counting in dynamic settings~\cite{fichtenberger2021differentially,dong2024continual} or decentralized settings~\cite{Imola2021,Imola2022,He2025,eden2023triangle}, which is different from our setting.
Moving towards node-DP, \cite{kasiviswanathan2013analyzing} and \cite{blocki2013differentially} proposed solutions for arbitrary graph pattern counting, albeit under strong assumptions on node degrees. \cite{chen2013recursive} introduced the recursive mechanism, which achieves high utility but suffers from high computational cost. More recently, \cite{dong2022r2t,dong2024instance} proposed efficient mechanisms that achieve both high utility and computational efficiency under node-DP. \red{\cite{dong2023better} studied how to answering multiple queries.}
Besides graph pattern counting, another widely studied problem is degree distribution.
In this context, several works have investigated edge-DP~\cite{cormode2012differentially, acs2012differentially, qardaji2013understanding, zhang2014towards}, while~\cite{day2016publishing} studies the problem under node-DP.
In these two problems, as demonstrated in our experiments, integrating our framework with edge-DP solutions leads to an improvement in error.

For other graph analytical tasks, research has primarily focused on the edge-DP setting.  
For maximum degree estimation, only edge-DP solutions have been proposed so far. The Laplace mechanism achieves an additive error, while \cite{fang2022shifted} introduced the \textit{shifted inverse mechanism}, which achieves a bounded rank error but requires exponential runtime.
Besides, densest subgraph estimation~\cite{nguyen2021differentially,farhadi2022differentially, dinitz2025almost}, and shortest path computation~\cite{sealfon2016shortest,fan2022private,chen2023differentially, deng2023differentially}, have been studied extensively under edge-DP but still lack corresponding solutions under node-DP. 
For synthetic graph generation, a recent survey~\cite{liu2024pgb} reviewed 16 works, of which 14 focus on edge-DP and only 2 address node-DP.


\section{Preliminaries}
\label{sec:prelim}

\subsection{Notation}
\label{subsec:prelim-notation}
We denote a graph as $G = (V, E)$, where $V$ is the set of nodes and $E$ is the set of edges.
In this paper, we focus on undirected graphs. 
Each edge is denoted as a pair $(u, v)$, where $u$ and $v$ are the endpoint nodes of the edge. 
Let $N=|V|$ and $M=|E|$ denote the number of nodes and edges in a graph $G$, respectively. 
$\widehat{N}$ is a predefined upper bound of $N$.
For a node $v \in V$ in graph $G$, $E(v)$ represents the set of edges incident to $v$, and $\text{deg}_G(v)$ represents the degree of $v$.
The maximum degree in $G$ is given by $\text{deg}(G) = \max_{v \in V} \text{deg}_G(v)$. 
We denote \(\deg^k(G)\) as 
the $k$-th largest degree of $G$,
and $N_\tau(G)$ as the number of nodes in $G$ with degree~$\geq \tau$.
The graph dataset $\mathcal{G}$ is a finite collection of undirected graphs.


For two graphs $G$ and $G'$, the edge distance $d_{\text{edge}}(G, G')$ is the minimum number of edge insertions or deletions required to transform $G$ into $G'$. 
We write $G \sim_{(u,v)} G'$ to denote $d_{\text{edge}}(G, G') = 1$, where $(u,v)$ is the edge that differs between $G$ and $G'$.
Similarly, we define the node distance $d_{\text{node}}(G, G')$ and write $G \sim_v G'$ if $G$ and $G'$ differ by exactly one node. 
We write $G \subseteq G'$ if $G'$ contains $G$, i.e., $G$ can be obtained by deleting a subset of nodes from $G'$ along with all edges connected to the nodes in the subset.


\subsection{Differential Privacy}
\label{subsec:prelim-dp}

\begin{definition}[Differential Privacy]
\label{def:dp}
A randomized mechanism $\mathcal{M}: \mathcal{G} \to \mathcal{R}$ satisfies $(\varepsilon, \delta)$-differential privacy if, for any two neighboring graphs $G, G' \in \mathcal{G}$, and for any subset of outputs $\mathcal{S}$ in the output space of mechanism $\mathcal{M}$, the following holds:
\[
\Pr[\mathcal{M}(G) \in \mathcal{S}] \leq e^\varepsilon \Pr[\mathcal{M}(G') \in \mathcal{S}] + \delta.
\]
$\varepsilon$ is usually a constant between $0.1$ to $10$ while $\delta\ll 1/N$.
If $\delta = 0$, the mechanism satisfies $(\varepsilon,0)$-DP or simply $\varepsilon$-DP.
\end{definition}

In the context of graph data, two graphs are considered neighboring if one can be obtained by deleting a single edge from the other, denoted as $G \sim_{\text{edge}} G'$, or by deleting a single node along with its incident edges, denoted as $G \sim_{\text{node}} G'$.
These two notions, referred to as edge-DP and node-DP~\cite{hay2009accurate}, provide privacy guarantees at different levels of granularity.
With such definitions, under node-DP, the number of nodes $N$ is private.



Differential privacy shares some desired properties:

\begin{lemma}[Post-Processing]
\label{lem:pp}
Let $\mathcal{M}: \mathcal{G} \to \mathcal{R}$ be a randomized mechanism that satisfies $(\varepsilon, \delta)$-differential privacy, and let $\mathcal{M'}: \mathcal{R} \to \mathcal{Z}$ be any deterministic or randomized mechanism. Then, $\mathcal{M'}(\mathcal{M}(G))$ satisfies $(\varepsilon, \delta)$-differential privacy for any graph $G \in \mathcal{G}$.
\end{lemma}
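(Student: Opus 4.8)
The plan is to reduce to the case of deterministic post-processing and then invoke Definition~\ref{def:dp} directly; the general randomized case then follows by averaging over the internal randomness of $\mathcal{M}'$, which is independent of the input graph.

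First I would handle the case where $\mathcal{M}'$ is deterministic. Fix any two neighboring graphs $G, G' \in \mathcal{G}$ and any measurable set $\mathcal{S}$ in the output space $\mathcal{Z}$. The idea is to pull $\mathcal{S}$ back through $\mathcal{M}'$: set $\mathcal{T} = \{ r \in \mathcal{R} : \mathcal{M}'(r) \in \mathcal{S} \}$, so that the event $\{\mathcal{M}'(\mathcal{M}(G)) \in \mathcal{S}\}$ coincides with the event $\{\mathcal{M}(G) \in \mathcal{T}\}$. Then
\[
\Pr[\mathcal{M}'(\mathcal{M}(G)) \in \mathcal{S}] = \Pr[\mathcal{M}(G) \in \mathcal{T}] \le e^{\varepsilon}\Pr[\mathcal{M}(G') \in \mathcal{T}] + \delta = e^{\varepsilon}\Pr[\mathcal{M}'(\mathcal{M}(G')) \in \mathcal{S}] + \delta ,
\]
where the inequality is exactly the $(\varepsilon,\delta)$-DP guarantee of $\mathcal{M}$ applied to the particular subset $\mathcal{T}$. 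This proves the claim when $\mathcal{M}'$ is deterministic.

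Next I would extend to randomized $\mathcal{M}'$. The key point is that $\mathcal{M}'$ can be written as $\mathcal{M}'(r) = f(r, \omega)$ for a deterministic map $f$ and auxiliary randomness $\omega$ drawn from some distribution that is independent of the input graph. Conditioning on a fixed value of $\omega$, the map $f(\cdot, \omega)$ is deterministic, so the previous step gives, for every such $\omega$,
\[
\Pr[f(\mathcal{M}(G), \omega) \in \mathcal{S} \mid \omega] \le e^{\varepsilon}\Pr[f(\mathcal{M}(G'), \omega) \in \mathcal{S} \mid \omega] + \delta .
\]
Taking expectations over $\omega$ and using that $\omega$ is independent of the graph input recovers $\Pr[\mathcal{M}'(\mathcal{M}(G)) \in \mathcal{S}] \le e^{\varepsilon}\Pr[\mathcal{M}'(\mathcal{M}(G')) \in \mathcal{S}] + \delta$, which is the desired bound. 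Since the argument never refers to the structure of $\mathcal{G}$ or to the specific adjacency relation, it applies verbatim to both edge-DP and node-DP.

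There is no substantive obstacle here; the result is standard. The only point needing minor care is measure-theoretic bookkeeping: the preimage $\mathcal{T}$ must be measurable, which holds whenever $\mathcal{M}'$ is a measurable map as is implicitly assumed, and the interchange of the DP inequality with the expectation over $\omega$ in the randomized case must be justified, which is immediate from the boundedness of all probabilities (so that Fubini's theorem applies).
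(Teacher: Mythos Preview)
Your argument is correct and is the standard textbook proof of the post-processing property. The paper itself does not prove Lemma~\ref{lem:pp}; it is stated without proof as a known property of differential privacy, so there is nothing further to compare.
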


\begin{lemma}[Sequential Composition]
\label{lem:bc}
Let $\mathcal{M}_1, \mathcal{M}_2, \dots, \mathcal{M}_k$ be a series of randomized mechanisms, 
where for each $i \in \{1, 2, \dots, k\}$,
$\mathcal{M}_i: \mathcal{G} \to \mathcal{R}_i$ satisfies $(\varepsilon, \delta)$-differential privacy. 
The composition mechanism defined as
\[
\mathcal{M}(G) = (\mathcal{M}_1(G), \dots, \mathcal{M}_k(G))
\]
satisfies $(k\varepsilon, k\delta)$-differential privacy.
\end{lemma}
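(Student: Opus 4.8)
The plan is to prove this by induction on $k$, reducing everything to the case of composing two mechanisms. The base case $k = 1$ is trivial. For the inductive step I would group the first $k-1$ mechanisms: by the induction hypothesis $(\mathcal{M}_1, \dots, \mathcal{M}_{k-1})$, viewed as one mechanism into the product space $\mathcal{R}_1 \times \cdots \times \mathcal{R}_{k-1}$, is $((k-1)\varepsilon, (k-1)\delta)$-DP. It then suffices to show that composing a $(\varepsilon_1, \delta_1)$-DP mechanism $\mathcal{M}$ with an (independently randomized) $(\varepsilon_2, \delta_2)$-DP mechanism $\mathcal{M}'$ yields an $(\varepsilon_1 + \varepsilon_2, \delta_1 + \delta_2)$-DP mechanism $(\mathcal{M}, \mathcal{M}')$; instantiating this with $(\varepsilon_1, \delta_1) = ((k-1)\varepsilon, (k-1)\delta)$ and $(\varepsilon_2, \delta_2) = (\varepsilon, \delta)$ closes the induction.

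For the two-mechanism step, I would first record the standard decomposition characterization of approximate indistinguishability: if the output law $P$ of $\mathcal{M}(G)$ and the output law $Q$ of $\mathcal{M}(G')$ satisfy $P(A) \le e^{\varepsilon_1} Q(A) + \delta_1$ for every event $A$, then $P$ splits as a sum of positive measures $P = \hat\mu + \rho$ with $\hat\mu$ dominated by $e^{\varepsilon_1} Q$ as measures and $\rho$ of total mass at most $\delta_1$. (One lets $\hat\mu$ be the part of $P$ whose Radon--Nikodym density with respect to $P + Q$ is at most $e^{\varepsilon_1}$ times that of $Q$; the leftover mass is exactly $P(B) - e^{\varepsilon_1} Q(B) \le \delta_1$, where $B$ is the set on which the density ratio exceeds $e^{\varepsilon_1}$.) In particular $\hat\mu$ has total mass at most $1$, which is the crucial point below.

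Now fix neighbors $G \sim G'$ and an output set $S$ in the product space, and condition on the first coordinate. Writing $S_{r_1} = \{r_2 : (r_1, r_2) \in S\}$ for the section and using independence of the internal coins,
\[
\Pr[(\mathcal{M}, \mathcal{M}')(G) \in S] = \int \Pr[\mathcal{M}'(G) \in S_{r_1}]\, dP(r_1) = \int \Pr[\mathcal{M}'(G) \in S_{r_1}]\, d\hat\mu(r_1) + \int \Pr[\mathcal{M}'(G) \in S_{r_1}]\, d\rho(r_1).
\]
The $\rho$-integral is at most the total mass of $\rho$, i.e.\ $\delta_1$. In the $\hat\mu$-integral, apply $(\varepsilon_2, \delta_2)$-DP of $\mathcal{M}'$ inside the integrand and then replace $\hat\mu$ by $Q$ using $\hat\mu \le e^{\varepsilon_1} Q$ and nonnegativity of the integrand; since $\hat\mu$ has mass at most $1$, the additive slack incurred is $\delta_2 \cdot \hat\mu(\text{all}) \le \delta_2$, not $e^{\varepsilon_1}\delta_2$. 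Collecting terms gives $\Pr[(\mathcal{M},\mathcal{M}')(G) \in S] \le e^{\varepsilon_1 + \varepsilon_2}\Pr[(\mathcal{M},\mathcal{M}')(G') \in S] + \delta_1 + \delta_2$, as required.

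The main obstacle is precisely this accounting of the $\delta$ terms: a naive "condition and bound term by term" argument routes the failure mass of $\mathcal{M}$ through the $e^{\varepsilon_2}$ factor, giving per-step slack $e^{\varepsilon}\delta + \delta$ and hence an additive term growing exponentially in $k$, which is far worse than the claimed $k\delta$. The decomposition lemma, together with the observation that the $e^{\varepsilon_1}$-dominated part $\hat\mu$ still has mass at most $1$, is exactly what keeps the failure probabilities additive. Everything else is routine; in the pure-DP special case $\delta = 0$ the whole argument collapses to the remark that the density ratio of the joint mechanism factors into $k$ per-mechanism ratios, each bounded by $e^{\varepsilon}$, so their product is bounded by $e^{k\varepsilon}$.
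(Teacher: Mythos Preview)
The paper does not prove this lemma; it is quoted in the preliminaries as a standard fact from the differential privacy literature (e.g., Dwork and Roth), so there is no in-paper proof to compare against. Your argument is the standard one and is correct: the measure decomposition $P = \hat\mu + \rho$ with $\hat\mu \le e^{\varepsilon_1} Q$ and $\|\rho\| \le \delta_1$ is exactly the right device to keep the $\delta$ terms additive rather than picking up an $e^{\varepsilon}$ factor at each step, and your observation that $\hat\mu$ has total mass at most $1$ is precisely what caps the contribution of the inner mechanism's $\delta_2$. One small wording quibble: in your parenthetical construction, ``the part of $P$ whose density is at most $e^{\varepsilon_1}$ times that of $Q$'' should be read as taking $\hat\mu$ with density $\min(p, e^{\varepsilon_1} q)$ (not $P$ restricted to the set where $p \le e^{\varepsilon_1} q$); with that reading your claim that the leftover mass equals $P(B) - e^{\varepsilon_1} Q(B) \le \delta_1$ is exactly right.
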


\begin{lemma}[Group Privacy]
\label{lem:gp}
If a mechanism $\mathcal{M}$ satisfies $(\varepsilon, \delta)$-differential privacy, then for any two graphs $G$ and $G'$ with $d(G, G') = \lambda$ (i.e., differing by $\lambda$ edges or nodes), and for any subset of outputs $\mathcal{S} \subseteq \mathcal{R}$, the following holds:
\[
\Pr[\mathcal{M}(D) \in \mathcal{S}] \leq e^{\lambda\varepsilon} \Pr[\mathcal{M}(D') \in \mathcal{S}] + \lambda\delta.
\]
The mechanism $\mathcal{M}$ satisfies $(\lambda\varepsilon, \lambda\delta)$-differential privacy for the group of $\lambda$ edges or nodes. 
\end{lemma}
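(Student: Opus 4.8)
The plan is to prove the claim by a standard \emph{chaining} argument: reduce the group guarantee to $\lambda$ applications of the single-neighbor $(\varepsilon,\delta)$-DP guarantee and then telescope. Since $d(G, G') = \lambda$, there exists a sequence of graphs $G = H_0, H_1, \dots, H_\lambda = G'$ in which each consecutive pair $H_i, H_{i+1}$ is neighboring: differing by exactly one edge in the edge-distance case, or by exactly one node together with its incident edges in the node-distance case. First I would justify the existence of this chain. In the edge case it is immediate, since a minimal transformation of $G$ into $G'$ uses $\lambda$ edge insertions/deletions whose intermediate states are precisely the $H_i$. In the node case one argues analogously, realizing the $\lambda$-node difference as $\lambda$ successive single-node additions or removals (each taken with all edges incident to that node in the current graph), under the standard implicit assumption that every intermediate graph remains in the family $\mathcal{G}$, so that the per-step DP guarantee applies.

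Next, fix any output set $\mathcal{S} \subseteq \mathcal{R}$. Applying $(\varepsilon, \delta)$-DP of $\mathcal{M}$ to each neighboring pair $(H_i, H_{i+1})$ gives
\[
\Pr[\mathcal{M}(H_i) \in \mathcal{S}] \;\le\; e^{\varepsilon}\, \Pr[\mathcal{M}(H_{i+1}) \in \mathcal{S}] + \delta
\qquad \text{for each } i \in \{0, 1, \dots, \lambda - 1\}.
\]
Substituting these inequalities into one another from $H_0 = G$ up to $H_\lambda = G'$ and unrolling the recursion yields
\[
\Pr[\mathcal{M}(G) \in \mathcal{S}] \;\le\; e^{\lambda\varepsilon}\, \Pr[\mathcal{M}(G') \in \mathcal{S}] + \delta \sum_{j=0}^{\lambda - 1} e^{j\varepsilon}.
\]
Bounding the geometric sum $\sum_{j=0}^{\lambda-1} e^{j\varepsilon}$ (it is at most $\lambda e^{(\lambda-1)\varepsilon}$, and is conventionally reported simply as $\lambda$ since $\delta$ is taken negligibly small) recovers the additive term $\lambda\delta$; applying the same argument with the roles of $G$ and $G'$ exchanged gives the symmetric inequality, so $\mathcal{M}$ is $(\lambda\varepsilon, \lambda\delta)$-DP for a group of $\lambda$ edges or nodes.

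I expect the only delicate point to be the chain-construction step, and in particular the node-distance case: one must verify that the node-distance-$\lambda$ relation genuinely decomposes into $\lambda$ single-node neighbor steps and that each intermediate graph is a legitimate element of $\mathcal{G}$, so that each per-step invocation of $(\varepsilon,\delta)$-DP is valid. The telescoping is routine arithmetic, but I would track the accumulation of the $\delta$ terms through the recursion explicitly rather than merely asserting the final bound.
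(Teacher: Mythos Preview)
The paper does not supply its own proof of this lemma: it is stated in the preliminaries as a standard fact (with the citation to \cite{dwork2014algorithmic}) and is used as a black box throughout. Your chaining argument is exactly the standard textbook derivation, so there is nothing to compare against on the paper's side.

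One remark on the additive term: your unrolled bound is the honest one, namely $\delta\sum_{j=0}^{\lambda-1} e^{j\varepsilon}$, and you are right that this geometric sum is not literally $\le \lambda$ unless $\varepsilon=0$. The clean $\lambda\delta$ in the lemma statement is the usual loose convention; if you want the inequality to hold as written, you either absorb the $e^{(\lambda-1)\varepsilon}$ factor into the $\delta$ slack (which is how the paper implicitly treats it, since $\delta$ is negligible and only constant-factor blowups of $\tau^*$ ever multiply it) or state the tighter $\bigl(\lambda\varepsilon,\ \delta\,\tfrac{e^{\lambda\varepsilon}-1}{e^{\varepsilon}-1}\bigr)$ version. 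Your proposal already flags this, which is the correct instinct.
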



\subsection{Common DP Mechanisms}
\label{subsec:prelim-mech}

One of the most commonly used DP mechanisms is the \textit{Laplace mechanism}. 
Here, given a query \( Q: \mathcal{G} \to \mathcal{R} \), its \textit{global sensitivity} is $\text{GS}_Q = \max_{d(G, G') = 1} \|Q(G) - Q(G')\|_1$.
For edge-DP and node-DP, we define \( d(G, G') \) as \( d_{\text{edge}} \) and \( d_{\text{node}} \), respectively.




\begin{definition}[Laplace Mechanism~\cite{dwork2006calibrating}]
\label{def:laplace}
Let \( Q: \mathcal{G} \to \mathcal{R} \) be a query function with global sensitivity \( \text{GS}_Q \). 
The mechanism 
\[
\mathcal{M}_Q(G) = Q(G) + \mathrm{Lap}(\frac{\text{GS}_Q}{\varepsilon})
\]
preserves $\varepsilon$-DP, where $\mathrm{Lap}({\text{GS}_Q}/{\varepsilon})$ is a random variable drawn from the Laplace distribution with scale of ${\text{GS}_Q}/{\varepsilon}$.
\end{definition}

Besides the Laplace mechanism, another widely used DP mechanism is the \textit{sparse vector technique} (SVT).
Given a (potentially infinite) sequence of queries \( Q_\ell: \mathcal{G} \to \mathcal{R} \) (\( \ell = 1, 2, \dots \)) with \( GS_{Q_\ell} = 1 \), SVT identifies the index of the first query whose result exceeds a threshold $\tilde{T}$.
The detailed procedure is shown in Algorithm~\ref{alg:svt}. 

\begin{algorithm}[htbp]
\caption{SVT}
\label{alg:svt}
\SetAlgoLined
\DontPrintSemicolon
\SetNoFillComment 
\KwIn{Graph $G$, threshold $T$, a sequence of queries $\{Q_\ell(G)\}$, privacy budget $\varepsilon$}
\KwOut{The first index $\ell$ where the query result exceeds $\tilde{T}$}

\tcc{Compute the noisy threshold}
$\tilde{T} \gets T + \text{Lap}(2 / \varepsilon)$\

\For{$\ell = 1, 2, \dots$}{
   $\tilde{Q}_\ell(G) \gets Q_\ell(G) + \text{Lap}(2 c / \varepsilon)$\;
    \If{$\tilde{Q}_\ell(G) > \tilde{T}$}{
        \Return $\ell$\; 
    }
}
\end{algorithm}

In the original version of SVT~\cite{dwork2009complexity}, the parameter $c=2$, meaning that the noise added to each query result is twice the scale of the noise added to the threshold. 
Very recently, \cite{liu2024unleash} points out that in specific cases, the noise added to the query result can be reduced by half. 
To formalize this, we first introduce the definition of sensitivity-monotonicity.
\begin{definition}
A function $f(G)$ is \textit{sensitivity-monotonic} if:
Given any two neighboring graphs $G\sim G'$, if $G \subseteq G'$, then $f(G) \geq f(G')$. 
\end{definition}

Then, the following lemma holds for SVT:


\begin{lemma}[\cite{liu2024unleash}]
\label{lem:svt}
If the sequence of queries $\{Q_{\ell}\}$ used in SVT are sensitivity-monotonic, given any $\varepsilon>0$, by setting $c=1$, SVT preserves $\varepsilon$-DP.
Moreover, given an error failure probability \(\beta\), if there exists some \(Q_k\) such that  
$
Q_k(G) \geq T + 4 \ln(2 / \beta)/\varepsilon,
$
then with probability at least \(1 - \beta\), SVT returns an index $\ell$ such that
\[
\ell \leq k \quad \text{and} \quad Q_\ell(G) \geq T - \frac{4}{\varepsilon}\ln\frac{2k}{\beta}.
\]
\end{lemma}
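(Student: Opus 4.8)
The plan is to establish the two parts of Lemma~\ref{lem:svt} separately: the privacy claim with $c=1$ under sensitivity-monotonicity, and the accuracy (utility) guarantee.

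\textbf{Privacy.} The standard SVT privacy proof (for $c=2$) couples the randomness in the noisy threshold $\tilde T$ and in the noisy query answers $\tilde Q_\ell$, then argues that for neighboring $G \sim G'$ the ``halting pattern'' (all queries before index $\ell$ stay below threshold, query $\ell$ crosses it) has comparable probability. The key observation I would exploit is that sensitivity-monotonicity removes the symmetric worst case: when $G \subseteq G'$, every $Q_\ell(G) \ge Q_\ell(G')$, so moving from $G'$ to $G$ can only make earlier queries \emph{less} likely to have crossed and the triggering query \emph{more} likely to cross — changes that do not both cut against us simultaneously. Concretely, I would condition on the value $v$ of the threshold noise, write the probability of outputting index $\ell$ on $G$ as an integral over the threshold perturbation, and then shift this perturbation by the sensitivity of the threshold comparison. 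Because each $Q_\ell$ has $GS=1$, shifting $\tilde T$ by (at most) $1/\varepsilon$ worth of Laplace mass handles the ``earlier queries must stay below'' events for all $\ell' < \ell$ at once (this is the classic ``one shift covers the whole prefix'' trick), and a further shift of the single query noise $\tilde Q_\ell$ by $1$ handles the triggering event. In the non-monotone case these two shifts each cost $\varepsilon/2$ (hence $c=2$); under sensitivity-monotonicity, in the direction $G \subseteq G'$ the triggering event needs no shift at all (the inequality only gets easier), so the query noise can be scaled by $2c/\varepsilon = 2/\varepsilon$, i.e. $c=1$, and the full $\varepsilon$ budget goes to the threshold shift. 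I would cite~\cite{liu2024unleash} for the detailed case analysis but reproduce the coupling argument so the statement is self-contained.

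\textbf{Accuracy.} For the utility claim, suppose $Q_k(G) \ge T + \frac{4}{\varepsilon}\ln\frac{2}{\beta}$. Let $a_\ell = \mathrm{Lap}(2/\varepsilon)$ be the per-query noise (now scale $2/\varepsilon$ since $c=1$) and $b = \mathrm{Lap}(2/\varepsilon)$ the threshold noise, so $\tilde T = T + b$. Define the good event $\mathcal E$ that $|b| \le \frac{2}{\varepsilon}\ln\frac{2}{\beta}$ and $|a_\ell| \le \frac{2}{\varepsilon}\ln\frac{2k}{\beta}$ for all $\ell \le k$; a union bound over the Laplace tail bound $\Pr[|\mathrm{Lap}(s)| > t] = e^{-t/s}$ gives $\Pr[\mathcal E] \ge 1 - \beta$ (one term of mass $\beta/2$ from $b$, and $k$ terms of mass $\frac{\beta}{2k}$ from the $a_\ell$'s). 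On $\mathcal E$: at index $k$ we have $\tilde Q_k(G) = Q_k(G) + a_k \ge T + \frac{4}{\varepsilon}\ln\frac{2}{\beta} - \frac{2}{\varepsilon}\ln\frac{2k}{\beta} \ge T + b = \tilde T$ (using $\ln\frac{2}{\beta} \le \ln\frac{2k}{\beta}$ and $\frac{4}{\varepsilon}\ln\frac{2}{\beta} - \frac{2}{\varepsilon}\ln\frac{2k}{\beta} \ge \frac{2}{\varepsilon}\ln\frac{2}{\beta} \ge |b|$ — I'd double-check the exact constant split here but it is designed to go through), so SVT halts at or before $k$, giving $\ell \le k$. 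And whatever index $\ell$ it returns satisfies $\tilde Q_\ell(G) > \tilde T = T + b$, hence $Q_\ell(G) > T + b - a_\ell \ge T - \frac{2}{\varepsilon}\ln\frac{2}{\beta} - \frac{2}{\varepsilon}\ln\frac{2k}{\beta} \ge T - \frac{4}{\varepsilon}\ln\frac{2k}{\beta}$, which is the claimed bound.

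\textbf{Main obstacle.} The routine accuracy part is just Laplace tail bounds and bookkeeping of constants. The genuinely delicate step is the privacy argument: making the coupling/shift argument rigorous in the direction dictated by sensitivity-monotonicity — in particular, verifying that the ``one shift of $\tilde T$ covers the entire prefix of non-triggering queries'' reduction still works when the queries are an infinite sequence, and that in the monotone direction the triggering comparison truly needs zero budget. I would lean on Lemma~\ref{lem:gp}-style reasoning being unnecessary here and instead follow~\cite{liu2024unleash} closely, but the crux is checking that sensitivity-monotonicity is exactly the structural condition that lets the asymmetry be exploited, rather than merely assumed.
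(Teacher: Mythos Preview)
The paper does not prove this lemma: it is stated as a cited result from~\cite{liu2024unleash}, with no proof in the body or appendix, so there is no paper-side argument to compare against. Your privacy sketch is essentially the approach of that reference; the key insight---that sensitivity-monotonicity makes one of the two $\varepsilon/2$ shifts in the standard SVT coupling free, so the per-query noise can be halved to $c=1$---is exactly right.

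Your accuracy argument has a small but real gap in the constants, and it is precisely the step you flagged. The inequality you write,
\[
\frac{4}{\varepsilon}\ln\frac{2}{\beta} - \frac{2}{\varepsilon}\ln\frac{2k}{\beta} \;\ge\; \frac{2}{\varepsilon}\ln\frac{2}{\beta},
\]
is equivalent to $\ln\frac{2}{\beta} \ge \ln\frac{2k}{\beta}$ and fails for every $k\ge 2$. The problem is that you bound $|a_\ell|\le\frac{2}{\varepsilon}\ln\frac{2k}{\beta}$ uniformly for all $\ell\le k$ and then reuse that (too-loose) bound on $a_k$ for the halting step. The fix is to use one-sided Laplace tails with an asymmetric split: bound $a_k \ge -\frac{2}{\varepsilon}\ln\frac{2}{\beta}$ (failure $\beta/4$) separately from $a_\ell \le \frac{2}{\varepsilon}\ln\frac{2k}{\beta}$ for all $\ell\le k$ (total failure $\beta/4$), together with $|b|\le\frac{2}{\varepsilon}\ln\frac{2}{\beta}$ (failure $\beta/2$). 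On this event, $\tilde Q_k \ge T + \frac{2}{\varepsilon}\ln\frac{2}{\beta} \ge \tilde T$, so SVT halts by index $k$; and for the returned $\ell$, $Q_\ell > T + b - a_\ell \ge T - \frac{2}{\varepsilon}\ln\frac{2}{\beta} - \frac{2}{\varepsilon}\ln\frac{2k}{\beta} \ge T - \frac{4}{\varepsilon}\ln\frac{2k}{\beta}$. The union bound totals exactly $\beta$, so both conclusions go through---but not via the symmetric two-sided bound you wrote down.
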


\subsection{Optimality in Error of Node-DP Mechanisms}
\label{subsec:prelim-optimality}

It is well known that the Laplace mechanism achieves \textit{worst-case optimal} error~\cite{dong2022r2t}, meaning it attains the optimal error when considering only the worst-case scenario. 
However, in graph data analytics, particularly under node-DP, worst-case optimal error is meaningless unless a strict predefined upper bound on each node's contribution to the analytical result, i.e., $\mathrm{GS}_Q$ is provided. 
Take edge counting under node-DP as an example, where $\mathrm{GS}_Q=\widehat{N}$. 
Recall that $\widehat{N}$ is a predefined upper bound for the number of nodes.
It should be set conservatively large enough to accommodate all possible input graphs.

To address this issue, \textit{down neighborhood optimality} has been introduced to quantify the error for a node-DP mechanism~\cite{dong2022r2t}. 

\begin{definition}[Down Neighborhood Optimality]
\label{def:no}
Given query $Q$,
and let \(\mathbb{M}\) denote the set of all \((\varepsilon, \delta)\)-node-DP mechanisms, 
a node-DP mechanism \( \mathcal{M}_Q \in \mathbb{M} \) is \((\rho, c)\)-down neighborhood optimal if, for any graph \( G \),
\[
\Pr\left[|\mathcal{M}_Q(G) - Q(G)| \leq c \cdot \mathcal{L}(G, \rho)\right] \geq \frac{2}{3},
\]
where \( c \) is the optimality ratio, and 
\begin{align*}
\mathcal{L}(G, \rho) := &\min_{\mathcal{M}' \in \mathbb{M}} \max_{\substack{G' \subseteq G},d_{\text{node}}(G, G') \leq \rho} 
\left\{ \xi : \Pr[|M'(G') - Q(G')| \leq \xi] \geq \frac{2}{3} \right\}.
\end{align*}

\end{definition}

Roughly speaking, for any graph $G$, we consider its $\rho$-down neighborhood, i.e., $G'\subseteq G$, $d_{\mathrm{node}}(G,G')\leq \rho$. A mechanism $\mathcal{M}_Q$ is said to be $\rho$-down neighborhood optimal if for every graph $G$, $\mathcal{M}_Q$ performs as well as the optimal mechanism $\mathcal{M}_Q'$ that is specifically designed for $G$ and its $\rho$-down neighborhood. It is clear that smaller $\rho$ and $c$ imply stronger optimality. 
When \(\rho = 0\), this reduces to \textit{instance optimality} with \(\mathcal{L}(G, 0) \equiv 0\), since for any $G$, we can always find a tailor-made mechanism  \( \mathcal{M}'(\cdot) = Q(G) \) which achieves $0$ error on \( G \).

$\mathcal{L}(G, \rho)$ can further be shown to be lower bounded by  \textit{$(\rho-1)$-downward query difference}:

\begin{lemma}[\cite{fang2022shifted}]
\label{lem:dno}
For any query $Q$, any $\rho \geq 1$, any $\varepsilon \leq \ln2$, and any graph $G$, we have
$$
\mathcal{L}(G, \rho) \geq \frac{1}{2\rho} 
\Delta Q^{(\rho-1)}(G),
$$
where 
\[\Delta Q^{(\rho-1)}(G)=\max_{\substack{G' \subseteq G,  d_{\text{node}}(G, G') \leq \rho}} \big|Q(G)-Q(G')\big|\]
is the \textit{$(\rho-1)$-downward query difference}.
\end{lemma}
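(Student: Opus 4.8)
The goal is to lower bound $\mathcal{L}(G,\rho)$, the optimal achievable error on the $\rho$-down neighborhood of $G$, by $\frac{1}{2\rho}\Delta Q^{(\rho-1)}(G)$. The plan is to use a standard two-point (or rather two-instance) indistinguishability argument combined with group privacy. First I would unpack the definition of $\Delta Q^{(\rho-1)}(G)$: by definition there exists a graph $G^\star \subseteq G$ with $d_{\text{node}}(G,G^\star) \leq \rho$ such that $|Q(G) - Q(G^\star)| = \Delta Q^{(\rho-1)}(G)$. Wait --- the superscript is $\rho-1$ but the constraint in the displayed formula is $d_{\text{node}}(G,G')\le\rho$; I would just take the definition at face value as written, so $G^\star$ is within node-distance $\rho$ of $G$, both $G$ and $G^\star$ lie in the $\rho$-down neighborhood of $G$ (trivially $G\subseteq G$ and $G^\star\subseteq G$), and hence both are among the instances over which the $\max$ in the definition of $\mathcal{L}(G,\rho)$ is taken. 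Let $\Delta := \Delta Q^{(\rho-1)}(G)$ for brevity.

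Next, fix any $(\varepsilon,\delta)$-node-DP mechanism $\mathcal{M}' \in \mathbb{M}$ and let $\xi$ be the value it attains in the inner $\max$, i.e. suppose for contradiction that $\xi < \frac{\Delta}{2\rho}$ while $\Pr[|\mathcal{M}'(G') - Q(G')| \le \xi] \ge \frac{2}{3}$ holds for every $G'$ in the $\rho$-down neighborhood --- in particular for $G' = G$ and $G' = G^\star$. Define the two output events $\mathcal{S}_G := \{ r : |r - Q(G)| \le \xi\}$ and $\mathcal{S}_{G^\star} := \{r : |r - Q(G^\star)| \le \xi\}$. Since $\xi < \Delta/(2\rho) \le \Delta/2 = |Q(G)-Q(G^\star)|/2$, these two sets are disjoint, so $\Pr[\mathcal{M}'(G) \in \mathcal{S}_{G^\star}] \le 1 - \Pr[\mathcal{M}'(G)\in\mathcal{S}_G] \le \frac{1}{3}$. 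On the other hand, $G$ and $G^\star$ differ by at most $\rho$ nodes, so by group privacy (Lemma~\ref{lem:gp}) $\mathcal{M}'$ is $(\rho\varepsilon, \rho\delta)$-indistinguishable between them, giving $\Pr[\mathcal{M}'(G^\star)\in\mathcal{S}_{G^\star}] \le e^{\rho\varepsilon}\Pr[\mathcal{M}'(G)\in\mathcal{S}_{G^\star}] + \rho\delta$. Chaining these, $\frac{2}{3} \le \Pr[\mathcal{M}'(G^\star)\in\mathcal{S}_{G^\star}] \le \frac{1}{3}e^{\rho\varepsilon} + \rho\delta$, which should yield a contradiction for the stated parameter regime (here is where $\varepsilon \le \ln 2$ and the implicit assumption $\delta$ small --- presumably $\rho\delta$ negligible, or absorbed --- get used; $e^{\rho\varepsilon}$ is not bounded by $2$ when $\rho>1$, so the argument as I sketched it is too crude).

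The main obstacle is precisely that last inequality: a naive application of group privacy amplifies $\varepsilon$ to $\rho\varepsilon$, and $\frac{1}{3}e^{\rho\varepsilon}$ is not less than $\frac{2}{3}$ once $\rho \ge 2$. The fix --- and the technically delicate part --- is to not blow up the privacy loss by the full factor $\rho$ in one shot. Instead I would walk along a path $G = G_0, G_1, \dots, G_{m} = G^\star$ of node-neighbors with $m \le \rho$, all lying inside the $\rho$-down neighborhood of $G$ (each $G_i \subseteq G$; this is where deleting nodes one at a time keeps us ``downward''), and use a pigeonhole/telescoping argument: along this path the value $Q$ moves by a total of $\Delta$, so some consecutive pair $G_i \sim G_{i+1}$ has $|Q(G_i) - Q(G_{i+1})| \ge \Delta/\rho$ --- no wait, that gives $\Delta/m \ge \Delta/\rho$, and then applying ordinary $(\varepsilon,\delta)$-DP (not group privacy) to that single neighboring pair with the two disjoint balls of radius $\xi < \frac{\Delta}{2\rho} \le \frac{1}{2}|Q(G_i)-Q(G_{i+1})|$ gives $\frac{2}{3} \le \frac{1}{3}e^{\varepsilon} + \delta \le \frac{1}{3}e^{\ln 2} + \delta = \frac{2}{3} + \delta$, which is the genuine contradiction (using $\varepsilon\le\ln 2$ and $\delta$ strictly positive but the inequality being strict from $\xi$ strict, or a standard $\delta$-slack argument). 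This requires that such a monotone path exists --- i.e. that the worst-case $G^\star \subseteq G$ realizing $\Delta Q^{(\rho-1)}(G)$ can be reached from $G$ by at most $\rho$ single-node deletions, with all intermediate graphs being subgraphs of $G$ --- which follows because $d_{\text{node}}(G, G^\star)\le \rho$ and $G^\star\subseteq G$ means $G^\star$ is obtained from $G$ by deleting at most $\rho$ nodes, which we can do one at a time. Since this holds for every $\mathcal{M}' \in \mathbb{M}$, taking the infimum gives $\mathcal{L}(G,\rho) \ge \frac{\Delta}{2\rho} = \frac{1}{2\rho}\Delta Q^{(\rho-1)}(G)$, as claimed. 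I would double-check the boundary case $\rho = 1$ separately (then the path has one edge, group privacy is not needed at all, and the bound is just the two-point argument), and handle the $\delta$ bookkeeping carefully since the lemma statement suppresses it.
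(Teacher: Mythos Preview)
The paper does not prove this lemma at all; it is quoted verbatim as a result from \cite{fang2022shifted}, so there is no in-paper proof to compare against. That said, your chain-and-pigeonhole argument --- walk from $G$ down to the maximizer $G^\star$ by $m\le\rho$ single-node deletions (all intermediate graphs lying in the $\rho$-down neighborhood), use the triangle inequality to find a neighboring pair $G_i\sim G_{i+1}$ with $|Q(G_i)-Q(G_{i+1})|\ge\Delta/\rho$, and then run the standard two-point packing argument on that single pair --- is exactly the argument used in \cite{fang2022shifted} for this bound, and it is the right proof.

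The one soft spot you already flagged is genuine and worth stating precisely rather than hand-waving: with $\varepsilon=\ln 2$ the final chain gives $\tfrac{2}{3}\le \tfrac{1}{3}e^{\varepsilon}+\delta=\tfrac{2}{3}+\delta$, which is \emph{not} a contradiction for any $\delta\ge 0$. The original result in \cite{fang2022shifted} is stated for pure $\varepsilon$-DP (so $\delta=0$) and with the $2/3$ threshold; in that regime one needs either $\varepsilon<\ln 2$ strict, or to observe that the minimum in the definition of $\mathcal{L}$ is actually an infimum and the derived inequality $\tfrac{2}{3}\le\tfrac{2}{3}$ only shows the infimum is not attained strictly below $\Delta/(2\rho)$ --- which is still enough for the non-strict conclusion $\mathcal{L}(G,\rho)\ge\Delta/(2\rho)$. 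The present paper's restatement over $(\varepsilon,\delta)$-DP mechanisms is slightly loose in not tracking how $\delta$ enters; your instinct to flag the $\delta$ bookkeeping is correct, but it is a wrinkle in the statement as quoted here rather than a gap in your argument.
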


Therefore, if a node-DP mechanism achieves an error of $c\cdot \Delta Q^{(\rho-1)}(G)$, then it is $(\rho,2c\rho)$-down neighborhood optimal. 
Taking edge counting under node-DP as example, $\Delta Q^{(0)}(G) = \mathrm{deg}(G)$, which is much smaller than $\widehat{N}$ for most graphs.
By convention, a mechanism is said to be optimal if it is $(\rho,2c\rho)$-down neighborhood optimal with $\rho = \mathrm{polylog}$ and $c$ is a constant.

\section{Methodology}
\label{sec:method}

In this section, we present our general framework, N2E, for reducing node-DP graph analytical tasks to edge-DP ones. 
We begin in Section~\ref{subsec:method-strawman} with a strawman solution based on group privacy.
Such an idea suffers from utility issues due to its dependency on the node count upper bound $\widehat{N}$. 
However, it offers valuable insights for the design of our mechanism.
To address these issues, we propose to use a clipping mechanism to bound individual node contributions based on a predefined clipping threshold.
In Section~\ref{subsec:method-clipping}, we introduce a distance-preserving clipping algorithm, followed by a node-DP maximum degree approximation algorithm in Section~\ref{subsec:method-deg_approx} to determine the clipping threshold. Finally, in Section~\ref{subsec:method-combine}, we combine these steps into a general framework that achieves instance-level error.

\subsection{Strawman and Roadmap}  
\label{subsec:method-strawman}

A natural approach to reducing node-DP tasks to edge-DP is to leverage group privacy. Given a graph analytical task $Q$, an $(\varepsilon, \delta)$-edge-DP mechanism $\mathcal{M}_Q^{\mathrm{edge}}(G, \varepsilon, \delta)$, an input graph $G$, and a node number upper bound $\widehat{N}$ for $G$, group privacy (Lemma~\ref{lem:gp}) guarantees that $\mathcal{M}_Q^{\mathrm{edge}}(G, \varepsilon/{\widehat{N}}, \delta/\widehat{N})$ satisfies $(\varepsilon, \delta)$-node-DP.
However, this approach breaks the utility.
Since most edge-DP mechanisms exhibit errors growing linearly with $1/\varepsilon$, dividing the privacy budget by $\widehat{N}$ amplifies the error by a factor of $\widehat{N}$. 
As shown in the introduction, for edge counting, this approach yields an error of $O({\widehat{N}}/{\varepsilon})$.
In contrast, the state-of-the-art node-DP mechanism for edge counting achieves an error of only $\tilde{O}(\deg(G)/\varepsilon)$.


A common approach to address this issue is to employ the clipping mechanism.
Here, we first clip $G$ with a degree upper bound $\tau$, i.e., all nodes after the clipping should have a degree at most $\tau$.
Besides, we require that after clipping, any two node-DP neighboring graphs should have an edge distance bounded by $\tilde{O}(\tau)$.
Then, invoking the edge-DP mechanism on the clipped graph with privacy budgets divided by the edge distance we derived, we obtain a node-DP protocol. 
When $\tau$ approaches $\deg(G)$, we eventually reduce the error dependency from $\widehat{N}$ to $\deg(G)$.

Following the roadmap, we identify two main challenges, that have not been addressed under node-DP:
\begin{itemize}
    \item[1.] Design a distance-preserving clipping mechanism: 
    Given any graph $G$ and a degree upper bound $\tau$, after the clipping, two node-DP neighboring graphs have an edge distance bounded by $g(G, \tau)$. When $\tau$ is close to $\mathrm{deg}(G)$, $g(G, \tau)$ should be well bounded by $\tilde{O}(\tau)$.
    \item[2.] Develop a node-DP mechanism to approximate the maximum degree.
\end{itemize}

\subsection{Node-DP Distance-Preserving Clipping Mechanism}
\label{subsec:method-clipping}

To address the first challenge in our roadmap, we first present a distance-preserving clipping mechanism under node-DP. 
So far, there are several distance-preserving clipping mechanisms working under edge-DP, but they can easily be shown to lose the distance-preserving property under node-DP.
The first example is a naive clipping algorithm that removes all edges from nodes with degree~$> \tau$. 
Under edge-DP, this approach ensures that the edge distance between neighboring graphs after clipping is at most \( 2\tau \). 
However, under node-DP, such an upper bound does not hold.
Suppose $G$ is a graph where all $N$ nodes have degree exactly $\tau$, and its neighboring graph $G'$ is constructed by adding a new node connected to all $N$ nodes in $G$. 
After clipping, $G$ remains unchanged, while $G'$ becomes an empty graph (since all nodes in $G'$ have degree $\tau + 1 > \tau$). 
Consequently, the edge distance between clipped $G$ and $G'$ is ${N \cdot \tau}/{2}$, which violates the distance-preserving requirement.

\begin{algorithm}[htbp]
\caption{Distance-Preserving Graph Clipping}
\label{alg:clip_graph}
\SetAlgoLined
\LinesNumbered
\DontPrintSemicolon
\SetNoFillComment 
\KwIn{Graph $G = (V, E)$, degree upper bound $\tau$}
\KwOut{Clipped graph $\mathrm{Clip}(G, \tau)$}

\tcc{Assume a natural order of edges in $E$}
$\Lambda = \langle e_1, e_2, \dots, e_n \rangle$\;

$E' \gets \emptyset$\;

\For{each node $v \in V$}{
    $E_v^\tau \gets \{ e \in E(v) \mid  \text{rank}_\Lambda(e \mid E(v)) \leq \tau \}$\;
}

\For{each edge $e = (u, v) \in E$}{
    \If{$e \in E_u^\tau$ \textbf{and} $e \in E_v^\tau$}{
        $E' \gets E' \cup \{e\}$\;
    }
}

\Return $\mathrm{Clip}(G, \tau) = (V, E')$
\end{algorithm}

Another commonly used edge-DP clipping mechanism is shown in Algorithm~\ref{alg:clip_graph}. 
\blue{
Here, we first establish a natural ordering of edges. 
In practice, nodes in real-world graphs (e.g., social networks) typically have unique identifiers (e.g., user IDs), which can be used to define a natural node ordering. 
Edges are then ordered lexicographically by representing each edge with the lower-ordered node first. 
This edge ordering remains consistent between any pair of neighboring graphs and is also adopted in prior work such as~\cite{day2016publishing}.
Based on this ordering, the incident edges of each node form an ordered list, and the first $\tau$ edges refer to the top $\tau$ incident edges in that list.
We retain an edge only if it appears among the first $\tau$ edges of both its endpoints.}
Under edge-DP, this mechanism ensures that the edge distance is bounded by $3$, as each edge, aside from itself, can only affect the existence of at most one edge for each of its endpoints.
However, this distance-preserving property does not hold under node-DP.
Consider the pair of neighboring graphs shown in Figure~\ref{fig:newclip}.
Graph $G$ is a cycle over $N$ nodes $v_1$ to $v_N$, 
while its neighboring graph $G'$ adds a new node $v_0$ connected to all $N$ nodes in $G$.  
Nodes are ordered by their subscripts, i.e., $v_0 \prec v_1 \prec \dots \prec v_N$, and edges follow the lexicographic order described above.
When clipping with $\tau = 2$, the retained edges for both graphs are highlighted in blue in the figure.
Graph $G$ retains all edges, while $G'$ retains only three: $(v_0, v_1)$, $(v_0, v_2)$, and $(v_1, v_2)$. 
This results in a large edge distance of $O(N)$ between $\text{Clip}(G, 2)$ and $\text{Clip}(G', 2)$.

\begin{figure}[htbp]
  \centering
  \includegraphics[width=0.58\linewidth]{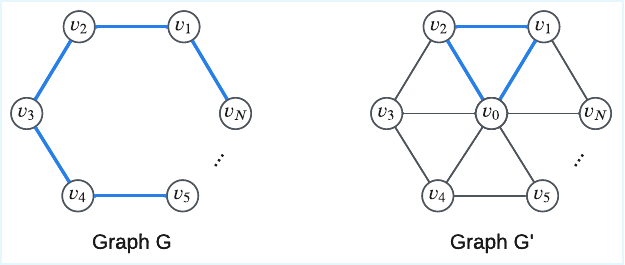}
  \caption{\blue{An example of neighboring graphs where the distance-preserving property does not hold under node-DP.}}
  \label{fig:newclip}
\end{figure}

Even though Algorithm~\ref{alg:clip_graph} is not distance-preserving for arbitrary values of $\tau$, an observation is that, given a graph $G$,
when only a small number of nodes have degree at least $\tau$,
the edge distance between $G$ and any its node-DP neighboring graph $G'$ remains well bounded after the clipping.
More precisely, if $\tau$ is the $k$-th largest degree of graph $G$, then the edge distance is bounded by $\tau + k$.
The high-level idea is that we classify all nodes in $G$ into two classes: saturated ones, with degree~$\geq \tau$, and unsaturated ones.
The number of saturated nodes is $k$.
Suppose $G$'s neighboring graph $G'$ has an additional node $u$.  
 When we do clipping, $u$ retains at most $\tau$ edges, introducing at most $\tau$ additional edges in $G'$.
Furthermore, each edge from $u$ to a saturated node can replace at most one edge in $G$, resulting in at most $k$ fewer edges in $G'$.

For example, let us consider two node-DP neighboring graphs $G$ and $G'$ as shown in Figure~\ref{fig:clip}. 
Graph $G$ consists of $1000$ $3$-stars and $3$ $4$-stars, and $G'$ differs from $G$ by adding a node $u$ that connects to the center of every star in $G$.
The saturated nodes of $G$ are highlighted in red in the figure.
Applying Algorithm~\ref{alg:clip_graph} to clip both graphs with $\tau = 4$, 
we have $k = N_{\tau}(G) = 3$, since only $3$ nodes in $G$ are saturated with $\geq 4$ edges.
After clipping, the node $u$ can retain up to $4$ edges not present in $G$. 
Moreover, each $4$-star center differs by at most $1$ edge from its original $4$ edges between $G$ and $G'$, introducing an edge distance up to $k = 3$.
As a result, the edge distance between $G$ and $G'$ after clipping is bounded by $\tau + k = 7$.

\begin{figure}[htbp]
    \centering
    \includegraphics[width=0.65\linewidth]{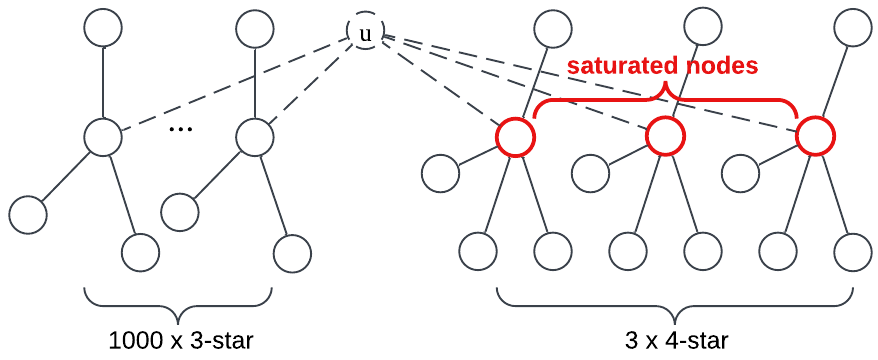} 
    \caption{Demonstration of the distance-preserving property for our node-DP clipping mechanism.}
    \label{fig:clip}
\end{figure}


The formal statement is as below.

\begin{lemma}
\label{lem:clip_edge_dis}
Given any two neighboring graphs $G$ and $G'$ such that $G \sim_{\text{\emph{node}}} G'$, 
let $\tau = \deg^k(G)$.
The edge distance between $\text{\emph{Clip}}(G, \tau)$ and $\text{\emph{Clip}}(G', \tau)$ is bounded by $\tau + k$, where $\text{\emph{Clip}}(G, \tau)$ and $\text{\emph{Clip}}(G', \tau)$ are clipped graphs with Algorithm~\ref{alg:clip_graph}.
\end{lemma}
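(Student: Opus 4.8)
The plan is to fix a pair of node-neighboring graphs and, without loss of generality, write $G' = (V \cup \{u\}, E \cup E(u))$, so that $G$ is obtained from $G'$ by deleting the single node $u$ together with its incident edges; the case where $G$ carries the extra node is symmetric and I return to it at the end. Set $\tau = \deg^k(G)$ and let $S = \{v \in V : \deg_G(v) \ge \tau\}$ be the set of saturated nodes, so $|S| = N_\tau(G) = k$. Since edge distance counts only edge insertions and deletions, $d_{\text{edge}}(\mathrm{Clip}(G,\tau), \mathrm{Clip}(G',\tau))$ equals the size of the symmetric difference of their edge sets, and I will bound this by $\tau + k$ by separately controlling (i) retained edges incident to $u$ and (ii) retained edges lying inside $V$.

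The workhorse is a monotonicity observation about Algorithm~\ref{alg:clip_graph}: passing from $G$ to $G'$ only enlarges each incident-edge list, replacing $E(v)$ by $E(v)$ or $E(v) \cup \{(u,v)\}$; hence for every fixed edge $e \in E(v)$ the value $\mathrm{rank}_\Lambda(e \mid E(v))$ can only stay the same or increase, and it increases by exactly $1$ precisely when $v$ is adjacent to $u$ and $(u,v)$ precedes $e$ in $\Lambda$, and otherwise does not change. Two consequences follow. First, any edge $e \subseteq V$ retained in $\mathrm{Clip}(G',\tau)$ is already retained in $\mathrm{Clip}(G,\tau)$: its two endpoints satisfy $\mathrm{rank}_\Lambda(e \mid E(v)) \le \mathrm{rank}_\Lambda(e \mid E_{G'}(v)) \le \tau$, so $e \in E_v^\tau$ in $G$ for both endpoints. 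Thus clipping creates no new interior edges, and case (ii) reduces to bounding the edges retained in $\mathrm{Clip}(G,\tau)$ but dropped in $\mathrm{Clip}(G',\tau)$. Second, such a dropped edge $e = (v,w)$ must have left $E_v^\tau$ or $E_w^\tau$; since its rank rose by at most $1$ and crossed the threshold $\tau$, at some endpoint $v$ we must have $\mathrm{rank}_\Lambda(e \mid E(v)) = \tau$, which forces $\deg_G(v) \ge \tau$, i.e. $v \in S$, and forces $v$ to be adjacent to $u$.

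It remains to count. For (i), the retained edges incident to $u$ form a subset of $E_u^\tau$, which has size at most $\tau$, and none of them lies in $\mathrm{Clip}(G,\tau)$ since $u \notin V$; this contributes at most $\tau$. For (ii), the previous paragraph shows each dropped interior edge is the rank-$\tau$ edge of some saturated node $v \in S$ adjacent to $u$, and each such $v$ has exactly one rank-$\tau$ edge, so the number of dropped interior edges is at most $|S| = k$. Summing, the symmetric difference has size at most $\tau + k$. In the symmetric case where $G$ carries the extra node $u$, deleting $u$ only shrinks incident-edge lists, so ranks only decrease; the at-most-$\tau$ edges incident to $u$ that are retained in $\mathrm{Clip}(G,\tau)$ account for one contribution, and the interior edges that become retained in $\mathrm{Clip}(G',\tau)$ only after $u$'s removal are each the rank-$(\tau+1)$ edge of a node of degree $> \tau$ that was adjacent to $u$, of which there are at most $k$, again giving $\tau + k$.

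I expect the main obstacle to be the careful rank bookkeeping in the second paragraph: verifying that every dropped interior edge can be charged to a unique rank-$\tau$ slot of a saturated node (the harmless overcount being an edge that is the rank-$\tau$ edge of both endpoints), and being precise that $k = N_\tau(G)$ is the count that makes $\tau + k$ hold — a smaller count of strictly-super-$\tau$-degree nodes would fail, because inserting $u$ pushes degree-exactly-$\tau$ neighbors over the threshold. A secondary subtlety is pinning down that $d_{\text{edge}}$ between graphs with mismatched node sets is exactly the edge-set symmetric difference, so that the $u$-incident edges can simply be inserted or deleted one at a time.
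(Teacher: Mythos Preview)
Your proof is correct and follows essentially the same approach as the paper: split the symmetric difference into edges incident to the differing node $u$ (at most $\tau$) and interior edges whose retention status changes (at most $k$, charged to saturated nodes), using the rank-monotonicity of the clipping rule. Your treatment is in fact slightly more thorough than the paper's, since you explicitly handle the case $G' \subseteq G$ that the paper dismisses with a ``without loss of generality'' and you spell out the rank bookkeeping more carefully.
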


\begin{proof}
Without loss of generality, we assume $G \subseteq G'$ and let the differing node be $u$. 
Let $G = (V, E)$ and $G' = (V', E')$.
We use $\overline{G} = (\overline{V}, \overline{E})$ to denote $\text{Clip}(G, \tau)$,
and use $\overline{G'}= (\overline{V'}, \overline{E'})$
to denote $\text{Clip}(G', \tau)$. 
By definition of $\deg^k(G)$, $G$ has at most $k$ nodes with degree $\geq \tau$.

Consider all different edges between $\overline{G}$ and $\overline{G'}$:
\begin{itemize}
    \item[1.] \textbf{The number of edges $e$ that $e \in \overline{E'}$ but $e \notin \overline{E}$ is at most $\tau$.}

    If there exists an edge $e = (v_1, v_2) \in \overline{E'}$ and $v_1, v_2 \in V$, then $e$ is among the $\tau$ smallest-order edges for $v_1$ and $v_2$ in both $G'$ and $G$, which indicates that $e \in \overline{E}$.
    Therefore, such edge $e = (v_1, v_2)$ in this category has $v_1 = u$ or $v_2= u$. Since $u$ can keep at most $\tau$ edges after clipping, the number of this kind of edges is bounded by $\tau$. 
   
    \item[2.] \textbf{The number of edges $e$ that $e \in \overline{E}$ but $e \notin \overline{E'}$ is at most $k$.}

    For any such edge $e = (v_1, v_2) \in \overline{E}$, $e$ is among the $\tau$ smallest-order for both $v_1$ and $v_2$ in $G$. 
    The condition $e \notin \overline{E'}$ occurs only if $v_1$ or $v_2$ is a saturated node in $G$, and a new edge $e' = (u, v) \in E'\text{ where } v \in \{v_1, v_2\}$ replaces $e$ in $\overline{G'}$ as one of the $\tau$ smallest-order edges for that node.
    Since the number of saturated nodes in $G$ is bounded by $k$, the number of edges in this category is at most $k$. 
\end{itemize}

To conclude, the total edge difference between $\overline{G}$ and $\overline{G'}$ is bounded by $\tau + k$.
\end{proof}

Therefore, using $\tau$ as the degree upper bound in Algorithm~\ref{alg:clip_graph} to do the clipping can bound the edge distance between $G$ and its neighboring graph by
\begin{equation}
\label{e:n_contribution}
g(G, \tau) = \tau + N_\tau(G)
\end{equation}
where
$N_\tau(G)$ is how many nodes in $G$ with degree at least $\tau$.  
Then, we can further invoke any edge-DP mechanism on the clipped graph with privacy budgets divided by $g(G, \tau)$ to achieve node-DP.

However, this approach has two problems. 
First, the value of $g(G, \tau)$ depends on how well $\tau$ approximates $\mathrm{deg}(G)$. 
A $\tau$ that is too small or too large yields a large $g(G, \tau)$.
Second, $g(G, \tau)$ is instance-specific and can differ significantly between $G$ and its neighboring graph $G'$. 
For example, consider an empty graph $G$, and we insert one node connected to all nodes to obtain $G'$. Then, $g(G,1)=1$ while $g(G',1)=N+2$.
Therefore, directly using $g(G, \tau)$ to allocate the privacy budget would break the privacy~\cite{nissim2007smooth}.
This provides key insights for designing our maximum degree approximation mechanism. 
Specifically, the output $\tau^*$ should satisfy the following three properties:
\begin{itemize}
    \item[1.] $\tau^*$ is not too large, i.e., $\tau^* = \tilde{O}(\mathrm{deg}(G))$.
    \item[2.] $\tau^*$ is not too small so that $N_{\tau^*}(G)$ can be well bounded.
    \item[3.] $\tau^*$ derives a DP-compliant upper bound of $g(G, \tau^*)$ that can be used in group privacy. 
\end{itemize}


Next, we explore approaches to determine $\tau^*$ that satisfies all three properties while preserving node-DP.

\subsection{Maximum Degree Estimation}
\label{subsec:method-deg_approx}

Before discussing the node-DP solution to obtain a good $\tau^*$ that satisfies the three listed properties, let us start with developing a corresponding edge-DP solution. 

\subsubsection{Edge-DP Maximum Degree Estimation}
\label{subsubsec:method-deg-edgedp}
The most standard approach to realize edge-DP maximum degree estimation is to use the Laplace mechanism. Since any edge can influence the graph’s maximum degree by up to $1$, one can obtain an edge-DP maximum degree estimation by adding a noise of scale $1/\varepsilon$ to $\mathrm{deg}(G)$. 
However, the result estimation $\tau^*$ fails to satisfy our second target property. 
For example, assume graph $G$ has all $N$ nodes have the same degree. 
Using the Laplace mechanism, the output $\tau^*$ satisfies $\tau^* \leq \mathrm{deg}(G)$ under a probability of $1/2$. 
Therefore, under the same probability, $N_{\tau^*}(G) = N$, as all $N$ nodes have their degree higher than $\tau^*$. 

This motivates the design of a new edge-DP maximum degree estimation mechanism that outputs $\tau^*$ with the above $3$ properties.
Instead of using additive error to ensure privacy, we resort to another privacy-preserving algorithm, SVT, to produce the expected $\tau^*$.
We propose the following query function used in the SVT:

\begin{equation}
\label{e:clip_degree_num}
    Q_{\text{Del-Deg}}(G, \tau) = -\frac{1}{2}  
    \sum_{\substack{v \in V \\ \deg_G(v) \geq \tau}} (\deg_G(v) - \tau).
\end{equation}
Generally speaking, we sum the node degrees exceeding $\tau$ in $G$, 
and $Q_{\text{Del-Deg}}(G, \tau)$ denotes the opposite of half the degree sum.
$Q_{\text{Del-Deg}}(G, \tau)$ has three properties as below. 

\begin{lemma} 
\label{lem:clip_deg_num}
$Q_{\text{\emph{Del-Deg}}}(G, \tau)$ has the following properties:
\begin{itemize}
    \item[1.] \textbf{Monotonicity}: $Q_{\text{\emph{Del-Deg}}}(G, \tau)$ is monotonic and becomes $0$ when $\tau \geq \mathrm{deg}(G)$.
    \item[2.] \textbf{Bounded Sensitivity}: $Q_{\text{\emph{Del-Deg}}}(G, \tau)$ has $\text{\emph{GS}}_Q = 1$ under edge-DP.
    \item[3.] \textbf{Sensitivity-Monotonicity}: 
    For any two neighboring graphs $G \sim_{\text{\emph{edge}}} G'$ and any $\tau$,
    $Q_{\text{\emph{Del-Deg}}}(G, \tau) \geq Q_{\text{\emph{Del-Deg}}}(G', \tau)$ if $G \subseteq G'$.
\end{itemize} 
\end{lemma}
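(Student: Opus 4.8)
The plan is to verify each of the three listed properties of $Q_{\text{Del-Deg}}(G,\tau)$ directly from its definition in Equation~\eqref{e:clip_degree_num}, treating the three claims independently since they rely on different aspects of the definition.

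\textbf{Monotonicity.} First I would argue monotonicity in $\tau$. Writing $Q_{\text{Del-Deg}}(G,\tau) = -\tfrac12\sum_{v:\deg_G(v)\geq\tau}(\deg_G(v)-\tau)$, I would observe that as $\tau$ increases, two things happen: each surviving summand $(\deg_G(v)-\tau)$ shrinks, and some nodes drop out of the sum (those whose degree falls below $\tau$); a node drops out exactly when its contribution has decayed to $0$, so there is no discontinuous jump. Hence the (non-negative) sum $\sum_{v:\deg_G(v)\geq\tau}(\deg_G(v)-\tau)$ is non-increasing in $\tau$, which makes $Q_{\text{Del-Deg}}(G,\tau)$ non-decreasing in $\tau$. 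When $\tau\geq\deg(G)$, no node satisfies $\deg_G(v)\geq\tau$ except possibly with $\deg_G(v)=\tau$, in which case the summand is $0$; either way the sum is empty or zero, so $Q_{\text{Del-Deg}}(G,\tau)=0$. (I would note this is the sense of ``monotonicity'' SVT requires of its query sequence $Q_\ell(G):=Q_{\text{Del-Deg}}(G,\ell)$.)

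\textbf{Bounded sensitivity under edge-DP.} Next I would bound $\mathrm{GS}_Q$. Take neighboring $G\sim_{\text{edge}}G'$ differing by one edge $(a,b)$, say with $G\subseteq G'$, so $\deg_{G'}(a)=\deg_G(a)+1$, $\deg_{G'}(b)=\deg_G(b)+1$, and all other degrees unchanged. For a fixed $\tau$, the quantity $S(G):=\sum_{v:\deg_G(v)\geq\tau}(\deg_G(v)-\tau) = \sum_v \max(\deg_G(v)-\tau,\,0)$ changes only through the terms for $a$ and $b$, and each term $\max(\deg(v)-\tau,0)$ changes by at most $1$ when $\deg(v)$ changes by $1$ (the function $x\mapsto\max(x-\tau,0)$ is $1$-Lipschitz). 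Hence $|S(G)-S(G')|\leq 2$, and so $|Q_{\text{Del-Deg}}(G,\tau)-Q_{\text{Del-Deg}}(G',\tau)| = \tfrac12|S(G)-S(G')|\leq 1$. This holds for every $\tau$, giving $\mathrm{GS}_Q=1$ under edge-DP, which is exactly the condition required to feed $Q_{\text{Del-Deg}}$ into SVT (Algorithm~\ref{alg:svt}).

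\textbf{Sensitivity-monotonicity.} Finally, for the third property I would take $G\sim_{\text{edge}}G'$ with $G\subseteq G'$ and show $Q_{\text{Del-Deg}}(G,\tau)\geq Q_{\text{Del-Deg}}(G',\tau)$, i.e. $S(G)\leq S(G')$. Since $G$ is obtained by deleting the edge $(a,b)$ from $G'$, every node degree in $G$ is at most the corresponding degree in $G'$; as $x\mapsto\max(x-\tau,0)$ is non-decreasing, each summand of $S(G)$ is at most the corresponding summand of $S(G')$, so $S(G)\leq S(G')$ and the claimed inequality follows. I do not expect any real obstacle here --- the main thing to be careful about is the half-integrality/drop-out behavior in the monotonicity argument (making sure a node leaving the sum contributes exactly $0$ at the moment it leaves, so there's no jump) and keeping the direction of the inequality straight given the minus sign in the definition of $Q_{\text{Del-Deg}}$; the Lipschitz-and-monotone reformulation $S(G)=\sum_v\max(\deg_G(v)-\tau,0)$ handles all three parts cleanly.
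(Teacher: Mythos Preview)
Your proposal is correct and follows essentially the same approach as the paper's proof: both arguments verify each property directly from the definition by observing that an edge affects exactly two node degrees by $1$, and that the ``excess degree'' sum is monotone in both $\tau$ and in the degree sequence. Your reformulation $S(G)=\sum_v\max(\deg_G(v)-\tau,0)$ is a nice touch that makes the Lipschitz and monotonicity arguments slightly cleaner than the paper's phrasing, but the underlying reasoning is identical.
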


\begin{proof}
The proofs for all properties are provided below.

\begin{itemize}
    \item[1.] \textbf{Monotonicity}: The sum of node degrees exceeding $\tau$ decreases as $\tau$ increases. When $\tau \geq \mathrm{deg}(G)$, no node's degree exceeds $\tau$. Since $Q_{\text{Del-Deg}}(G, \tau)$ is defined as $-1/2$ times this sum, it increases with $\tau$ and becomes $0$ when $\tau \geq \mathrm{deg}(G)$.

    \item[2.] \textbf{Bounded Sensitivity}: Adding or removing an edge influences the degree of two nodes by $1$, thus altering the exceeding degree by at most $2$. Therefore, the global sensitivity of $Q_{\text{Del-Deg}}(G, \tau)$ is $1$.

    \item[3.] \textbf{Sensitivity Monotonicity}: If $G'$ differs from $G$ by one additional edge, then $\deg_{G'}(v) \geq \deg_G(v)$ for all $v \in V$. 
    Consequently, $Q_{\text{Del-Deg}}(G', \tau) \leq Q_{\text{Del-Deg}}(G, \tau)$. \qedhere
\end{itemize}
\end{proof}

Using $Q_{\text{Del-Deg}}(G, \tau)$ as the query function in SVT, we develop a new edge-DP maximum degree estimation mechanism as in Algorithm~\ref{alg:edge_dp_max_deg_esti}.
\orange{
The algorithm iteratively increases $\tau$ until the noisy sum of node degrees exceeding $\tau$ falls below a noisy threshold.}

\begin{algorithm}[htbp]
\caption{Edge-DP Maximum Degree Estimation}
\label{alg:edge_dp_max_deg_esti}
\SetAlgoLined
\LinesNumbered
\DontPrintSemicolon
\KwIn{Graph $G$, privacy budget $\varepsilon$, parameter $\beta$}
\KwOut{$\tau$}

$T \gets -4 \ln(2 / \beta)/\varepsilon$   

$\tilde{T} \gets T + \text{Lap}(2 / \varepsilon)$ 

\For{$\tau = 1, 2, 3, 4,\dots$}{

    $\tilde{Q}_{\tau}(G) \gets Q_{\text{Del-Deg}}(G, \tau) + \text{Lap}(2 / \varepsilon)$  

    \If{$\tilde{Q}_{\tau}(G) > \tilde{T}$}{
        \break
    }
}
 \Return $\tau$
\end{algorithm}

With Lemma~\ref{lem:clip_deg_num} and Lemma~\ref{lem:svt}, it can be shown that Algorithm~\ref{alg:edge_dp_max_deg_esti} satisfies $\varepsilon$-edge-DP. 
Furthermore, Algorithm~\ref{alg:edge_dp_max_deg_esti} runs in polynomial time, as each $Q_{\text{Del-Deg}}(G, \tau)$ can be solved in $O(N)$ time.  
For the utility, we have:
\begin{lemma}
\label{lem:del_deg_bound}
Given any $\varepsilon$ and $\beta$, for any graph $G$, Algorithm~\ref{alg:edge_dp_max_deg_esti} outputs a $\tau$ such that with probability at least $1 - \beta$, $\tau \leq \mathrm{deg}(G)$, and 
\begin{equation}
 \label{e:del_deg_bound}
    |Q_{\text{\emph{Del-Deg}}}(G, \tau)| \leq \frac{4}{\varepsilon}\ln\mathrm{deg}(G) + \frac{8}{\varepsilon}\ln\frac{2}{\beta}.
\end{equation}
\end{lemma}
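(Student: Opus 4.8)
The plan is to apply Lemma~\ref{lem:svt} directly to the SVT instance defined by Algorithm~\ref{alg:edge_dp_max_deg_esti}, since that algorithm is precisely SVT run with the query sequence $\{Q_{\text{Del-Deg}}(G,\tau)\}_{\tau=1,2,\dots}$, threshold $T = -4\ln(2/\beta)/\varepsilon$, and parameter $c=1$. The key preliminary observation is that $c=1$ is legitimate here: by Lemma~\ref{lem:clip_deg_num}(3), each $Q_{\text{Del-Deg}}(\cdot,\tau)$ is sensitivity-monotonic and by Lemma~\ref{lem:clip_deg_num}(2) it has edge-sensitivity $1$, so Lemma~\ref{lem:svt} applies verbatim. (Privacy itself is already claimed just above the lemma, so for the utility statement I only need the second half of Lemma~\ref{lem:svt}.)

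First I would identify the index $k$ that plays the role of the ``first query certain to exceed the threshold'' in Lemma~\ref{lem:svt}. The natural choice is $k = \deg(G)$: by Lemma~\ref{lem:clip_deg_num}(1), $Q_{\text{Del-Deg}}(G,\deg(G)) = 0$ (no node has degree strictly exceeding $\deg(G)$), hence
\[
Q_{\text{Del-Deg}}(G,\deg(G)) = 0 = T + 4\ln(2/\beta)/\varepsilon,
\]
which meets the hypothesis $Q_k(G) \ge T + 4\ln(2/\beta)/\varepsilon$ of Lemma~\ref{lem:svt} (with equality, which suffices — or one can take $k = \deg(G)+1$ if a strict inequality is wanted, noting $Q_{\text{Del-Deg}}$ stays $0$ for all larger $\tau$). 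Lemma~\ref{lem:svt} then guarantees that with probability at least $1-\beta$, the returned index $\tau$ satisfies $\tau \le k = \deg(G)$ and
\[
Q_{\text{Del-Deg}}(G,\tau) \;\ge\; T - \frac{4}{\varepsilon}\ln\frac{2k}{\beta} \;=\; -\frac{4}{\varepsilon}\ln\frac{2}{\beta} - \frac{4}{\varepsilon}\ln\frac{2\deg(G)}{\beta}.
\]
It remains to convert this one-sided bound into the two-sided bound on $|Q_{\text{Del-Deg}}(G,\tau)|$ in the statement: since $Q_{\text{Del-Deg}}(G,\tau) \le 0$ always (it is $-\tfrac12$ times a nonnegative sum), the displayed lower bound on $Q_{\text{Del-Deg}}(G,\tau)$ becomes an upper bound on its absolute value, namely $|Q_{\text{Del-Deg}}(G,\tau)| \le \frac{4}{\varepsilon}\ln\deg(G) + \frac{8}{\varepsilon}\ln\frac{2}{\beta}$, after splitting $\ln\frac{2\deg(G)}{\beta} = \ln\deg(G) + \ln\frac{2}{\beta}$ and collecting the two $\ln\frac{2}{\beta}$ terms. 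This matches Equation~\eqref{e:del_deg_bound} exactly.

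I do not anticipate a genuine obstacle here; the proof is essentially bookkeeping once the correspondence with Lemma~\ref{lem:svt} is set up. The one point deserving care is the choice of $k$ and whether the hypothesis of Lemma~\ref{lem:svt} is stated with a strict or non-strict inequality — if strict, I would use $k = \deg(G)+1$ and note via Lemma~\ref{lem:clip_deg_num}(1) that $Q_{\text{Del-Deg}}(G,\deg(G)+1) = 0$ as well, so the same bound holds with $\ln\frac{2(\deg(G)+1)}{\beta}$, which is still $O(\ln\deg(G))$ and in fact still dominated by the stated bound for $\deg(G)\ge 1$ after a trivial constant check (or one simply absorbs it, since $\deg(G)+1 \le 2\deg(G)$ contributes only an extra $\ln 2/\varepsilon$, comfortably inside the $\frac{8}{\varepsilon}\ln\frac{2}{\beta}$ slack). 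A secondary minor check is that the algorithm indeed terminates — which it does with probability $1$ since for $\tau > \deg(G)$ the noiseless query value is $0 > T$ for any fixed noise realization with the Laplace tails, and more robustly because Lemma~\ref{lem:svt} already certifies a valid return index on the high-probability event.
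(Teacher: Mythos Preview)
Your proposal is correct and matches the paper's own proof essentially line for line: both take $k=\deg(G)$, observe $Q_{\text{Del-Deg}}(G,\deg(G))=0=T+4\ln(2/\beta)/\varepsilon$, invoke Lemma~\ref{lem:svt} to get $\tau\le\deg(G)$ and the lower bound $Q_{\text{Del-Deg}}(G,\tau)\ge T-\frac{4}{\varepsilon}\ln\frac{2\deg(G)}{\beta}$, and then split the logarithm to reach~\eqref{e:del_deg_bound}. Your additional remarks on strict versus non-strict inequality and termination are sound but not needed for the argument as stated in the paper.
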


Due to space constraints, we defer this proof and some of the subsequent proofs to the appendix of the full version~\cite{full_version}.
(\ref{e:del_deg_bound}) further leads to an error of at most $\Delta Q^{(\rho-1)}(G)$ with $\rho = \tilde{O}(1/\varepsilon)$, which implies Algorithm~\ref{alg:edge_dp_max_deg_esti} is ($\tilde{O}(1/\varepsilon),\tilde{O}(1/\varepsilon)$)-down neighborhood optimal.
It can be further shown that setting $\tau^* = \tau + 1$ satisfies our three desired properties. Since our main focus is on node-DP, we do not elaborate on the details here.

\paragraph{Comparison with Laplace mechanism.}
In the worst case, the output of Algorithm~\ref{alg:edge_dp_max_deg_esti} has an additive error of $O(\log(\mathrm{deg}(G)/\beta)/\varepsilon)$.
In comparison, the Laplace mechanism achieves a tighter additive error of $O(\log(1/\beta)/\varepsilon)$, which generally yields better performance on typical networks with power-law degree distributions.
However, the key advantage of Algorithm~\ref{alg:edge_dp_max_deg_esti} lies in achieving low rank error rather than low additive error.
As we will discuss later, when integrated into our framework to upgrade any edge-DP solution to provide node-DP, the error is amplified by a factor of $\tilde{O}(\deg(G))$.
As a result, the Laplace mechanism incurs an additive error of $\tilde{O}(\deg(G)/\varepsilon)$, which breaks the utility.
In contrast, Algorithm~\ref{alg:edge_dp_max_deg_esti} achieves a comparable $\tilde{O}(\deg(G)/\varepsilon)$ rank error and still preserves strong utility guarantees.

\subsubsection{Node-DP Exponential-Time Maximum Degree Approximation}
\label{subsubsec:method-deg-nodedp_exp}

So far, we have an effective solution for maximum degree estimation that satisfies three desired properties under edge-DP. However, directly extending this approach to node-DP is problematic.
The main issue is that the property of bounded sensitivity for $Q_{\text{Del-Deg}}(G, \tau)$ does not hold under node-DP. 
Since two neighboring graphs differing by a node can have a degree sum difference of up to $2N$, the global sensitivity for this query becomes $N$ under node-DP.
Therefore, a query function satisfying the three desired properties under node-DP is required.

Here, we introduce:
\begin{equation}
 \label{e:clip_node_num}
    Q_{\text{Del-N}}(G, \tau) = -\min \left\{ |N_G - N_{G^*}| : G^* \subseteq G, \deg(G^*) \leq \tau \right\}, 
\end{equation}
which is the opposite of the minimum number of nodes to remove from $G$ to bound its degree with $\tau$.
We observe that $Q_{\text{Del-N}}(G, \tau)$ has three properties as below under node-DP, which is similar to $Q_{\text{Del-Deg}}(G, \tau)$ under edge-DP.

\begin{lemma}
\label{lem:clip_node_num}
   $Q_{\text{\emph{Del-N}}}(G, \tau)$ satisfies the following $3$ properties: 

\begin{itemize}
    \item[1.] \textbf{Monotonicity}: 
    $Q_{\text{\emph{Del-N}}}(G, \tau)$ is monotonic and becomes $0$ when $\tau \geq \mathrm{deg}(G)$.
    \item[2.] \textbf{Bounded Sensitivity}: 
    $Q_{\text{\emph{Del-N}}}(G, \tau)$ has $\text{\emph{GS}}_Q = 1$ under node-DP.
    \item[3.] \textbf{Sensitivity-Monotonicity}: 
    For any two neighboring graphs $G \sim_{\text{\emph{node}}} G'$ and any $\tau$, $Q_{\text{\emph{Del-N}}}(G, \tau) \geq Q_{\text{\emph{Del-N}}}(G', \tau)$ if $G \subseteq G'$.
\end{itemize}
\end{lemma}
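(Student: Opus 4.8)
The plan is to verify the three properties of $Q_{\text{Del-N}}(G,\tau)$ one at a time, mirroring the structure of the proof of Lemma~\ref{lem:clip_deg_num} but working with node-deletions instead of edge-level degree excess. Throughout, let $\mathrm{OPT}(G,\tau)$ denote the minimum number of nodes whose removal from $G$ yields a subgraph of maximum degree $\le \tau$, so that $Q_{\text{Del-N}}(G,\tau)=-\mathrm{OPT}(G,\tau)$.

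\textbf{Monotonicity.} First I would argue that $\mathrm{OPT}(G,\tau)$ is nonincreasing in $\tau$: any $G^*\subseteq G$ witnessing feasibility for threshold $\tau$ also has $\deg(G^*)\le\tau\le\tau+1$, so it is feasible for $\tau+1$, hence $\mathrm{OPT}(G,\tau+1)\le\mathrm{OPT}(G,\tau)$; negating, $Q_{\text{Del-N}}(G,\tau)$ is nondecreasing in $\tau$. When $\tau\ge\deg(G)$, the graph $G$ itself satisfies $\deg(G)\le\tau$, so $G^*=G$ is feasible with $|N_G-N_{G^*}|=0$, giving $\mathrm{OPT}(G,\tau)=0$ and thus $Q_{\text{Del-N}}(G,\tau)=0$.

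\textbf{Bounded sensitivity under node-DP.} Take $G\sim_{\text{node}}G'$ with, WLOG, $G\subseteq G'$ and $G'=G\cup\{u\}$ together with $u$'s incident edges. I would show $|\mathrm{OPT}(G,\tau)-\mathrm{OPT}(G',\tau)|\le 1$ by two inequalities. For $\mathrm{OPT}(G,\tau)\le\mathrm{OPT}(G',\tau)+1$: take an optimal solution $G'^*\subseteq G'$ for $G'$; then $G'^*\setminus\{u\}\subseteq G$ still has maximum degree $\le\tau$ (deleting a node cannot increase any degree), and it differs from $G$ in at most $\mathrm{OPT}(G',\tau)+1$ nodes (the $\le\mathrm{OPT}(G',\tau)$ nodes removed from $G'$, minus possibly $u$). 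For $\mathrm{OPT}(G',\tau)\le\mathrm{OPT}(G,\tau)+1$: take an optimal $G^*\subseteq G$ for $G$; then $G^*\subseteq G'$ as well (it is an induced subgraph of $G\subseteq G'$ not containing $u$), has max degree $\le\tau$, and differs from $G'$ in at most $\mathrm{OPT}(G,\tau)+1$ nodes. Hence $\mathrm{GS}_Q\le 1$ under node-DP.

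\textbf{Sensitivity-monotonicity.} With $G\sim_{\text{node}}G'$ and $G\subseteq G'$, I would reuse the first inequality above: any optimal $G'^*\subseteq G'$ for $G'$ yields $G'^*\setminus\{u\}\subseteq G$ which is feasible for $G$ at threshold $\tau$ and differs from $G$ in at most $\mathrm{OPT}(G',\tau)$ nodes --- indeed removing $u$ from $G'^*$ does \emph{not} cost an extra node here, because $u\in V(G')\setminus V(G)$, so the symmetric difference of $V(G'^*\setminus\{u\})$ and $V(G)$ is contained in $V(G')\setminus V(G'^*)$, which has size $\mathrm{OPT}(G',\tau)$. Therefore $\mathrm{OPT}(G,\tau)\le\mathrm{OPT}(G',\tau)$, i.e. $Q_{\text{Del-N}}(G,\tau)\ge Q_{\text{Del-N}}(G',\tau)$, as required.

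The main obstacle I anticipate is the bookkeeping in the sensitivity argument: one must be careful about whether the differing node $u$ does or does not lie in the candidate subgraph, and in particular avoid the off-by-one trap that would claim sensitivity $2$. The key observation that resolves this is that $u\notin V(G)$, so $u$ never contributes to the vertex symmetric difference with $G$; once this is isolated, both the $\mathrm{GS}_Q=1$ bound and the stronger one-directional sensitivity-monotonicity inequality follow cleanly from the single fact that deleting a vertex from a graph cannot raise the degree of any remaining vertex.
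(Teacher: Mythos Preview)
Your proof is correct and follows essentially the same approach as the paper: both establish the two sensitivity inequalities by transferring an optimal removal set between node-neighbors (adding the differing node $u$ when passing from $G$ to $G'$, and restricting when going the other way), and derive sensitivity-monotonicity from the latter inequality. The only cosmetic differences are that you work with candidate subgraphs $G^*$ rather than removal sets $S$, and you initially state the first inequality loosely (with a $+1$) before tightening it in the sensitivity-monotonicity step, whereas the paper gives the tight bound $\mathrm{OPT}(G,\tau)\le\mathrm{OPT}(G',\tau)$ directly in the sensitivity argument.
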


\begin{proof}
The proofs for all properties are provided below.

\begin{itemize}
    \item[1.] \textbf{Monotonicity}: As $\tau$ increases, fewer nodes need to be removed from $G$ to ensure all node degrees are at most $\tau$. When $\tau \geq \deg(G)$, no removals are necessary, hence $Q_{\text{Del-N}}(G, \tau) = 0$.
    Since $Q_{\text{Del-N}}(G, \tau)$ is defined as the opposite of the number of nodes to remove, its value increases with $\tau$.

    \item[2.] \textbf{Bounded Sensitivity}: Let neighboring graphs $G \sim_{v} G'$ and $G \subseteq G'$. 
    Let $S_G$ be a minimal node set to remove that bounds the degree of $G$ by $\tau$.
    Then, $S_G \cup \{v\}$ is a valid node removal set for $G'$, where $v$ is the differing node.
    Thus, $Q_{\text{Del-N}}(G', \tau) \geq Q_{\text{Del-N}}(G, \tau) + 1$. 
    Conversely, let $S_{G'}$ be a minimal node removal set for $G'$. Since $G \subseteq G'$, removing $S_{G'}$ from $G$ also suffices to bound its degree by $\tau$.
    Therefore, $Q_{\text{Del-N}}(G', \tau) \leq Q_{\text{Del-N}}(G, \tau)$.
    To conclude, the global sensitivity of $Q_{\text{Del-N}}(G, \tau)$ is bounded by $1$.

    \item[3.] \textbf{Sensitivity Monotonicity}: Follows directly from the argument in Property 2. \qedhere
\end{itemize}
\end{proof}

Following the same steps of Algorithm~\ref{alg:edge_dp_max_deg_esti}, we can incorporate $Q_{\text{Del-N}}(G, \tau)$ with SVT to obtain a maximum degree estimation $\tau$ under node-DP. 
Similar to Lemma~\ref{lem:del_deg_bound}, it follows that $\tau \leq \mathrm{deg}(G)$ and $|Q_{\text{Del-N}}(G, \tau)| = \tilde{O}(1/\varepsilon)$.
\orange{This implies that $\tau$ upper-bounds the maximum degree of $G$ after removing a minimal subset of $\tilde{O}(1/\varepsilon)$ nodes.}
However, such a $\tau$ can not directly serve as $\tau^*$, as it fails to satisfy the second required property, which requires a bounded $N_{\tau^*}(G)$. 
\orange{Specifically, $|Q_{\text{Del-N}}(G, \tau)| \leq N_{\tau}(G)$, as removing all nodes with degree at least $\tau$ yields a feasible solution to $Q_{\text{Del-N}}(G, \tau)$.  
Therefore, a bounded value of $|Q_{\text{Del-N}}(G, \tau)|$ does not necessarily imply that $N_{\tau}(G)$ is bounded.}

To construct such a $\tau^*$ satisfying our three desired properties under node-DP, we set
\begin{equation}
\label{e:exp_tau*}
    \tau^* =  \tau + |Q_{\text{Del-N}}(G, \tau)|+ \mathrm{Lap}(\frac{c}{\varepsilon}) 
    +\mathrm{polylog},
\end{equation} 
where $c$ is some constant.
The high-level idea behind this construction is as follows. 
To satisfy the second desired property, we first increment $\tau$ by $|Q_{\text{Del-N}}(G, \tau)| + 1$ to get $\tau'$.
As a result, 
\begin{equation}
\label{e:ntau_bound}
    N_{\tau'}(G) \leq |Q_{\text{Del-N}}(G, \tau)|.
\end{equation}
This follows from the definition of $Q_{\text{Del-N}}(G, \tau)$, which ensures that removing $|Q_{\text{Del-N}}(G, \tau)|$ nodes from $G$ can result in a graph $G^*$ where all node degrees are at most $\tau$.
Therefore, every node in $G^*$ has a degree at most $\tau + |Q_{\text{Del-N}}(G, \tau)|$ in $G$.
Consequently, at most $|Q_{\text{Del-N}}(G, \tau)|$ number of nodes in $G$ have a degree exceeding this threshold, ensuring that
$N_{\tau'}(G) \leq |Q_{\text{Del-N}}(G, \tau)|$. 

However, using $\tau'$ as $\tau^*$ introduces an additional privacy concern: $\tau'$ depends on $G$, as it includes the instance-specific term $|Q_{\text{Del-N}}(G, \tau)|$.  
Fortunately, $|Q_{\text{Del-N}}(G, \tau)|$ has a global sensitivity of $1$ by Lemma~\ref{lem:clip_node_num}.  
Thus, adding a Laplace noise with scale $2/\varepsilon$ to $\tau'$ ensures $\varepsilon/2$-node-DP, and we denote the result as $\tau''$.  
Since $\tau''$ is smaller than $\tau'$ with probability $1/2$, it still violates the second property. 
To address this, we further increment $\tau''$ by a logarithmic term and denote the result as $\tau^*$.  
This added logarithmic term ensures that $\tau^*$ exceeds $\tau'$ with high probability, thereby maintaining $N_{\tau^*}(G)$ bounded by $|Q_{\text{Del-N}}(G, \tau)|$. 

So far, we have already shown that $\tau^*$ satisfies the first two desired properties. For the third property, we can use $2\tau^*$ as an upper bound of $g(G,\tau^*)$ 
\orange{since $\tau^*$ already incorporates a high-probability bound on $N_{\tau^*}(G)$ through the term $|Q_{\text{Del-N}}(G, \tau)|$.}

\begin{algorithm}[htbp]
\caption{Node-DP Maximum Degree Approximation}
\label{alg:exp_node_max_deg_approx}
\SetAlgoLined
\LinesNumbered
\DontPrintSemicolon
\KwIn{Graph $G$, privacy budgets $\varepsilon$ and $\delta$, parameter $\beta$}
\KwOut{$\tau^*$}

$T \gets -8 \ln(4 / \beta)/\varepsilon$   

$\tilde{T} \gets T + \text{Lap}(4 / \varepsilon)$ 

\For{$\tau = 1, 2, 4, 8, \dots$}{
   $\tilde{Q}_\tau(G) \gets Q_{\text{Del-N}}(G, \tau) + \text{Lap}(4 / \varepsilon)$ 
  
    \If{$\tilde{Q}_\tau(G) > \tilde{T}$}{
        \break
    }
}
$\tau^*\gets\tau + |Q_{\text{Del-N}}(G, \tau)|+ \mathrm{Lap}(\frac{2}{\varepsilon}) + \frac{2}{\varepsilon}\ln\max(\frac{1}{\delta},\frac{2}{\beta}) +1$\;
\Return $\tau^*$
\end{algorithm}

Finally, our node-DP mechanism for maximum degree approximation works as follows. We first invoke a similar algorithm as Algorithm~\ref{alg:edge_dp_max_deg_esti}, but replaces $Q_{\text{Del-Deg}}(G, \tau)$ with
$Q_{\text{Del-N}}(G, \tau)$. 
Besides, we evaluate $\tau$ on a logarithmic scale of $\tau=1,2,4,8\dots$ to reduce the number of iterations, since the computation of $Q_{\text{Del-N}}(G, \tau)$ is costly.
This approach is acceptable because, according to the first property of $\tau^*$, we only require an approximation rather than an accurate estimation of the maximum degree. 
After obtaining $\tau$ from the SVT, we construct $\tau^*$ following (\ref{e:exp_tau*}) with carefully designed parameters. The detailed algorithm is shown as Algorithm \ref{alg:exp_node_max_deg_approx}.

For privacy, \red{Algorithm~\ref{alg:exp_node_max_deg_approx} satisfies $\varepsilon$-node-DP. 
Specifically,
$\tau$ obtained in the SVT preserves $\varepsilon/2$-node-DP, while the remaining $\varepsilon/2$ privacy budget is allocated to privatize $\tau^*$. 
The parameter $\delta$ in Line~8 is used for utility bounds and does not affect the privacy guarantees.}
\orange{For utility, we allocate an error failure probability of $\beta/2$ to the computation of $\tau$ and $\beta/2$ to that of $\tau^*$. 
Line~8 further ensures that $|Q_{\text{Del-N}}(G, \tau)|$ upper-bounds $N_{\tau^*}(G)$ with failure probability at most $\delta$.
The following utility guarantees hold:
}

\begin{lemma}
\label{lem:exp_clip_node_bound}
Given any $\varepsilon$, $\beta$, and $\delta$, for any graph $G$, Algorithm~\ref{alg:exp_node_max_deg_approx} returns a $\tau^*$ such that with probability at least $1-\beta$,
\begin{equation*}
    \tau^* \leq 2\mathrm{deg}(G) + \frac{8}{\varepsilon}\ln\log(4\mathrm{deg}(G)) + \frac{16}{\varepsilon}\ln\frac{4}{\beta} + \frac{4}{\varepsilon}\ln\max(\frac{1}{\delta}, \frac{2}{\beta})+ 1,
\end{equation*}
and 
\begin{equation*}
   N_{\tau^*}(G) \leq \frac{8}{\varepsilon}\ln\log(4\mathrm{deg}(G)) + \frac{16}{\varepsilon}\ln\frac{4}{\beta}.
\end{equation*}
Furthermore, with probability at least $1-\delta$, 
\begin{equation*}
    \tau^*+ N_{\tau^*}(G) \leq 2\tau^*. 
\end{equation*}
\end{lemma}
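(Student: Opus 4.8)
The plan is to decompose Algorithm~\ref{alg:exp_node_max_deg_approx} into its two stages and track the failure probability budget of $\beta/2$ for each, then handle the $\delta$-term separately. First I would analyze the SVT stage (Lines~1--7). Since $Q_{\text{Del-N}}(G, \tau)$ is sensitivity-monotonic with $\mathrm{GS}_Q = 1$ under node-DP (Lemma~\ref{lem:clip_node_num}), I can apply the utility guarantee of Lemma~\ref{lem:svt} with threshold $T = -8\ln(4/\beta)/\varepsilon$, noise scale $4/\varepsilon$ (so the ``$\varepsilon$'' in Lemma~\ref{lem:svt} is effectively $\varepsilon/2$), and failure probability $\beta/2$. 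By monotonicity (property~1 of Lemma~\ref{lem:clip_node_num}), $Q_{\text{Del-N}}(G, \tau) = 0 \geq T + 8\ln(4/\beta)/\varepsilon$ once $\tau \geq \mathrm{deg}(G)$; since we iterate over $\tau = 1, 2, 4, 8, \dots$, the first such power of two is at most $2\mathrm{deg}(G)$, and there are at most $\log(2\mathrm{deg}(G))$ iterations up to that point. Hence with probability $\geq 1 - \beta/2$, SVT returns a $\tau$ with $\tau \leq \mathrm{deg}(G)$ (in the sense that the returned index corresponds to a power of two that does not overshoot, matching the ``$\ell \leq k$'' clause) and $|Q_{\text{Del-N}}(G, \tau)| \leq \frac{8}{\varepsilon}\ln\log(4\mathrm{deg}(G)) + \frac{8}{\varepsilon}\ln\frac{4}{\beta}$, by substituting $k \leq \log(4\mathrm{deg}(G))$ into the ``$Q_\ell(G) \geq T - \frac{4}{\varepsilon}\ln\frac{2k}{\beta}$'' bound (with the noise scale doubled to $4/\varepsilon$, the $4/\varepsilon$ coefficient becomes $8/\varepsilon$), and folding the constants.

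Next I would handle the second stage (Line~8): $\tau^* = \tau + |Q_{\text{Del-N}}(G, \tau)| + \mathrm{Lap}(2/\varepsilon) + \frac{2}{\varepsilon}\ln\max(\frac{1}{\delta}, \frac{2}{\beta}) + 1$. Conditioning on the SVT success event, I bound each term: $\tau \leq \mathrm{deg}(G)$; $|Q_{\text{Del-N}}(G,\tau)|$ is bounded as above; the Laplace tail $\Pr[|\mathrm{Lap}(2/\varepsilon)| > \frac{2}{\varepsilon}\ln\frac{2}{\beta}] \leq \beta/2$ (this is where the remaining $\beta/2$ failure budget goes), so $\mathrm{Lap}(2/\varepsilon) \leq \frac{2}{\varepsilon}\ln\frac{2}{\beta}$; and the deterministic shift $\frac{2}{\varepsilon}\ln\max(\frac{1}{\delta}, \frac{2}{\beta}) + 1$. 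Summing (and noting $\mathrm{deg}(G) + |Q_{\text{Del-N}}(G,\tau)| \leq 2\mathrm{deg}(G)$ since $|Q_{\text{Del-N}}(G,\tau)| \leq N_\tau(G) \leq \mathrm{deg}(G)$, or more crudely $\tau \leq \mathrm{deg}(G)$ and the $Q$ term is $\tilde{O}(1/\varepsilon)$ which we absorb) gives the stated first inequality after collecting the logarithmic constants. A union bound over the two stages yields the joint event with probability $\geq 1 - \beta$.

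For the $N_{\tau^*}(G)$ bound, on the same joint success event I use the key algebraic fact (\ref{e:ntau_bound}): setting $\tau' = \tau + |Q_{\text{Del-N}}(G,\tau)| + 1$, we have $N_{\tau'}(G) \leq |Q_{\text{Del-N}}(G,\tau)|$, because removing the $|Q_{\text{Del-N}}(G,\tau)|$ witnessing nodes produces a subgraph $G^*$ with $\deg(G^*) \leq \tau$, so every node's degree in $G$ exceeds $\tau'$ only if it is among those removed nodes. Since on the success event $\tau^* = \tau' + \mathrm{Lap}(2/\varepsilon) + \frac{2}{\varepsilon}\ln\max(\frac{1}{\delta},\frac{2}{\beta}) \geq \tau'$ (the Laplace noise is at least $-\frac{2}{\varepsilon}\ln\frac{2}{\beta} \geq -\frac{2}{\varepsilon}\ln\max(\frac{1}{\delta},\frac{2}{\beta})$), monotonicity of $N_\cdot(G)$ in the threshold gives $N_{\tau^*}(G) \leq N_{\tau'}(G) \leq |Q_{\text{Del-N}}(G,\tau)| \leq \frac{8}{\varepsilon}\ln\log(4\mathrm{deg}(G)) + \frac{8}{\varepsilon}\ln\frac{4}{\beta}$, which is dominated by the stated $\frac{8}{\varepsilon}\ln\log(4\mathrm{deg}(G)) + \frac{16}{\varepsilon}\ln\frac{4}{\beta}$.

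Finally, for the last inequality $\tau^* + N_{\tau^*}(G) \leq 2\tau^*$ with probability $\geq 1 - \delta$: this is equivalent to $N_{\tau^*}(G) \leq \tau^*$. Here I do \emph{not} condition on the full $\beta$-event but instead argue directly — the deterministic term $\frac{2}{\varepsilon}\ln\max(\frac{1}{\delta},\frac{2}{\beta}) \geq \frac{2}{\varepsilon}\ln\frac{1}{\delta}$ is precisely what absorbs the Laplace lower tail with failure probability $\delta$: with probability $\geq 1 - \delta$, $\mathrm{Lap}(2/\varepsilon) \geq -\frac{2}{\varepsilon}\ln\frac{1}{\delta}$, hence $\tau^* \geq \tau + |Q_{\text{Del-N}}(G,\tau)| + 1 = \tau' \geq N_{\tau'}(G) \cdot (\text{something}) $ — more carefully, $\tau^* \geq \tau' $ and also $\tau^* \geq \tau + |Q_{\text{Del-N}}(G,\tau)| \geq |Q_{\text{Del-N}}(G,\tau)| \geq N_{\tau'}(G) \geq N_{\tau^*}(G)$, using $\tau^* \geq \tau'$ and monotonicity for the last step. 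So $N_{\tau^*}(G) \leq \tau^*$, i.e., $\tau^* + N_{\tau^*}(G) \leq 2\tau^*$, with probability $\geq 1 - \delta$. The main obstacle I anticipate is bookkeeping: correctly matching the doubled noise scales ($4/\varepsilon$ vs.\ the $2/\varepsilon$ implicit in Lemma~\ref{lem:svt} when $c=1$) and the $\ln\log$ versus $\ln$ constants so that the final collected bounds have exactly the coefficients claimed, and being careful that the $N_{\tau^*}(G) \leq \tau^*$ argument needs only the $\delta$-probability Laplace event (not the $\beta$-event), since the lemma states the three conclusions with different probabilities.
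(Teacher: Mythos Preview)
Your approach is essentially the paper's: split $\beta$ as $\beta/2$ for the SVT stage and $\beta/2$ for the Laplace tail, apply Lemma~\ref{lem:svt} with the doubled noise scale, use (\ref{e:ntau_bound}) and threshold monotonicity to bound $N_{\tau^*}(G)$, and treat the $\delta$-event via a separate Laplace lower-tail bound. The paper consolidates the $\beta/2$ and $\delta$ Laplace tails into a single event with failure $\min(\delta,\beta/2)$, but your two-event version works equally well.

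Two concrete bookkeeping fixes are needed. First, on the logarithmic scale $\tau=1,2,4,\dots$ the SVT guarantee $\ell\le k$ only says you stop at or before the first power of two that is $\geq \deg(G)$, so the returned $\tau$ satisfies $\tau\le 2\deg(G)$, not $\tau\le\deg(G)$. Your attempted detour ``$|Q_{\text{Del-N}}(G,\tau)|\le N_\tau(G)\le\deg(G)$'' is false in general (e.g.\ many disjoint triangles with $\tau=1$ have $|Q_{\text{Del-N}}|$ arbitrarily large while $\deg(G)=2$); just drop it and bound $\tau$ and $|Q_{\text{Del-N}}|$ separately, as the paper does. Second, your bound on $|Q_{\text{Del-N}}(G,\tau)|$ omits the $-T$ contribution: from $Q_\ell \ge T - \frac{8}{\varepsilon}\ln\frac{4k}{\beta}$ with $T=-\frac{8}{\varepsilon}\ln\frac{4}{\beta}$ one gets $|Q_\ell|\le \frac{8}{\varepsilon}\ln k + \frac{16}{\varepsilon}\ln\frac{4}{\beta}$, so the coefficient on $\ln\frac{4}{\beta}$ is $16$, not $8$, and your ``dominated by the stated bound'' remark for $N_{\tau^*}(G)$ is actually an equality once this is corrected.
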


So far, we have obtained a node-DP maximum degree approximation $\tau^*$ satisfying all three desired properties.
However, the computation of $\tau^*$ incurs significant computational costs. 
 Specifically, computing $Q_{\text{Del-N}}(G, \tau)$ requires examining all possible combinations of node deletions, resulting in exponential runtime in the worst case.

\subsubsection{Node-DP Polynomial-Time Maximum Degree Approximation}
\label{subsubsec:method-deg-nodedp_poly}

In this section, we show how to reduce the runtime of our node-DP maximum degree approximation mechanism. 
Recall that the bottleneck lies in computing $Q_{\text{Del-N}}(G, \tau)$, which requires exponential time.
Our key idea is to approximate its value by leveraging Linear Programming (LP) 
and integrate the result with our proposed node-DP algorithm.  
We will show that the enhanced algorithm still outputs a $\tau^*$ that complies with the three desired properties.



We begin by approximating $Q_{\text{Del-N}}(G, \tau)$ with an LP formulation, denoted as $Q_{\text{LP-Del-N}}(G, \tau)$. 
In this formulation, we allow the fractional existence of nodes and edges in $G$. 
Specifically,
for each node $v \in V$, we assign a value $x_v \in [0, 1]$ to indicate whether to delete the node $v$ from $G$, i.e., $0$ represents complete retention and $1$ represents complete removal.
In contrast, for each edge $e \in E$, we assign a value $y_e$ from $[0, 1]$ to indicate whether to maintain $e$, i.e., $0$ represents removal and $1$ represents retention.
The LP $Q_{\text{LP-Del-N}}(G, \tau)$ maximizes the opposite of node removal count while ensuring constraints on node degrees and node-edge relationships, indicated by (\ref{eq:cons1}) and (\ref{eq:cons2}) respectively.
The LP formulation of $Q_{\text{LP-Del-N}}(G, \tau)$ is given by: 
\begin{align}
\text{maximize} \quad & Q_{\text{LP-Del-N}}(G, \tau) = -\sum_{v \in V} x_v \nonumber \\
\text{s.t.} \quad & y_e \geq 1 - x_{v'} - x_{v''}, \quad e = (v', v'') \in G, \label{eq:cons1} \\
                  & \sum_{e' \in E(v)} y_{e'} \leq \tau, \quad v \in V, \label{eq:cons2} \\
                   y_e &\in [0, 1], \quad e \in E, \nonumber \\
                   x_v &\in [0, 1], \quad v \in V. \nonumber
\end{align}

It is trivial to see $Q_{\text{LP-Del-N}}(G, \tau)$ is a relaxed version of $Q_{\text{Del-N}}(G, \tau)$: $Q_{\text{Del-N}}(G, \tau)$ can be formulated as an integer linear programming where each variable is from a discrete domain, i.e., $x_v \in \{1, 0\}$ and $y_e \in \{1, 0\}$, where the values have the same interpretations as in $Q_{\text{LP-Del-N}}(G, \tau)$.
Different from $Q_{\text{Del-N}}(G, \tau)$ that requires exponential runtime, $Q_{\text{LP-Del-N}}(G, \tau)$ can be solved in polynomial time.
Moreover, it retains the same properties as $Q_{\text{Del-N}}(G, \tau)$ under node-DP, as outlined below.

\begin{lemma}
\label{lem:lp_property}
   $Q_{\text{\emph{LP-Del-N}}}(G, \tau)$ satisfies the following $3$ properties under node-DP: 

\begin{itemize}
    \item[1.] \textbf{Monotonicity}: 
    $Q_{\text{\emph{LP-Del-N}}}(G, \tau)$ is monotonic and becomes $0$ when $\tau \geq \mathrm{deg}(G)$.
    \item[2.] \textbf{Bounded Sensitivity}: 
    $Q_{\text{\emph{LP-Del-N}}}(G, \tau)$ has $\text{\emph{GS}}_Q = 1$.
    \item[3.] \textbf{Sensitivity-Monotonicity}: $Q_{\text{\emph{LP-Del-N}}}(G, \tau)$ is sensitivity-monotonic.
\end{itemize}
\end{lemma}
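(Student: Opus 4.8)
The plan is to establish the three properties of $Q_{\text{LP-Del-N}}(G,\tau)$ by reducing each one, as much as possible, to the corresponding facts about $Q_{\text{Del-N}}(G,\tau)$ in Lemma~\ref{lem:clip_node_num}, together with elementary LP arguments. Throughout I would fix a pair of node-neighboring graphs $G \sim_{\text{node}} G'$ with $G \subseteq G'$ and let $u$ be the extra node in $G'$, since all three properties either only involve one graph (monotonicity) or are phrased with respect to such a pair.

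\textbf{Monotonicity.} First I would observe that the feasible region of the LP for a larger $\tau$ contains that for a smaller $\tau$: only constraint~(\ref{eq:cons2}) depends on $\tau$, and increasing $\tau$ only relaxes it. Since the objective $-\sum_v x_v$ does not depend on $\tau$, the optimum is non-decreasing in $\tau$, giving monotonicity. For the ``becomes $0$'' claim: when $\tau \geq \deg(G)$, the all-retain assignment $x_v = 0$ for all $v$, $y_e = 1$ for all $e$, is feasible (every node has degree $\le \deg(G) \le \tau$), and it achieves objective value $0$, which is clearly the maximum since $x_v \ge 0$ forces $-\sum_v x_v \le 0$.

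\textbf{Bounded sensitivity.} This is the main obstacle, because unlike the combinatorial $Q_{\text{Del-N}}$, an optimal fractional solution need not restrict cleanly to $G$. I would argue both inequalities directly on the LP. For $Q_{\text{LP-Del-N}}(G',\tau) \le Q_{\text{LP-Del-N}}(G,\tau)$: take an optimal fractional solution $(x,y)$ for $G'$, restrict it to the nodes and edges of $G$ (simply forget $x_u$ and all $y_e$ with $e$ incident to $u$); the restricted solution is feasible for $G$ — constraint~(\ref{eq:cons1}) still holds for every edge of $G$, and~(\ref{eq:cons2}) only drops terms, so degree sums do not increase — and its objective is $-\sum_{v \in V(G)} x_v \ge -\sum_{v \in V(G')} x_v = Q_{\text{LP-Del-N}}(G',\tau)$. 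For the reverse bound $Q_{\text{LP-Del-N}}(G',\tau) \ge Q_{\text{LP-Del-N}}(G,\tau) - 1$: take an optimal solution $(x,y)$ for $G$ and extend it to $G'$ by setting $x_u = 1$ (fully delete $u$) and $y_e = 0$ for every edge $e$ incident to $u$; then~(\ref{eq:cons1}) holds for the new edges since $1 - x_u - x_{v} \le 1 - 1 = 0 \le 0 = y_e$, and~(\ref{eq:cons2}) for old nodes is unaffected while for $u$ it reads $0 \le \tau$. The objective becomes $-\sum_{v\in V(G)} x_v - 1 = Q_{\text{LP-Del-N}}(G,\tau) - 1$. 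Combining, $|Q_{\text{LP-Del-N}}(G,\tau) - Q_{\text{LP-Del-N}}(G',\tau)| \le 1$, i.e.\ $\text{GS}_Q = 1$.

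\textbf{Sensitivity-monotonicity.} This follows immediately from the first inequality established in the previous paragraph: for $G \subseteq G'$ with $G \sim_{\text{node}} G'$ we showed $Q_{\text{LP-Del-N}}(G',\tau) \le Q_{\text{LP-Del-N}}(G,\tau)$, which is exactly the definition of sensitivity-monotonicity. So no extra work is needed here. I expect the only delicate point in the whole proof to be checking that the restriction/extension operations in the bounded-sensitivity step genuinely preserve feasibility of constraints~(\ref{eq:cons1}) and~(\ref{eq:cons2}); once that bookkeeping is done carefully the rest is routine.
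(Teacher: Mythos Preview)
Your proposal is correct and follows essentially the same approach as the paper: monotonicity via enlargement of the LP feasible region and the all-retain solution at $\tau\ge\deg(G)$; bounded sensitivity by restricting an optimal solution of $G'$ to $G$ for one direction and extending an optimal solution of $G$ to $G'$ with $x_u=1$, $y_e=0$ for the other; and sensitivity-monotonicity as an immediate byproduct of the restriction direction. Your feasibility checks for constraints~(\ref{eq:cons1}) and~(\ref{eq:cons2}) are in fact spelled out in more detail than the paper's own proof.
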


\begin{proof}
Assume that $\{x_v\}$ and $\{y_e\}$ is the optimal solution for $Q_{\text{LP-Del-N}}(G, \tau)$.
 The proofs of all properties are given below.
 
\begin{itemize}
     \item[1.] \textbf{Monotonicity}: As $\tau$ increases, the values of $y_e$ can increase, leading to lower values of $x_v$. 
     When $\tau \geq \mathrm{deg}(G)$, all edges can be retained, resulting in $x_v =0$ for all $v \in V$.
    Since $Q_{\text{LP-Del-N}}(G, \tau)$ is the opposite of the sum of $x_v$, it increases with $\tau$ and becomes $0$ when $\tau \geq \mathrm{deg}(G)$.

     \item[2.] \textbf{Bounded Sensitivity}:
    Let neighboring graphs $G \sim _u G'$ such that $G \subseteq G'$ and $G' = (V', E')$. 
    A feasible solution for $G'$ can be constructed by keeping the values of $\{x_v\}$ and $\{y_e\}$ from $G$, and set $x_u =1$ and $y_e' = 0$ for $e \in E'(u)$. 
    Thus, $Q_{\text{LP-Del-N}}(G, \tau)-1 \leq Q_{\text{LP-Del-N}}(G', \tau)$.
    Conversely, a feasible solution for $G$ can be constructed by applying the assignments from the optimal solution of $G'$ to the corresponding nodes and edges in $G$. Therefore, $Q_{\text{LP-Del-N}}(G', \tau) \leq Q_{\text{LP-Del-N}}(G, \tau)$.
    To conclude, $Q_{\text{LP-Del-N}}(G, \tau)$ has a global sensitivity of $1$.

     \item[3.] \textbf{Sensitivity-Monotonicity}:   
     Follows directly from the argument in Property 2.  \qedhere
\end{itemize}
\end{proof}

These three properties ensure that by replacing $Q_{\text{Del-N}}(G, \tau)$ with $Q_{\text{LP-Del-N}}(G, \tau)$, Algorithm~\ref{alg:exp_node_max_deg_approx} still preserves $\varepsilon$-node-DP. 
However, the utility guarantee for the output $\tau^*$ no longer holds. 
Since $Q_{\text{LP-Del-N}}(G, \tau)$ is a relaxation of $Q_{\text{Del-N}}(G, \tau)$, its value is greater than $Q_{\text{Del-N}}(G, \tau)$. 
As a result, when using $Q_{\text{LP-Del-N}}(G, \tau)$, the SVT may stop earlier with a smaller $\tau$. 
While such an $\tau$ ensures a bounded $|Q_{\text{LP-Del-N}}(G, \tau)|$, it does not necessarily imply a bounded $|Q_{\text{Del-N}}(G, \tau)|$, which is essential to satisfy the second property.  

To address this issue, we establish a relationship between \newline
$Q_{\text{LP-Del-N}}(G, \tau)$ and $Q_{\text{Del-N}}(G, \tau)$. The high-level idea is as follows.
Let $\{x_v\}$ and $\{y_e\}$ denote the optimal solution to the LP $Q_{\text{LP-Del-N}}(G, \tau)$.
Recall that $x_v$ represents the fractional removal of node $v$ from $G$, and $y_e$ represents the fractional retention of edge $e$.
Now we construct a graph $G'$ by removing all nodes $v$ with $x_v > 1/3$ from $G$. 
On one hand, since only nodes with $x_v > 1/3$ are removed, the total number of deleted nodes is at most $3$ times $\sum_{v \in V} x_v$, which is equivalent to $3\cdot|Q_{\text{LP-Del-N}}(G, \tau)|$. 
On the other hand, for every edge in $G'$, both endpoints $v$ and $v'$ satisfy $x_v,x_{v'}\leq1/3$, which implies $y_e\geq1/3$. Therefore, it can be inferred that the maximum degree of $G'$ is bounded by $3\tau$.

Thus, we obtain the relationship between $Q_{\text{LP-Del-N}}(G, \tau)$ and $Q_{\text{Del-N}}(G, \tau)$:

\begin{lemma}
\label{lem:lp_bound}
For any $G$ and any $\tau$,
\begin{equation}
\label{e:lp_bound}
   |Q_{\text{Del-N}}(G, 3\tau)| \leq 3 \cdot |Q_{\text{LP-Del-N}}(G, \tau)|. 
\end{equation}
\end{lemma}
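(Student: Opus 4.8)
The statement to prove is $|Q_{\text{Del-N}}(G, 3\tau)| \leq 3\cdot|Q_{\text{LP-Del-N}}(G,\tau)|$, and the paragraph immediately preceding it already describes the rounding argument I would follow. The plan is to take an optimal fractional solution $\{x_v\},\{y_e\}$ to the LP $Q_{\text{LP-Del-N}}(G,\tau)$, perform a threshold rounding at $1/3$, and show that the resulting integral node-deletion set is both feasible for $Q_{\text{Del-N}}(G,3\tau)$ and of size at most $3\sum_{v}x_v$.

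Concretely, I would proceed as follows. First, let $S = \{v \in V : x_v > 1/3\}$ and set $G' = G \setminus S$ (deleting these nodes and all incident edges). Since each $v \in S$ contributes $x_v > 1/3$ to the objective $\sum_{v} x_v = |Q_{\text{LP-Del-N}}(G,\tau)|$, we get $|S| < 3\sum_{v \in S} x_v \leq 3\sum_{v \in V} x_v = 3|Q_{\text{LP-Del-N}}(G,\tau)|$, so $|S| \le 3|Q_{\text{LP-Del-N}}(G,\tau)|$. Second, I must check $G'$ has maximum degree at most $3\tau$, so that $S$ witnesses a feasible solution for the minimization defining $Q_{\text{Del-N}}(G,3\tau)$. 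For any surviving node $v \notin S$ and any surviving edge $e = (v, v') \in G'$, both endpoints satisfy $x_v, x_{v'} \le 1/3$, so constraint~(\ref{eq:cons1}) gives $y_e \ge 1 - x_v - x_{v'} \ge 1/3$. Summing constraint~(\ref{eq:cons2}) over the surviving incident edges of $v$ (a subset of $E(v)$): $\tau \ge \sum_{e' \in E(v)} y_{e'} \ge \sum_{e' \in E(v)\cap E(G')} y_{e'} \ge \frac{1}{3}\deg_{G'}(v)$, hence $\deg_{G'}(v) \le 3\tau$. Therefore $G' \subseteq G$ with $\deg(G') \le 3\tau$, so the minimum in the definition of $Q_{\text{Del-N}}(G,3\tau)$ is at most $|N_G - N_{G'}| = |S| \le 3|Q_{\text{LP-Del-N}}(G,\tau)|$, which is exactly~(\ref{e:lp_bound}).

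There is essentially no hard obstacle here; the only points requiring mild care are (i) the sign conventions — both $Q$-quantities are defined as negatives, so one works with absolute values throughout and notes $|Q_{\text{LP-Del-N}}(G,\tau)| = \sum_v x_v$ at the optimum — and (ii) making sure the degree bound only uses surviving edges, since non-surviving incident edges of $v$ (those hitting $S$) have already been deleted and impose no constraint. Both are routine. One should also remark that $Q_{\text{LP-Del-N}}(G,\tau) \le Q_{\text{Del-N}}(G,\tau)$ trivially (LP relaxation) so that combined with~(\ref{e:lp_bound}) one gets a two-sided sandwich, though only the stated direction is needed for the subsequent utility analysis of the polynomial-time algorithm.
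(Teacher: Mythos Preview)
Your proposal is correct and matches the paper's proof essentially step for step: take an optimal LP solution, threshold-round at $1/3$ to form the deletion set $S$, argue $|S|\le 3\sum_v x_v$, and use constraints~(\ref{eq:cons1})--(\ref{eq:cons2}) to show every surviving edge has $y_e\ge 1/3$ so the remaining graph has maximum degree at most $3\tau$. The paper packages the rounding as a small algorithm but the mathematical content is identical.
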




Therefore, by feeding $Q_{\text{LP-Del-N}}(G, \tau)$ into the SVT, we obtain a $\tau$ such that $\tau \leq 2 \deg(G)$ and $|Q_{\text{LP-Del-N}}(G, \tau)|$ is bounded by $\tilde{O}(1 / \varepsilon)$.
Following a similar analysis to Section~\ref{subsubsec:method-deg-nodedp_exp}, we modify Line~8 of Algorithm~\ref{alg:exp_node_max_deg_approx} to compute $\tau^*$ as:
\begin{equation}
\label{e:tau*}
\tau^* \gets 3\tau + 3 \cdot |Q_{\text{LP-Del-N}}(G,\tau)| 
+ \mathrm{Lap}(\frac{6}{\varepsilon}) +
\frac{6}{\varepsilon}\ln\max(\frac{1}{\delta}, \frac{2}{\beta})+ 1,
\end{equation}
which yields an $\tau^*$ that satisfies the three desired properties:


\begin{lemma}
\label{lem:poly_bound}
Given any $\varepsilon$, $\beta$, and $\delta$, for any graph $G$, the polynomial-time node-DP maximum degree approximation algorithm returns a $\tau^*$ such that with probability at least $1-\beta$,
\begin{equation*}
    \tau^* \leq 6\mathrm{deg}(G) + \frac{24}{\varepsilon}\ln\log(4\mathrm{deg}(G)) + \frac{48}{\varepsilon}\ln\frac{4}{\beta} + \frac{12}{\varepsilon}\ln\max(\frac{1}{\delta}, \frac{2}{\beta})+ 1,
\end{equation*}
and 
\begin{equation*}
    N_{\tau^*}(G) \leq \frac{24}{\varepsilon}\ln\log(4\mathrm{deg}(G)) + \frac{48}{\varepsilon}\ln\frac{4}{\beta}.
\end{equation*}
Furthermore, with probability at least $1-\delta$, 
\begin{equation*}
    \tau^*+ N_{\tau^*}(G) \leq 2\tau^*. 
\end{equation*}
\end{lemma}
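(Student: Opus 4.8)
The strategy is to mirror the analysis sketched for the exponential-time algorithm (Lemma~\ref{lem:exp_clip_node_bound}), but with $Q_{\text{Del-N}}$ replaced by $Q_{\text{LP-Del-N}}$ inside the SVT and with the reconstruction step (\ref{e:tau*}) replacing (\ref{e:exp_tau*}); the extra factor of $3$ everywhere comes from Lemma~\ref{lem:lp_bound}. First I would analyze the SVT phase. Since $Q_{\text{LP-Del-N}}(G,\tau)$ is monotonic, has global sensitivity $1$, and is sensitivity-monotonic (Lemma~\ref{lem:lp_property}), Lemma~\ref{lem:svt} applies with $c=1$ to the SVT using budget $\varepsilon/2$ and failure probability $\beta/2$. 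Noting $|Q_{\text{LP-Del-N}}|\le |Q_{\text{Del-N}}|\le N$ and that $Q_{\text{LP-Del-N}}(G,\tau)=0$ once $\tau\ge\deg(G)$, the threshold $T=-8\ln(4/\beta)/\varepsilon$ guarantees (via the argument of Lemma~\ref{lem:del_deg_bound}, adapted to the budget $\varepsilon/2$ and the geometric grid $\tau=1,2,4,8,\dots$) that with probability $\ge 1-\beta/2$ the returned $\tau$ satisfies $\tau\le 2\deg(G)$ and $|Q_{\text{LP-Del-N}}(G,\tau)|\le \tfrac{8}{\varepsilon}\ln\log(4\deg(G))+\tfrac{16}{\varepsilon}\ln\tfrac{4}{\beta}$; the $\log(4\deg(G))$ factor inside the logarithm is the number of grid points examined before $\tau$ exceeds $\deg(G)$.

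Next I would translate this LP bound into an integral bound via Lemma~\ref{lem:lp_bound}: $|Q_{\text{Del-N}}(G,3\tau)|\le 3|Q_{\text{LP-Del-N}}(G,\tau)|$. Set $B := 3|Q_{\text{LP-Del-N}}(G,\tau)|$, so $B\le \tfrac{24}{\varepsilon}\ln\log(4\deg(G))+\tfrac{48}{\varepsilon}\ln\tfrac{4}{\beta}$ on the good event. By the definition of $Q_{\text{Del-N}}$, removing $B$ nodes from $G$ yields a subgraph $G^*$ with $\deg(G^*)\le 3\tau$; hence every remaining node has degree at most $3\tau+B$ in $G$, so $N_{3\tau+B}(G)\le B$ — this is the analogue of (\ref{e:ntau_bound}). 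Now (\ref{e:tau*}) sets $\tau^* = 3\tau + B + \mathrm{Lap}(6/\varepsilon) + \tfrac{6}{\varepsilon}\ln\max(\tfrac1\delta,\tfrac2\beta) + 1$. Using the Laplace tail bound with the remaining $\beta/2$ failure budget, $\mathrm{Lap}(6/\varepsilon)\ge -\tfrac{6}{\varepsilon}\ln\tfrac{2}{\beta}$ with probability $\ge 1-\beta/2$, which forces $\tau^* > 3\tau+B$, hence $N_{\tau^*}(G)\le N_{3\tau+B}(G)\le B$, giving the second displayed inequality by a union bound over the two $\beta/2$ events. For the first inequality, on the same good event the upper Laplace tail gives $\mathrm{Lap}(6/\varepsilon)\le \tfrac{6}{\varepsilon}\ln\tfrac{2}{\beta}$ (absorbed), and combining $\tau\le 2\deg(G)$ with the bound on $B$ yields $\tau^*\le 6\deg(G) + \tfrac{24}{\varepsilon}\ln\log(4\deg(G)) + \tfrac{48}{\varepsilon}\ln\tfrac4\beta + \tfrac{12}{\varepsilon}\ln\max(\tfrac1\delta,\tfrac2\beta) + 1$; here I would note the two Laplace-tail requirements are the same one-sided event so no extra probability loss beyond the $\beta$ already spent. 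Finally, for the $1-\delta$ statement: conditioned on $\tau$ and $B$ from the SVT (which is the event I must be careful about), the fresh Laplace noise satisfies $\mathrm{Lap}(6/\varepsilon)\ge -\tfrac{6}{\varepsilon}\ln\tfrac1\delta$ with probability $\ge 1-\delta$, so $\tau^* > 3\tau+B \ge B \ge N_{3\tau+B}(G)\ge N_{\tau^*}(G)$, whence $\tau^*+N_{\tau^*}(G)\le 2\tau^*$.

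The main obstacle is bookkeeping the probabilistic dependencies cleanly rather than any deep difficulty: the SVT output $(\tau,B)$ and the reconstruction noise in (\ref{e:tau*}) are independent draws, so the $1-\delta$ clause should be argued as holding conditionally on \emph{every} realization of the SVT (it only needs $B\ge N_{3\tau+B}(G)$, which is deterministic given $\tau$), while the $1-\beta$ clauses require the good SVT event; I would make sure the three failure events ($\beta/2$ for SVT accuracy, $\beta/2$ for the reconstruction noise in the $1-\beta$ bounds, and the separate $\delta$ for the last bound) are combined by the correct union bounds and not double-counted. A secondary technical point is verifying that the geometric grid $\tau=1,2,4,8,\dots$ does not hurt the accuracy guarantee of Lemma~\ref{lem:svt} — it only changes the "$k$" in that lemma from $\deg(G)$ to $\lceil\log_2(4\deg(G))\rceil$, which is exactly where the $\ln\log(4\deg(G))$ term originates, and I would spell this substitution out explicitly since it is the one place the polynomial-time variant genuinely differs from Lemma~\ref{lem:del_deg_bound}.
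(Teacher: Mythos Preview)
Your proposal is correct and follows essentially the same route as the paper's proof: apply Lemma~\ref{lem:svt} to the SVT with $Q_{\text{LP-Del-N}}$ on the geometric grid to bound $\tau$ and $|Q_{\text{LP-Del-N}}(G,\tau)|$, invoke Lemma~\ref{lem:lp_bound} to pass to $|Q_{\text{Del-N}}(G,3\tau)|$, use the analogue of~(\ref{e:ntau_bound}) to control $N_{\tau^*}(G)$, and handle the Laplace tail in~(\ref{e:tau*}) separately for the $\beta$- and $\delta$-probability claims. Your treatment of the probabilistic dependencies (independence of the SVT draw and the reconstruction noise, and the deterministic validity of $N_{3\tau+B}(G)\le B$ given $\tau$) is in fact more explicit than the paper's, which simply defers to the exponential-time analysis; the only cosmetic slip is that the intermediate claim should read $N_{3\tau+B+1}(G)\le B$ rather than $N_{3\tau+B}(G)\le B$, but since you immediately use $\tau^*>3\tau+B$ (and degrees are integers) the conclusion $N_{\tau^*}(G)\le B$ is unaffected.
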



\red{The revised algorithm maintains the same $\varepsilon$-node-DP guarantee as Algorithm~\ref{alg:exp_node_max_deg_approx}.}
\orange{
For utility, the allocation of error failure probabilities remains unchanged, 
and the upper bounds on $\tau^*$ and $N_{\tau^*}(G)$ increase by roughly a factor of $3$ under the same parameters.
}
In terms of runtime, since each invocation of $Q_{\text{LP-Del-N}}(G, \tau)$ runs in polynomial time and there are $O(\log \deg(G))$ iterations, the overall algorithm also runs in polynomial time.

\subsection{Combined Solution}
\label{subsec:method-combine}

By combining the distance-preserving clipping mechanism from Section~\ref{subsec:method-clipping} with the node-DP polynomial-time maximum degree approximation algorithm from Section~\ref{subsec:method-deg_approx}, we develop the N2E framework that reduces an arbitrary node-DP graph analytical task to an edge-DP one as follows. 
Given a graph analytical task $Q$, an edge-DP mechanism $\mathcal{M}_Q^{\text{edge}}$, an input graph $G$, privacy budgets $\varepsilon$ and $\delta$, and error failure probability $\beta$, the framework proceeds in three steps:
\begin{itemize}
    \item Compute $\tau^*$ using the node-DP polynomial-time maximum degree approximation on input graph $G$, with privacy budgets $2\varepsilon/3$ and $\delta/2$, and error failure probability $2\beta/3$.
    
    \item Clip graph $G$ using the distance-preserving clipping mechanism described in Algorithm~\ref{alg:clip_graph}, with the degree upper bound set to $\tau^*$.
    Denote the resulting graph as $\overline{G}$.
    
    \item Apply the edge-DP mechanism $\mathcal{M}_Q^{\text{edge}}$ to the clipped graph $\overline{G}$ with privacy budgets $\varepsilon/6\tau^*$ and $\delta/4\tau^*$, and error failure probability $\beta/3$.
\end{itemize}
The framework outputs the result of the edge-DP mechanism, denoted as $\mathcal{M}_Q^{\text{node}}(G, \varepsilon, \delta, \beta)$.
For privacy, 

\begin{theorem}
\label{the:n2e_privacy}
The N2E framework satisfies $(\varepsilon, \delta)$-node-DP.
\end{theorem}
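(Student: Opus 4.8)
The plan is to establish $(\varepsilon,\delta)$-node-DP by decomposing the N2E framework into three stages and applying composition. First I would treat the computation of $\tau^*$: by Lemma~\ref{lem:poly_bound}'s underlying mechanism (the polynomial-time node-DP maximum degree approximation obtained by running Algorithm~\ref{alg:exp_node_max_deg_approx} with $Q_{\text{LP-Del-N}}$ and the modified Line~8 from~(\ref{e:tau*})), this stage satisfies $2\varepsilon/3$-node-DP; the $\delta/2$ budget passed to it plays no role in the privacy accounting for this stage (it only feeds the utility bound), so in fact this stage is pure $(2\varepsilon/3,0)$-node-DP. The key ingredients are Lemma~\ref{lem:lp_property} (monotonicity, unit global sensitivity, and sensitivity-monotonicity of $Q_{\text{LP-Del-N}}$), which licenses using SVT with $c=1$ via Lemma~\ref{lem:svt} under $\varepsilon/3$, plus a Laplace-mechanism argument for the remaining $\varepsilon/3$ spent in~(\ref{e:tau*}) on privatizing the instance-specific quantity $3|Q_{\text{LP-Del-N}}(G,\tau)|$, which also has global sensitivity $1$.

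Next I would handle the clipping step: $\mathrm{Clip}(\cdot,\tau^*)$ from Algorithm~\ref{alg:clip_graph} is a deterministic function of the graph (for a fixed $\tau^*$), so no privacy budget is consumed there — it is post-processing of $\tau^*$ combined with a deterministic transform of $G$. Then, for the edge-DP stage, the core argument is: condition on the value of $\tau^*$ output by stage one, and run $\mathcal{M}_Q^{\text{edge}}$ on $\overline{G}=\mathrm{Clip}(G,\tau^*)$ with budgets $\varepsilon/(6\tau^*)$ and $\delta/(4\tau^*)$. For any two node-DP-neighboring graphs $G\sim_{\text{node}}G'$, Lemma~\ref{lem:clip_edge_dis} (instantiated with $\tau=\tau^*$, $k=N_{\tau^*}(G)$) gives $d_{\text{edge}}(\mathrm{Clip}(G,\tau^*),\mathrm{Clip}(G',\tau^*))\le \tau^*+N_{\tau^*}(G)$, and by Lemma~\ref{lem:poly_bound} this is at most $2\tau^*$ except with probability $\delta/2$. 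By group privacy (Lemma~\ref{lem:gp}), an $(\varepsilon/(6\tau^*),\delta/(4\tau^*))$-edge-DP mechanism run on graphs at edge distance $\le 2\tau^*$ is $(\varepsilon/3,\delta/2)$-indistinguishable; the residual $\delta/2$ failure probability of the distance bound adds another $\delta/2$, so this stage contributes $(\varepsilon/3,\delta)$ — actually I would be careful here and split: allocate $\delta/2$ to the event $\tau^*+N_{\tau^*}(G)\le 2\tau^*$ failing and $\delta/2$ (via $\delta/(4\tau^*)\cdot 2\tau^*$) to the group-privacy $\delta$ term. I also need to confirm that $\tau^*$ being computed from $G$ (not $G'$) is handled correctly: since $\tau^*$ is itself $(2\varepsilon/3)$-node-DP, one couples the two executions by treating $\tau^*$ as an output that is $(2\varepsilon/3)$-indistinguishable, then applies the conditional edge-DP guarantee inside.

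Finally I would combine the pieces: the $\tau^*$-computation contributes $(2\varepsilon/3,0)$, the clipping contributes nothing, and the edge-DP invocation contributes $(\varepsilon/3,\delta)$ in the conditional sense; careful sequential composition (Lemma~\ref{lem:bc}, or a direct two-line coupling argument since the second mechanism's privacy parameter depends on the first's output) yields $(\varepsilon,\delta)$-node-DP overall. The main obstacle I anticipate is the last accounting subtlety: because the edge-DP privacy budget $\varepsilon/(6\tau^*)$ and the group-privacy amplification factor $2\tau^*$ both depend on the \emph{random, data-dependent} value $\tau^*$, a naive application of sequential composition does not directly apply — one has to argue that for \emph{every} fixed realization of $\tau^*$ the product $\frac{\varepsilon}{6\tau^*}\cdot 2\tau^* = \frac{\varepsilon}{3}$ is constant (the $\tau^*$ cancels), and similarly $\frac{\delta}{4\tau^*}\cdot 2\tau^* = \frac{\delta}{2}$, so the conditional guarantee is uniform in $\tau^*$, which is exactly what makes the composition go through. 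Pinning down this cancellation cleanly, together with the bookkeeping of the two $\delta/2$ contributions, is where the real care is needed; the rest is a routine assembly of Lemmas~\ref{lem:pp},~\ref{lem:bc},~\ref{lem:gp},~\ref{lem:clip_edge_dis}, and~\ref{lem:poly_bound}.
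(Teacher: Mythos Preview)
Your proposal is correct and follows essentially the same three-stage decomposition as the paper's proof: $(2\varepsilon/3,0)$-node-DP for computing $\tau^*$, no cost for clipping, and $(\varepsilon/3,\delta)$ for the edge-DP invocation via group privacy plus the $\delta/2$ failure probability of the distance bound, combined by sequential composition. You are in fact more explicit than the paper about the cancellation of $\tau^*$ in the group-privacy product and the coupling subtlety; one small slip is that $3|Q_{\text{LP-Del-N}}(G,\tau)|$ has global sensitivity $3$, not $1$, but the Laplace scale in~(\ref{e:tau*}) is calibrated for sensitivity $3$, so your budget accounting still comes out to $\varepsilon/3$ for that piece.
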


For utility, we have:

 \begin{theorem}
\label{the:n2e_util}

Given any $\varepsilon$, $\beta$, and $\delta$, for any graph $G$, task $Q$, and edge-DP mechanism $\mathcal{M}_Q^{\text{edge}}$, with probability at least $1-\beta$, the N2E framework outputs a node-DP result with error 
\begin{equation*}
       |Q(G) - Q(\overline{G})| + \mathrm{Err}_Q^{\text{\emph{edge}}}(\overline{G},\varepsilon/6\tau^*, \delta/4\tau^*,\beta/3),
\end{equation*}
where $\overline{G}$ is obtained by clipping edges of $O(\log(\log(\mathrm{deg}(G))/ \beta)/\varepsilon)$ nodes from $G$, \\
and $\mathrm{Err}_Q^{\text{\emph{edge}}}(\overline{G},\varepsilon/6\tau^*, \delta/4\tau^*,\beta/3)$ denotes the error of the edge-DP mechanism on task $Q$ given clipped graph $\overline{G}$, privacy budgets $\varepsilon/6\tau^*$ and $\delta/4\tau^*$, and error failure probability $\beta/3$.

\end{theorem}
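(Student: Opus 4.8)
# Proof Proposal for Theorem~\ref{the:n2e_util}

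The plan is to decompose the total error of N2E's output into two pieces by the triangle inequality. Write $\mathcal{M}_Q^{\text{node}}(G,\varepsilon,\delta,\beta)$ as $\mathcal{M}_Q^{\text{edge}}(\overline{G},\varepsilon/6\tau^*,\delta/4\tau^*,\beta/3)$ (the output of the third step), and compare it to $Q(G)$ via
\begin{equation*}
\bigl|\mathcal{M}_Q^{\text{node}}(G,\varepsilon,\delta,\beta) - Q(G)\bigr|
\;\leq\;
\bigl|Q(G) - Q(\overline{G})\bigr|
\;+\;
\bigl|\mathcal{M}_Q^{\text{edge}}(\overline{G},\varepsilon/6\tau^*,\delta/4\tau^*,\beta/3) - Q(\overline{G})\bigr|.
\end{equation*}
The second term is by definition at most $\mathrm{Err}_Q^{\text{edge}}(\overline{G},\varepsilon/6\tau^*,\delta/4\tau^*,\beta/3)$, which holds with probability at least $1-\beta/3$ over the randomness of the edge-DP mechanism. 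So the bulk of the argument is the deterministic (conditional on $\tau^*$) claim that $\overline{G}$ differs from $G$ only by clipping the edges of $O(\log(\log(\deg(G))/\beta)/\varepsilon)$ nodes, which in turn is what controls the clipping-bias term $|Q(G)-Q(\overline{G})|$.

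First I would recall the structure of $\overline{G}$: it is obtained by applying Algorithm~\ref{alg:clip_graph} to $G$ with degree bound $\tau^*$, where $\tau^*$ is the output of the polynomial-time node-DP maximum degree approximation (Section~\ref{subsubsec:method-deg-nodedp_poly}) run with budgets $2\varepsilon/3$, $\delta/2$ and failure probability $2\beta/3$. The key observation is that in Algorithm~\ref{alg:clip_graph}, an edge $e=(u,v)$ is dropped only if it fails the rank test at $u$ or at $v$, i.e.\ only if $\deg_G(u)>\tau^*$ or $\deg_G(v)>\tau^*$; hence every clipped edge is incident to a node of degree $>\tau^*$, and the set of nodes whose incident edges get modified is contained in the set of nodes with degree at least $\tau^*$, which has size $N_{\tau^*}(G)$. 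Then I would invoke Lemma~\ref{lem:poly_bound}: with probability at least $1-\beta$ (here taking the instantiation with failure probability $2\beta/3$ absorbed into the bound, and reconciling the constant by a union bound over the two invocations), we have $N_{\tau^*}(G) \leq \frac{24}{\varepsilon}\ln\log(4\deg(G)) + \frac{48}{\varepsilon}\ln\frac{4}{\beta} = O(\log(\log(\deg(G))/\beta)/\varepsilon)$. This establishes that $\overline{G}$ is $G$ with the edges of $O(\log(\log(\deg(G))/\beta)/\varepsilon)$ nodes clipped, exactly as claimed.

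The remaining care is bookkeeping of failure probabilities and privacy-budget reconciliation. I would set up a union bound: the event that Lemma~\ref{lem:poly_bound}'s bound on $N_{\tau^*}(G)$ fails has probability at most $2\beta/3$ (the failure parameter fed to the degree approximation), and the event that the edge-DP mechanism exceeds $\mathrm{Err}_Q^{\text{edge}}$ has probability at most $\beta/3$; together the total failure probability is at most $\beta$, so on the complement both the decomposition bound and the structural description of $\overline{G}$ hold. I would note that $\mathrm{Err}_Q^{\text{edge}}$ is evaluated at budgets $\varepsilon/6\tau^*$ and $\delta/4\tau^*$ precisely because, by Lemma~\ref{lem:clip_edge_dis} and Equation~(\ref{e:n_contribution}), the post-clipping edge distance between node-neighbors is at most $g(G,\tau^*)=\tau^*+N_{\tau^*}(G)\leq 2\tau^*$ (the last inequality holds with probability $1-\delta$ by Lemma~\ref{lem:poly_bound}), and the third step uses $\varepsilon/(6\tau^*)\cdot 2\tau^* = \varepsilon/3$ of budget through group privacy; this is the privacy side, already handled by Theorem~\ref{the:n2e_privacy}, but it explains the exact arguments appearing in the utility statement.

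The main obstacle I anticipate is not any deep inequality but rather making the ``$\overline{G}$ is obtained by clipping edges of $O(\cdot)$ nodes'' statement precise and self-consistent with the failure-probability accounting: one must be careful that the $1-\beta$ probability in the theorem simultaneously covers (i) the utility guarantee of the maximum degree approximation (so that $N_{\tau^*}(G)$ is small, which is what makes the clipping bias small and justifies the ``$O(\log\log\deg(G)/\varepsilon)$ nodes'' phrasing), and (ii) the edge-DP mechanism's own error guarantee — and that the failure parameters $2\beta/3$ and $\beta/3$ fed to the two sub-procedures were chosen to sum to $\beta$. A secondary subtlety is that $\mathrm{Err}_Q^{\text{edge}}$ is a function of the (random) clipped graph $\overline{G}$ and the (random) budget scaling $1/\tau^*$, so the statement should be read as: conditioned on $\tau^*$ and $\overline{G}$, the edge-DP error bound holds with probability $1-\beta/3$; then the claim follows by integrating out. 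I would flag that no assumption on $\mathcal{M}_Q^{\text{edge}}$ beyond it being $(\cdot,\cdot)$-edge-DP with a well-defined error functional $\mathrm{Err}_Q^{\text{edge}}$ is needed, which is what makes the framework fully general.
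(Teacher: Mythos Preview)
Your proposal is correct and follows essentially the same approach as the paper: decompose the error via the triangle inequality, observe that every edge removed by Algorithm~\ref{alg:clip_graph} is incident to a node of degree exceeding $\tau^*$ so that at most $N_{\tau^*}(G)$ nodes are affected, invoke Lemma~\ref{lem:poly_bound} (with failure probability $2\beta/3$) to bound $N_{\tau^*}(G)$, and union-bound with the $\beta/3$ failure of the edge-DP mechanism. The paper's proof is considerably terser but structurally identical; your additional discussion of the privacy-budget scaling and the $2\tau^*$ bound belongs to Theorem~\ref{the:n2e_privacy} rather than this utility theorem and can be omitted here.
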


Specifically, the first component of the error, $|Q(G) - Q(\overline{G})|$, represents the clipping bias introduced by the difference between the query result on the original graph $G$ and the clipped graph $\overline{G}$.
\orange{For example, when $Q$ is the edge count, this bias is exactly the number of clipped edges.}
\blue{
By Theorem~\ref{the:n2e_util}, this component is at most $\Delta Q^{(\rho-1)}$, where $\rho = \tilde{O}(1/\varepsilon)$ with constant probability. 
This implies that only a negligible portion of the graph is affected by clipping. 
Moreover, according to Lemma~\ref{lem:dno}, this part of the error is optimal.
It follows that any structural impact on the graph (e.g., connectivity) introduced by the distance-preserving clipping mechanism is both unavoidable and minimal within the N2E framework.
}




The second component of the error, $\mathrm{Err}_Q^{\text{edge}}(\overline{G}, \varepsilon / 6\tau^*, \delta/4\tau^*, \beta/3)$, captures the error introduced by the edge-DP mechanism. In practice, this error approximates 
$\mathrm{Err}_Q^{\text{edge}}(G,\varepsilon/6\tau^*, \delta/4\tau^*, \beta/3)$, since many edge-DP mechanisms for common graph analytical tasks exhibit similar or smaller error on the clipped graph $\overline{G}$ compared to $G$.
Moreover, by Lemma~\ref{lem:poly_bound} that gives the upper bound of $\tau^*$, this error amplifies the original edge-DP error of $\mathrm{Err}_Q^{\text{edge}}(G, \varepsilon, \delta, \beta)$ by only a factor of $\tilde{O}(\mathrm{deg}(G))$, which achieves our desired utility.

\section{Application}
\label{sec:app}
In this section, we apply our N2E framework to three common graph analytical tasks: edge counting, maximum degree estimation, and degree distribution under node-DP.
We analyze the error for each task individually. 
For convenience, all results stated in this section hold with constant probability.
\orange{Technical details are deferred to the appendix of the full version~\cite{full_version}.}


\paragraph{Edge counting}

We use the Laplace mechanism as the edge-DP mechanism within the N2E framework.
For the clipping bias, the total number of edges clipped to obtain $\overline{G}$ is bounded by $\tilde{O}(\mathrm{deg}(G)/\varepsilon)$. 
The error from the edge-DP mechanism is bounded by $\tilde{O}(\mathrm{deg}(G)/\varepsilon + 1/\varepsilon^2)$. 
As a result, the total error matches our target of upgrading the edge-DP error of $O(1/\varepsilon)$ to node-DP by only a factor of $\tilde{O}(\deg(G))$ with a small additive term of $\tilde{O}(1/\varepsilon)$.

\paragraph{Maximum degree estimation}

Applying the Laplace mechanism within the N2E framework incurs an additive error of $\tilde{O}(\deg(G)/\varepsilon)$, which has limited utility. 
To address this, we adopt Algorithm~\ref{alg:edge_dp_max_deg_esti} as the edge-DP mechanism and use edge distance for evaluation.
This algorithm incurs an edge distance of $\tilde{O}(1/\varepsilon)$ under edge-DP. 
Within the N2E framework, the edge distance from clipping bias is bounded by $\tilde{O}(\mathrm{deg}(G)/\varepsilon)$.
The application of Algorithm~\ref{alg:edge_dp_max_deg_esti} adds an edge distance of $\tilde{O}(\mathrm{deg}(G)/\varepsilon + 1/\varepsilon^2)$.  
Therefore, the total edge distance is asymptotically dominated by the edge-DP mechanism, achieving our target of upgrading the edge-DP error by $\tilde{O}(\mathrm{deg}(G))$ with an additive term of $\tilde{O}(1/\varepsilon)$.




\paragraph{Degree distribution}
A straightforward solution for this task is to integrate the Laplace mechanism into N2E, which introduces a clipping bias of $\tilde{O}(\mathrm{deg}(G)/\varepsilon)$ and an error of $\tilde{O}(\mathrm{deg}^{1.5}(G)/\varepsilon + \mathrm{deg}^{0.5}(G)/\varepsilon^2)$ introduced by the Laplace noise. 
This approach satisfies $(\varepsilon, \delta)$-node-DP and achieves our goal of amplifying the edge-DP error of $O(\mathrm{deg}(G)^{0.5}/\varepsilon)$ by a factor of $\tilde{O}(\mathrm{deg}(G))$, with an additive term of $\tilde{O}(1/\varepsilon)$. 
Additionally, we observe that directly combining our node-DP degree approximation with their clipping mechanism yields a pure $\varepsilon$-node-DP solution with asymptotically the same utility guarantees. 
We adopt this $\varepsilon$-node-DP solution in practice.

\paragraph{Limitation of the N2E framework}
N2E is applicable to any graph analytical tasks with edge-DP solutions, either being $\varepsilon$-edge-DP or being $(\varepsilon, \delta)$-edge-DP.
However, N2E is not suitable for tasks that lack corresponding edge-DP solutions or for simple node-DP tasks with trivial solutions, such as node counting. 
Moreover, the performance of N2E heavily depends on the underlying edge-DP mechanism. 
If the edge-DP mechanism has poor utility or high computational cost, these limitations will carry over to N2E.

\section{Implementation}
\label{sec:implement}

\begin{figure}[htbp]
  \centering
  \includegraphics[width=0.7\linewidth]{./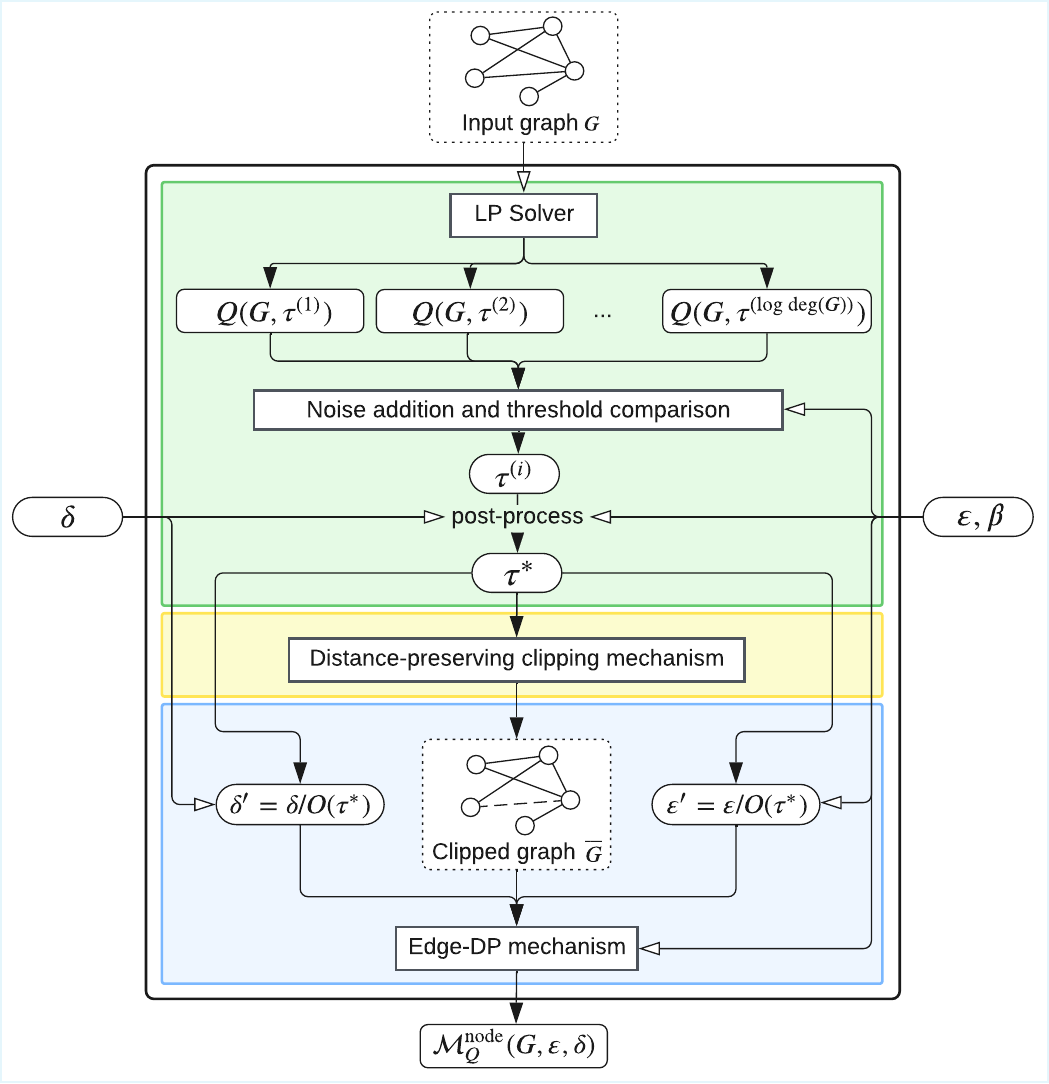}
  \caption{Overview of the N2E framework.}
  \label{fig:workflow}
\end{figure}

We implemented our N2E framework on top of CPLEX~\cite{manual1987ibm}.  
An overview is shown in Figure~\ref{fig:workflow}, where the three main steps are highlighted in different colors.
Specifically, the input graph $G$ is utilized in both the LP construction and the clipping mechanism. 
Regarding input parameters, the framework includes three procedures that consume $\varepsilon$ and $\beta$, two of which are involved in the maximum degree approximation.
The parameter $\delta$ is used in the post-processing step to compute $\tau^*$ and in the edge-DP mechanism.

\paragraph{Runtime optimization.}
In practice, the first step of the N2E framework, maximum degree approximation, accounts for the majority of the runtime in our experiments.  
Although this procedure achieves polynomial-time complexity with LP, solving the LP remains computationally expensive.
This is because the number of variables in the LP, which is equal to the sum of nodes and edges, can be substantial for real-world datasets.  
This issue is further amplified as one LP is to be solved in each SVT iteration, and the SVT typically takes multiple iterations to identify the desired $\tau$. 

To reduce sequential runtime, we first parallelized the LPs in SVT iterations within the maximum degree approximation, as illustrated in Figure~\ref{fig:workflow}. 
The LP result is $0$ for $\tau \geq \mathrm{deg}(G)$.
To avoid unnecessary computation, we implement early termination. 
If any process returns a qualified $\tau$ that exceeds the SVT threshold and all smaller candidates are unqualified, the process pool is terminated early and returns the corresponding $\tau$.

To further reduce the runtime for each candidate $\tau$, we leverage the dual simplex method provided by the LP solver CPLEX, which solves LPs by moving towards the optimal solution from above. 
Since we are only interested in $\tau$ candidates whose LP result exceeds the SVT threshold, we stop the computation early if the current LP upper bound of $\tau$ already falls below the threshold.  
This strategy significantly reduces runtime, especially in cases where resource constraints prevent parallelization.
\section{Experiment}
\label{sec:experiment}
We conduct experiments on the three graph analytical tasks under node-DP discussed in Section~\ref{sec:app}, along with an additional task of two-line path counting, and compare the performance of our N2E framework with state-of-the-art node-DP algorithms.  
For edge counting and two-line path counting, 
we compare against R2T~\cite{dong2022r2t} and OPT$^2$~\cite{dong2024instance}.  
For maximum degree estimation, so far, there is no existing solution under node-DP.
As we have shown before, the naive solution of directly adding Laplace noise to the query result does not have utility.
\red{
Therefore, we compare it with a baseline that returns the node count with Laplace noise, referred to as N-Count.}
For degree distribution, we compare our results with DLL-Histogram~\cite{day2016publishing}.

\subsection{Setup}
\label{subsec:experiment-setup}
\begin{table}[htb!]
\centering
\setlength{\tabcolsep}{6pt}
\resizebox{0.65\columnwidth}{!}{%
\begin{tabular}{l||r|r|r|r}
\toprule
\textbf{Dataset}      & \textbf{Nodes} & \textbf{Edges}  & \textbf{Max Degree} & \blue{\textbf{Connectivity\tablefootnote{\blue{Connectivity is computed as the ratio of intra-cluster to inter-cluster edges based on Louvain clustering~\cite{blondel2008fast}.}}}} \\
\midrule\midrule
EUEmail               & 265,009        & 364,481         & 7,636               & \blue{0.21} \\
Deezer                & 143,884        & 846,915         & 420                 & \blue{0.17} \\
Amazon1               & 262,111        & 899,792         & 420                 & \blue{0.08} \\
Amazon2               & 334,863        & 925,872         & 549                 & \blue{0.06} \\
DBLP                  & 317,080        & 1,049,866       & 343                 & \blue{0.18} \\
Amazon0601            & 403,394        & 2,443,408       & 2,752               & \blue{0.10} \\
Web-Google            & 875,713        & 4,322,051       & 6,332               & \blue{0.01} \\
\bottomrule
\end{tabular}%
}
\caption{Summary of graph datasets used in the experiment.}
\label{tab:dataset}
\end{table}

\paragraph{Datasets}
Our experiments use seven real-world graph datasets: EUEmail, Deezer, Amazon1, Amazon2, DBLP, Amazon0601, and Web-Google, all obtained from SNAP~\cite{leskovec2014snap}. 
These datasets span various domains, including e-commerce, academic collaboration, email communication, and web networks.  
Details of all datasets are provided in Table~\ref{tab:dataset}.  
All graphs are preprocessed to be undirected for consistency.
For the Deezer and Amazon1 datasets, we follow the same preprocessing procedure as in~\cite{dong2022r2t} to match their experimental results.

\paragraph{Queries}
We use four node-DP graph analysis queries: edge counting ($Q_{\mathrm{EC}}$), two-line path counting ($Q_{\mathrm{TP}}$), maximum degree estimation ($Q_{\mathrm{MD}}$), and degree distribution ($Q_{\mathrm{DD}}$).

For $Q_{\mathrm{EC}}$ and $Q_{\mathrm{TP}}$, we use additive error as the evaluation metric.  
For $Q_{\mathrm{MD}}$, we use the rank error\footnote{\red{
For results exceeding the true value, we compute the rank error as the absolute error.
}
}
for evaluation.
For $Q_{\mathrm{DD}}$, we construct the degree histogram on a logarithmic scale of degrees and report the L1 error for evaluation.


\paragraph{Experimental parameters}
All experiments are conducted on a Linux server equipped with dual Intel Xeon 22-core 2.20~GHz CPUs and 1~TB of memory.
Each experiment is repeated for $10$ rounds, and we report the average result after excluding the top $2$ and bottom $2$ runs to mitigate the randomness of the DP mechanism. 
All algorithms are limited to a $6$-hour runtime; any algorithm exceeding this limit is terminated.

We set $\beta = 0.1$ and $\delta = 2^{-30}$ in all experiments.  
By default, we set $\varepsilon = 0.8$ for $Q_{\mathrm{EC}}$, $Q_{\mathrm{TP}}$ and $Q_{\mathrm{MD}}$.  
Since the degree histogram involves multiple queries, we relax the default $\varepsilon$ to $3.2$ for $Q_{\mathrm{DD}}$.
\orange{
The division of $\varepsilon$, $\delta$, and $\beta$ across the procedures in Figure~\ref{fig:workflow} is flexible and does not asymptotically affect the theoretical guarantees.  
Based on empirical results, we allocate $(\varepsilon, \delta, \beta)$ as $(20\%, 0\%, 20\%)$, $(20\%, 100\%, 0.01\%)$, and $(60\%, 0\%, 79.99\%)$ to the first, second, and third procedures, respectively.
}
\red{Both R2T and DLL-Histogram rely on a degree upper bound, which is set to the smallest power of $2$ exceeding the number of nodes for each dataset. For example, the bound is set to $2^{18} = 262{,}144$ for Deezer.}


\subsection{Experiment Results}
\begin{table*}[htb!]
\centering
\renewcommand{\arraystretch}{1.1}
\setlength{\tabcolsep}{6pt}
\resizebox{1\textwidth}{!}{
\begin{tabular}{c|l|c||c|c|c|c|c|c|c}
\toprule
\multicolumn{3}{c||}{\textbf{Dataset}} 
& \textbf{EUEmail}
& \textbf{Deezer}
& \textbf{Amazon1}
& \textbf{Amazon2}
& \textbf{DBLP}
& \textbf{Amazon0601}
& \textbf{Web-Google} \\
\midrule\midrule

\multirow{10}{*}{$Q_{\mathrm{EC}}$} 
  & \multicolumn{2}{c||}{Query result} 
  & $3.6 \times 10^5$
  & $8.5 \times 10^5$
  & $9.0 \times 10^5$ 
  & $9.3 \times 10^5$ 
  & $1.0 \times 10^6$
  & $2.4 \times 10^6$ 
  & $4.3 \times 10^6$ \\
\cmidrule(lr){2-10}
  & \multirow{3}{*}{N2E} 
  & Rel. err (\%) 
  & \cellcolor{gray!35}6.20 & \cellcolor{gray!35}0.18 & \cellcolor{gray!35}0.12 & \cellcolor{gray!35}0.19 & \cellcolor{gray!35}0.23 & \cellcolor{gray!35}0.24 & \cellcolor{gray!35}0.76 \\
  & & Err. std (\%) 
  & \orange{0.77} & \orange{0.07} & \orange{0.07} & \orange{0.17} & \orange{0.14} & \orange{0.09} & \orange{0.20} \\
  & & Time (s) 
  & 279.85 & 66.39 & 65.41 & 81.66 & 71.65 & 267.66 & 1363.91 \\
\cmidrule(lr){2-10}
  & \multirow{3}{*}{OPT$^2$} 
  & Rel. err (\%) 
  & 10.51 & 0.47 & 0.35 & 0.23 & 0.45 & 0.40 & 1.73 \\
  & & Err. std (\%) 
  & \orange{1.37} & \orange{0.05} & \orange{0.02} & \orange{0.02} & \orange{0.02} & \orange{0.05} & \orange{0.94} \\
  & & Time (s) 
  & 76.46 & 170.87 & 180.86 & 141.66 & 181.56 & 521.03 & 1370.32 \\
\cmidrule(lr){2-10}
  & \multirow{3}{*}{\red{R2T}} 
  & Rel. err (\%) 
  & \red{23.32} & \red{0.98} & \red{0.87} & \red{0.75} & \red{0.71} & \red{1.72} & \red{5.68} \\
  & & Err. std (\%) 
  & \orange{1.02} & \orange{0.16} & \orange{0.13} & \orange{0.04} & \orange{0.03} & \orange{0.18} & \orange{0.38} \\
  & & Time (s) 
  & \cellcolor{gray!12}\red{3.32} & \cellcolor{gray!12}\red{10.43} & \cellcolor{gray!12}\red{10.70} & \cellcolor{gray!12}\red{9.94} & \cellcolor{gray!12}\red{11.78} & \cellcolor{gray!12}\red{26.38} & \cellcolor{gray!12}\red{55.86} \\
\midrule\midrule

\multirow{10}{*}{$Q_{\mathrm{TP}}$} 
  & \multicolumn{2}{c||}{Query result} 
  & $2.0 \times 10^8$
  & $2.2 \times 10^7$
  & $9.1 \times 10^6$ 
  & $9.8 \times 10^6$ 
  & $2.2 \times 10^7$ 
  & $7.2 \times 10^7$ 
  & $7.3 \times 10^8$ \\
\cmidrule(lr){2-10}
  & \multirow{3}{*}{N2E} 
  & Rel. err (\%) 
  & \cellcolor{gray!35}43.13 & \cellcolor{gray!35}3.16 & \cellcolor{gray!35}11.34 & 9.52 & 7.43 & \cellcolor{gray!35}10.71 & \cellcolor{gray!35}9.33 \\
  & & Err. std (\%) 
  & \orange{2.84} & \orange{1.54} & \orange{3.51} & \orange{2.58} & \orange{3.09} & \orange{4.16} & \orange{3.48} \\
  & & Time (s) 
  & \cellcolor{gray!12}269.00 & \cellcolor{gray!12}66.56 & \cellcolor{gray!12}65.72 & \cellcolor{gray!12}79.59 & \cellcolor{gray!12}68.88 & \cellcolor{gray!12}270.52 & \cellcolor{gray!12}1358.04 \\
\cmidrule(lr){2-10}
  & \multirow{3}{*}{OPT$^2$}  & Rel. err (\%) 
  & \multicolumn{3}{c|}{\multirow{3}{*}{\centering\textit{Over time limit}}} & \cellcolor{gray!35}4.91 & \multicolumn{3}{c}{\multirow{3}{*}{\centering\textit{Over time limit}}} \\
  &         & Err. std (\%) 
  & \multicolumn{3}{c|}{} & \orange{0.20} & \multicolumn{3}{c}{} \\
  &         & Time (s) 
  & \multicolumn{3}{c|}{} & 12498.95 & \multicolumn{3}{c}{} \\
\cmidrule(lr){2-10}
  & \multirow{3}{*}{\red{R2T}}  & Rel. err (\%) 
  & \multicolumn{1}{c|}{\multirow{3}{*}{\textit{Out of memory}}} & \red{4.94} & \red{11.97} & \red{8.92} & \cellcolor{gray!35}\red{6.22} & \red{24.35} & \multicolumn{1}{c}{\multirow{3}{*}{\textit{Out of memory}}} \\
  &     & Err. std (\%) 
  &     & \orange{1.04} & \orange{0.31} & \orange{0.19} & \orange{0.82} & \orange{1.88} & \\
  &     & Time (s) 
  &     & \red{280.20} & \red{113.80} & \red{122.51} & \red{284.82} & \red{953.21} & \\
\midrule\midrule

\multirow{8}{*}{$Q_{\mathrm{MD}}$} 
  & \multicolumn{2}{c||}{Query result} 
  & 7,636 & 420 & 420 & 549 & 343 & 2,752 & 6,332 \\
\cmidrule(lr){2-10}
  & \multirow{3}{*}{N2E} 
  & Rel. err (\%) 
  & \cellcolor{gray!35}0.37 & \cellcolor{gray!35}3.10 & \cellcolor{gray!35}1.87 & \cellcolor{gray!35}1.12 & \cellcolor{gray!35}1.10 & \cellcolor{gray!35}0.65 & \cellcolor{gray!35}0.18 \\
  & & Err. std (\%) 
  & \orange{0.12} & \orange{0.40} & \orange{0.24} & \orange{0.33} & \orange{0.09} & \orange{0.08} & \orange{0.09} \\
  & & Time (s) 
  & 269.22 & 66.89 & 65.99 & 80.09 & 70.22 & 271.99 & 1359.70 \\
\cmidrule(lr){2-10}
  & \multirow{3}{*}{\red{N-Count}} 
  & Rel. err (\%) 
  & \red{3370.52} & \red{34158.06} & \red{62307.38} & \red{60894.99} & \red{92343.05} & \red{14558.22} & \red{13729.96} \\
  & & Err. std (\%) 
  & \orange{0.01} & \orange{0.10} & 
  \orange{0.15} & \orange{0.10} & \orange{0.24} & \orange{0.04} & 
  \orange{0.00} \\
    & 
  & Time (s) 
  & \cellcolor{gray!12}- 
  & \cellcolor{gray!12}- 
  & \cellcolor{gray!12}- 
  & \cellcolor{gray!12}- 
  & \cellcolor{gray!12}- 
  & \cellcolor{gray!12}- 
  & \cellcolor{gray!12}- \\
\midrule\midrule

\multirow{8}{*}{$Q_{\mathrm{DD}}$} 
  & \multicolumn{2}{c||}{Query result} 
  & $2.6 \times 10^5$ 
  & $1.4 \times 10^5$ 
  & $2.6 \times 10^5$ 
  & $3.4 \times 10^5$
  & $3.2 \times 10^5$
  & $4.0 \times 10^5$
  & $8.8 \times 10^5$ \\
\cmidrule(lr){2-10}
  & \multirow{3}{*}{N2E} 
  & Rel. err (\%) 
  & \cellcolor{gray!35}8.17 & \cellcolor{gray!35}3.38 & \cellcolor{gray!35}1.21 & \cellcolor{gray!35}1.05 & \cellcolor{gray!35}1.94 & \cellcolor{gray!35}3.75 & \cellcolor{gray!35}8.76 \\
  & & Err. std (\%) 
  & \orange{1.20} & \orange{0.93} & \orange{0.11} & \orange{0.32} & \orange{0.29} & \orange{0.36} & \orange{1.79} \\
  & & Time (s) 
  & 30.10 & 63.67 & 64.39 & 65.95 & 72.68 & 153.51 & 272.08 \\
\cmidrule(lr){2-10}
  & \multirow{3}{*}{\red{DLL-Histogram}} 
  & Rel. err (\%) 
  & \red{23.96} & \red{86.65} & \red{90.17} & \red{83.82} & \red{33.47} & \red{17.24} & \red{53.01} \\
  & & Err. std (\%) 
  & \orange{1.72} & \orange{28.45} & \orange{72.77} & \orange{79.19} & \orange{20.02} & \orange{13.23} & \orange{44.56} \\
  & & Time (s) 
  & \cellcolor{gray!12}\red{12.18} & \cellcolor{gray!12}\red{8.27} & \cellcolor{gray!12}\red{15.27} & \cellcolor{gray!12}\red{20.20} & \cellcolor{gray!12}\red{21.41} & \cellcolor{gray!12}\red{33.34} & \cellcolor{gray!12}\red{66.74} \\
\bottomrule
\end{tabular}}
\caption{Mean relative error (\%), relative error std (\%), and runtime (s) for all methods across four query types on seven real-world datasets. 
The top row in each query block gives the true query result.}
\label{tab:task1}
\end{table*}

\label{subsec:experiment-result}
\paragraph{Utility and efficiency}

The experimental results, including relative errors with standard deviations and runtimes across all queries and datasets, are summarized in Table~\ref{tab:task1}.
\orange{N2E consistently achieves higher utility in the majority of query-dataset combinations, considering both the mean and standard deviation of the relative errors.}



For $Q_{\mathrm{EC}}$, N2E achieves a relative error below $10\%$ for all datasets, and under $1\%$ for six out of seven datasets. 
Compared to the two baseline methods, N2E improves upon the error of OPT$^2$ by up to $2.5\times$, and R2T by up to \red{$7.5\times$}. 
In terms of runtime, N2E is faster than OPT$^2$ on six out of seven datasets and performs particularly well on graphs with smaller maximum degrees.

For $Q_{\mathrm{TP}}$, N2E is the only mechanism that completes on all datasets within the given time and memory constraints. 
OPT$^2$ fails to run on most datasets due to exceeding the time limit, while R2T runs out of memory on two datasets. 
Among the cases where baselines complete, N2E achieves comparable errors while reducing runtime by over $100\times$.
Notably, the error of N2E on the EUEmail dataset is relatively large due to clipping bias, which truncates the high contributions from nodes with very large degrees.

For $Q_{\mathrm{MD}}$, N2E maintains a relative rank error below $4\%$ across all datasets, demonstrating its robustness to variations in the maximum degree. 
\red{
In comparison, N-Count provides no utility, exhibiting relative errors exceeding $3,000\%$ on all datasets.
}
The runtime of N2E for $Q_{\mathrm{MD}}$ is nearly identical to that for $Q_{\mathrm{EC}}$ and $Q_{\mathrm{TP}}$ across all datasets. 
This is expected, as all three queries follow the same steps of the N2E framework and involve an edge-DP mechanism with similar computational overhead.

For $Q_{\mathrm{DD}}$, N2E outperforms DLL-Histogram with up to \red{$80\times$} reduction in relative error across all datasets, albeit with up to \red{$8\times$} higher runtime. 
Across datasets, N2E demonstrates better utility in graphs with lower maximum degrees, as higher degrees lead to larger noise scales added to the histogram.
The runtime of N2E on $Q_{\mathrm{DD}}$ is lower than that on $Q_{\mathrm{EC}}$, $Q_{\mathrm{TP}}$, and $Q_{\mathrm{MD}}$. 
This arises from the distinct clipping mechanism used. 
In the latter three queries, the distance-preserving clipping mechanism as defined in Algorithm~\ref{alg:clip_graph} evaluates edge rankings of both endpoints, introducing significant computational overhead for graphs with many edges. 
In contrast, the clipping mechanism in~\cite{day2016publishing} used for $Q_{\mathrm{DD}}$ considers only the degrees of the endpoints, resulting in reduced computational cost.

\paragraph{Consistency with theoretical guarantees}

As shown in Section~\ref{sec:app}, the absolute errors of $Q_{\mathrm{EC}}$ and $Q_{\mathrm{DD}}$ depend heavily on the maximum degree of the input graph, with higher maximum degrees leading to larger errors. 
Theoretical results of $Q_{\mathrm{TP}}$ also exhibit this dependency. 
However, since we use relative error for evaluation, the results are influenced by factors such as graph size and skewness, i.e., the uniformity of the degree distribution.
In general, datasets with larger graph sizes and less skewed degree distributions tend to exhibit lower relative errors.
For example, DBLP generally yields lower relative errors on the three queries due to its medium graph size and more balanced degree distribution.
For $Q_{\mathrm{MD}}$, the theoretical guarantee is provided in terms of edge distance.
Since we evaluate using relative rank error, additional factors beyond graph size and skewness also influence the results.
As a result, the performance trend remains consistent across the first three queries, but does not extend to $Q_{\mathrm{MD}}$.


Cluster-level connectivity appears to have minimal impact on query performance. 
For example, in $Q_{\mathrm{TP}}$, the Web-Google dataset achieves a relative error of $9.33\%$, nearly identical to the $10.71\%$ eror observed on Amazon0601, despite a $10\times$ difference in connectivity.

\paragraph{Privacy parameter $\varepsilon$.}

\begin{figure*}[h]
    \centering
    \begin{subfigure}[b]{\textwidth}
        \centering
        \includegraphics[width=1\textwidth]{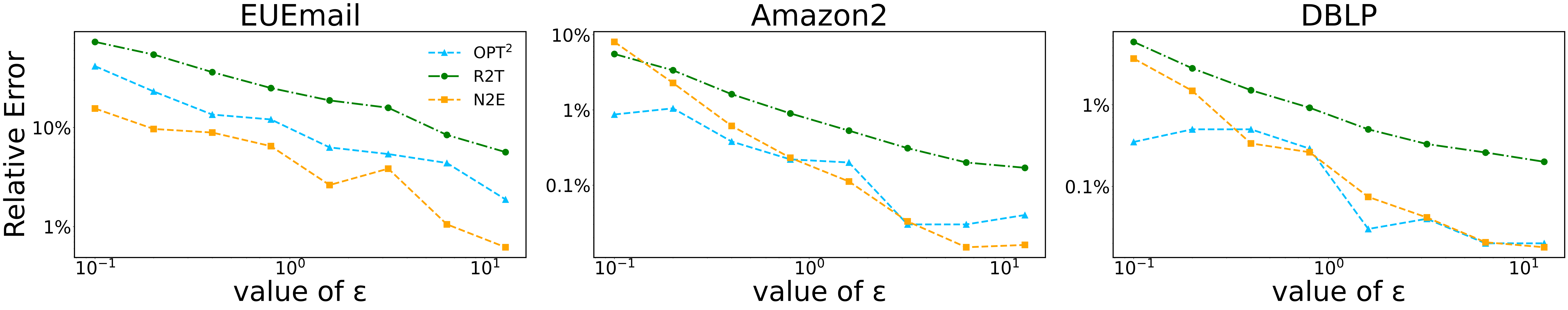}
        \caption{
        \red{Relative error (\%) for $Q_{\mathrm{EC}}$ on the EUEmail, Amazon2, and DBLP datasets across varying $\varepsilon$ values.}}
        \label{fig:sub1}
    \end{subfigure}

\vspace{1em}
    
    \begin{subfigure}[b]{\textwidth}
        \centering
        \includegraphics[width=1\textwidth]{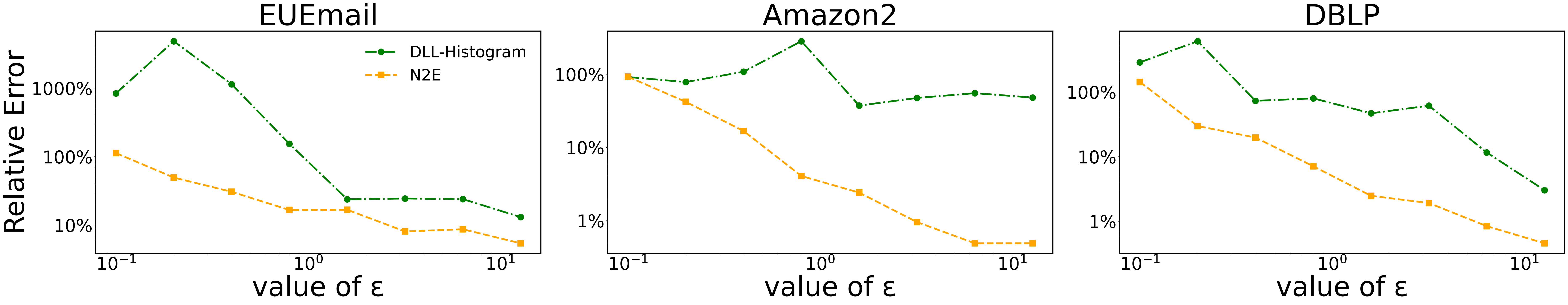}
        \caption{\red{Relative error (\%) for $Q_{\mathrm{DD}}$ on the EUEmail, Amazon2, and DBLP datasets across varying $\varepsilon$ values.}}
        \label{fig:sub3}
    \end{subfigure}
    
    \caption{Relative error (\%) of N2E and baseline methods for $Q_{\mathrm{EC}}$ and $Q_{\mathrm{DD}}$ on three datasets under varying privacy budgets $\varepsilon$.}
    \label{fig:task2}
\end{figure*}

To compare with the performance of N2E and baseline methods under different privacy guarantees, we vary the privacy budget $\varepsilon$ on a logarithmic scale from $0.1$ to $12.8$. 
Experiments are conducted on the EUEmail, Amazon2, and DBLP datasets to assess performance across two graph analytical tasks $Q_{\mathrm{EC}}$ and $Q_{\mathrm{DD}}$. 
The results are shown in Figure~\ref{fig:task2}. 
As shown, all methods demonstrate improved utility as $\varepsilon$ increases.
For $Q_{\mathrm{EC}}$, N2E outperforms R2T and achieves utility comparable to OPT$^2$, which is provably optimal. 
For $Q_{\mathrm{DD}}$, N2E generally outperforms the baseline method across the entire range of $\varepsilon$.
While DLL-Histogram exhibits high variability with different $\varepsilon$ values due to randomness, N2E yields more stable results.

Furthermore, N2E generally achieves higher utility on $Q_{\mathrm{EC}}$ than on $Q_{\mathrm{DD}}$ under small $\varepsilon$ values, due to the higher base value of the edge counting result. 
In contrast, the utility for $Q_{\mathrm{DD}}$ is more affected under small $\varepsilon$ because it involves answering multiple queries while having a smaller base value. 
\paragraph{Optimality of $\tau^*$.}

\begin{table}[htbp!]
\centering
\setlength{\tabcolsep}{6pt}
\resizebox{0.55\columnwidth}{!}{
\begin{tabular}{l||r|r|r}
\toprule
\textbf{Dataset} & \textbf{EUEmail} & \textbf{DBLP} & \textbf{Amazon0601} \\
\midrule\midrule
N2E (Rel. error (\%))     & 6.20     & 0.23    & 0.24  \\
\midrule\midrule
\(\tau^* = 1\)       & 95.44     & 99.83    & 99.61 \\
\(\tau^* = 4\)       & 93.14     & 85.08    & 92.30   \\
\(\tau^* = 16\)      & 88.40   & 35.98      & 30.74    \\
\(\tau^* = 64\)      & 75.26   & 4.38       & 6.22     \\
\(\tau^* = 256\)     & 44.49  & \cellcolor{gray!35}0.036 & 0.95     \\
\(\tau^* = 1024\)    & 11.93   & 0.22   & \cellcolor{gray!35}0.18    \\
\(\tau^* = 4096\)    & \cellcolor{gray!35}3.85 & 0.84     & 0.43   \\
\(\tau^* = 16384\)   & 14.42   & 1.86   & 1.16   \\
\bottomrule
\end{tabular}
}
\caption{Relative error (\%) of $Q_{\mathrm{EC}}$ on three datasets using N2E and different clipping thresholds \(\tau^*\).}
\label{tab:task3_ec}
\end{table}

\begin{table}[htbp!]
\centering
\setlength{\tabcolsep}{6pt}
\resizebox{0.55\columnwidth}{!}{
\begin{tabular}{l||r|r|r}
\toprule
\textbf{Dataset} & \textbf{EUEmail} & \textbf{DBLP} & \textbf{Amazon0601} \\
\midrule\midrule
N2E (Rel. error (\%))   & 8.17     & 1.94    & 3.75  \\
\midrule\midrule
\(\tau^* = 1\)       & 29.47     & 172.76 & 194.24 \\
\(\tau^* = 4\)       & 25.07     & 55.93   & 151.26   \\
\(\tau^* = 16\)      & 24.35   & 20.44   & 31.88    \\
\(\tau^* = 64\)      & 20.51   & 1.93 & 5.90     \\
\(\tau^* = 256\)     & 10.58   & \cellcolor{gray!35}0.37 & \cellcolor{gray!35}1.07     \\
\(\tau^* = 1024\)    & \cellcolor{gray!35}2.79   & 1.39 & 1.24    \\
\(\tau^* = 4096\)    & 9.02   & 6.76 & 6.28   \\
\(\tau^* = 16384\)   & 36.19   & 24.89 & 17.45   \\
\bottomrule
\end{tabular}
}
\caption{Relative error (\%) of $Q_{\mathrm{DD}}$ on three datasets using N2E and different clipping thresholds \(\tau^*\).}
\label{tab:task3_dd}
\end{table}

N2E relies on the maximum degree approximation $\tau^*$ as the degree upper bound to clip the input graph. 
To assess the performance of $\tau^*$ within N2E, we manually set different values of $\tau^*$ on a logarithmic scale as the clipping threshold, and evaluate the results on three datasets: EUEmail, DBLP, and Amazon0601.
The results for $Q_{\mathrm{EC}}$ and $Q_{\mathrm{DD}}$ are presented in Table~\ref{tab:task3_ec} and Table~\ref{tab:task3_dd}, respectively.

Compared to the optimal value of $\tau^*$, N2E incurs only a constant-factor utility loss in error. 
This is primarily due to the allocation of part of the privacy budget for computing $\tau^*$. 
Additionally, when comparing $Q_{\mathrm{EC}}$ and $Q_{\mathrm{DD}}$, the optimal value of $\tau^*$ can vary across queries on the same dataset. 
Despite this variation, the results show that the maximum degree approximation of N2E consistently delivers strong performance.


\section{Future Work}
\label{sec:futurework}

One direction for future exploration is extending the N2E framework to hypergraphs. Unlike conventional graphs, where there is at most one edge between any pair of nodes, hypergraphs allow hyperedges that can connect multiple nodes, posing new challenges for privacy preservation. 
Another promising direction is adapting N2E to decentralized settings, where the graph data is distributed across multiple parties, and each party has access only to an individual node and its adjacent edges. 
\blue{
A further potential direction is applying N2E to dynamic graphs, where the graph structure evolves over time.  
A key challenge in this setting is preserving utility, as each query on an updated graph consumes part of the privacy budget.
}

\section*{Acknowledgments}

This research is supported by the NTU–NAP Startup Grant (024584-00001) and the Singapore Ministry of Education Tier 1 Grant (RG19/25).
We would also like to thank the anonymous reviewers who have made valuable suggestions on improving the presentation of the paper.

\newpage
\bibliographystyle{ACM-Reference-Format}
\bibliography{FULL_version}

\newpage
\appendix
\newpage
\section{Appendix}

\begin{lemma}
Given any $\varepsilon$ and $\beta$, for any graph $G$, Algorithm~\ref{alg:edge_dp_max_deg_esti} outputs a $\tau$ such that with probability at least $1 - \beta$, $\tau \leq \mathrm{deg}(G)$, and 
\begin{equation}
    |Q_{\text{\emph{Del-Deg}}}(G, \tau)| \leq \frac{4}{\varepsilon}\ln\mathrm{deg}(G) + \frac{8}{\varepsilon}\ln\frac{2}{\beta}.
\end{equation}
\end{lemma}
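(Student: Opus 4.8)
The plan is to recognize Algorithm~\ref{alg:edge_dp_max_deg_esti} as a direct instantiation of SVT (Algorithm~\ref{alg:svt}): the candidate indices $\ell = 1, 2, 3, \dots$ correspond exactly to the degree thresholds $\tau = 1, 2, 3, \dots$, the query sequence is $Q_\tau(\cdot) = Q_{\text{Del-Deg}}(\cdot, \tau)$, the threshold is $T = -4\ln(2/\beta)/\varepsilon$, and the noise parameter is $c = 1$. By Lemma~\ref{lem:clip_deg_num}, each $Q_{\text{Del-Deg}}(\cdot, \tau)$ has global sensitivity $1$ under edge-DP and is sensitivity-monotonic, so Lemma~\ref{lem:svt} applies. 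This yields the $\varepsilon$-edge-DP guarantee in passing, and, more to the point here, the quantitative tail bound we need for the utility claim.

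To invoke the ``moreover'' part of Lemma~\ref{lem:svt}, I would take $k = \mathrm{deg}(G)$. By the monotonicity property in Lemma~\ref{lem:clip_deg_num}, $Q_{\text{Del-Deg}}(G, \mathrm{deg}(G)) = 0$; since $T + 4\ln(2/\beta)/\varepsilon = 0$ by construction, the precondition $Q_k(G) \geq T + 4\ln(2/\beta)/\varepsilon$ holds (with equality). Lemma~\ref{lem:svt} then guarantees that, with probability at least $1 - \beta$, the algorithm returns some $\tau = \ell$ with $\ell \leq k$ and $Q_\ell(G) \geq T - \frac{4}{\varepsilon}\ln\frac{2k}{\beta}$. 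The bound $\ell \leq k = \mathrm{deg}(G)$ is precisely the claim $\tau \leq \mathrm{deg}(G)$.

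For the magnitude bound, substitute $T = -\frac{4}{\varepsilon}\ln\frac{2}{\beta}$ and $k = \mathrm{deg}(G)$ into the lower bound on $Q_\ell(G)$ and split $\ln\frac{2\mathrm{deg}(G)}{\beta} = \ln\mathrm{deg}(G) + \ln\frac{2}{\beta}$, which gives $Q_\ell(G) \geq -\frac{4}{\varepsilon}\ln\mathrm{deg}(G) - \frac{8}{\varepsilon}\ln\frac{2}{\beta}$. Since $Q_{\text{Del-Deg}}(G, \tau) \leq 0$ for every $\tau$ (it is the negation of a sum of nonnegative terms), we conclude $|Q_{\text{Del-Deg}}(G, \tau)| = -Q_\ell(G) \leq \frac{4}{\varepsilon}\ln\mathrm{deg}(G) + \frac{8}{\varepsilon}\ln\frac{2}{\beta}$, as required. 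If $G$ has no edges then $\mathrm{deg}(G) = 0$ and one may simply assume $G$ is nonempty, or observe that the algorithm returns $\tau = 1$ with $Q_{\text{Del-Deg}} = 0$ trivially.

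The computation is routine; the only care needed is in matching SVT's indexing to the degree thresholds and in checking the precondition of Lemma~\ref{lem:svt} at the boundary $k = \mathrm{deg}(G)$, where it holds with equality — admissible because that precondition is stated with ``$\geq$''. Had it been strict, one would instead take $k = \mathrm{deg}(G) + 1$, for which $Q_k(G) = 0$ as well, at the cost of a negligibly larger logarithm. I do not anticipate any substantive obstacle beyond this bookkeeping.
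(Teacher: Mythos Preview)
Your proposal is correct and follows essentially the same approach as the paper: both invoke Lemma~\ref{lem:svt} with $k=\mathrm{deg}(G)$, use the monotonicity property from Lemma~\ref{lem:clip_deg_num} to verify the precondition $Q_k(G)=0=T+\tfrac{4}{\varepsilon}\ln\tfrac{2}{\beta}$, and then substitute and split the logarithm to obtain the stated bound. Your write-up is more explicit about the SVT instantiation, the sign of $Q_{\text{Del-Deg}}$, and the degenerate empty-graph case, but the argument is the same.
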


\begin{proof}
For $\tau \geq \mathrm{deg}(G)$, we have $Q_{\text{Del-Deg}}(G, \tau) = 0 = T + {4} \ln (2/\beta)/ {\varepsilon}$. By Lemma~\ref{lem:svt}, with probability at least $1 - \beta$, Algorithm~\ref{alg:edge_dp_max_deg_esti} returns a $\tau$ such that $\tau \leq \mathrm{deg}(G)$ and
\begin{equation*}
    Q_{\text{Del-Deg}}(G, \tau) \geq T - \frac{4}{\varepsilon} \ln \frac{2\mathrm{deg}(G)}{\beta}  = -\frac{4}{\varepsilon} \ln \mathrm{deg}(G) - \frac{8}{\varepsilon} \ln \frac{2}{\beta}.
\end{equation*}
Thus, the lemma follows.
\end{proof}

\begin{lemma}
\label{lem:appendix_exp_utility}
Given any $\varepsilon$, $\beta$, and $\delta$, for any graph $G$, Algorithm~\ref{alg:exp_node_max_deg_approx} returns a $\tau^*$ such that with probability at least $1-\beta$,
\begin{equation*}
    \tau^* \leq 2\mathrm{deg}(G) + \frac{8}{\varepsilon}\ln\log(4\mathrm{deg}(G)) + \frac{16}{\varepsilon}\ln\frac{4}{\beta} + \frac{4}{\varepsilon}\ln\max(\frac{1}{\delta}, \frac{2}{\beta})+ 1,
\end{equation*}
and 
\begin{equation*}
    N_{\tau^*}(G) \leq \frac{8}{\varepsilon}\ln\log(4\mathrm{deg}(G)) + \frac{16}{\varepsilon}\ln\frac{4}{\beta}.
\end{equation*}
Furthermore, with probability at least $1-\delta$, 
\begin{equation*}
    \tau^*+ N_{\tau^*}(G) \leq 2\tau^*. 
\end{equation*}

\end{lemma}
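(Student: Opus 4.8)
The plan is to reduce the claim to three ingredients already in hand: the SVT utility bound (Lemma~\ref{lem:svt}), the deterministic combinatorial estimate behind~(\ref{e:ntau_bound}), and one-sided Laplace tail bounds, glued together by a union bound.

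First I would read the loop of Algorithm~\ref{alg:exp_node_max_deg_approx} as an instance of SVT run with budget $\varepsilon/2$ on the query sequence $Q_\ell = Q_{\text{Del-N}}(G, 2^{\ell-1})$; by Lemma~\ref{lem:clip_node_num} each $Q_\ell$ has global sensitivity $1$ and is sensitivity-monotonic, so Lemma~\ref{lem:svt} applies with $c=1$. Since $Q_{\text{Del-N}}(G,\tau)=0$ once $\tau\geq\mathrm{deg}(G)$, the smallest index $k^\star$ with $2^{k^\star-1}\geq\mathrm{deg}(G)$ satisfies $Q_{k^\star}(G)=0=T+\tfrac{8}{\varepsilon}\ln\tfrac{4}{\beta}$ (recall $T=-\tfrac{8}{\varepsilon}\ln\tfrac4\beta$), and moreover $k^\star\leq\log(4\mathrm{deg}(G))$ while $2^{k^\star-1}\leq 2\mathrm{deg}(G)$. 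Invoking Lemma~\ref{lem:svt} with failure probability $\beta/2$ then yields, with probability $\geq 1-\beta/2$, a returned $\tau=2^{\ell-1}$ with $\ell\leq k^\star$, so $\tau\leq 2\mathrm{deg}(G)$, together with $Q_{\text{Del-N}}(G,\tau)\geq T-\tfrac{8}{\varepsilon}\ln\tfrac{4k^\star}{\beta}$. Because $Q_{\text{Del-N}}\leq 0$ this reads $|Q_{\text{Del-N}}(G,\tau)|\leq\tfrac{8}{\varepsilon}\ln\tfrac4\beta+\tfrac{8}{\varepsilon}\ln\tfrac{4k^\star}{\beta}$; expanding $\ln\tfrac{4k^\star}{\beta}=\ln k^\star+\ln\tfrac4\beta$ and using $k^\star\leq\log(4\mathrm{deg}(G))$ collapses the right side to $\tfrac{8}{\varepsilon}\ln\log(4\mathrm{deg}(G))+\tfrac{16}{\varepsilon}\ln\tfrac4\beta$.

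Next I would translate this into the claims about $\tau^*$. Set $\tau'=\tau+|Q_{\text{Del-N}}(G,\tau)|+1$, so $\tau^*=\tau'+\mathrm{Lap}(2/\varepsilon)+\tfrac{2}{\varepsilon}\ln\max(\tfrac1\delta,\tfrac2\beta)$. The reasoning behind~(\ref{e:ntau_bound}) is deterministic: deleting an optimal set of $|Q_{\text{Del-N}}(G,\tau)|$ nodes leaves a subgraph of maximum degree $\leq\tau$, so every surviving node has degree $\leq\tau+|Q_{\text{Del-N}}(G,\tau)|$ in $G$, whence $N_{\tau'}(G)\leq|Q_{\text{Del-N}}(G,\tau)|<\tau'$ (the strict inequality since $\tau\geq 1$). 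Therefore, on the event $\tau^*\geq\tau'$, monotonicity of $N_\tau(G)$ in $\tau$ gives $N_{\tau^*}(G)\leq N_{\tau'}(G)\leq|Q_{\text{Del-N}}(G,\tau)|<\tau'\leq\tau^*$, which simultaneously supplies the stated bound on $N_{\tau^*}(G)$ and the inequality $\tau^*+N_{\tau^*}(G)\leq 2\tau^*$. For the upper bound on $\tau^*$ I would just add up $\tau\leq 2\mathrm{deg}(G)$, the bound on $|Q_{\text{Del-N}}(G,\tau)|$ from the previous step, the deterministic term $\tfrac2\varepsilon\ln\max(\tfrac1\delta,\tfrac2\beta)+1$ in Line~8, and a high-probability upper bound $\tfrac2\varepsilon\ln\max(\tfrac1\delta,\tfrac2\beta)$ on the Laplace noise.

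Finally I would handle the Laplace noise and assemble the failure probabilities, all of which are independent of the SVT randomness. The upper tail $\Pr[\mathrm{Lap}(2/\varepsilon)>\tfrac2\varepsilon\ln\max(\tfrac1\delta,\tfrac2\beta)]=\tfrac12\min(\delta,\beta/2)\leq\tfrac\beta4$ takes care of the $\tau^*$ upper bound, and the lower tail $\Pr[\mathrm{Lap}(2/\varepsilon)<-\tfrac2\varepsilon\ln\tfrac2\beta]=\tfrac\beta4$ takes care of the event $\tau^*\geq\tau'$ needed above; a union bound over these two events and the SVT event ($\tfrac\beta2+\tfrac\beta4+\tfrac\beta4=\beta$) proves the two $(1-\beta)$-claims. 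The last claim needs only $\tau^*\geq\tau'$, and since $\max(\tfrac1\delta,\tfrac2\beta)\geq\tfrac1\delta$ we get $\Pr[\mathrm{Lap}(2/\varepsilon)<-\tfrac2\varepsilon\ln\max(\tfrac1\delta,\tfrac2\beta)]\leq\tfrac12 e^{-\ln(1/\delta)}=\tfrac\delta2\leq\delta$, so it holds with probability $\geq 1-\delta$. I expect the main obstacle to be the bookkeeping rather than any single inequality: one must check that the doubling schedule $\tau=1,2,4,\dots$ is exactly what turns the $\log\mathrm{deg}(G)$ term of the SVT error into a $\log\log$ term, pick the split of $\beta$ so the union bound closes at $\beta$ and not more, and verify that the single shift $\tfrac2\varepsilon\ln\max(\tfrac1\delta,\tfrac2\beta)$ in Line~8 is at once large enough for the $(1-\delta)$ guarantee and small enough to fit the target upper bound on $\tau^*$.
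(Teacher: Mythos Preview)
Your proposal is correct and follows essentially the same approach as the paper: apply Lemma~\ref{lem:svt} to the SVT instance (with effective budget $\varepsilon/2$ and failure probability $\beta/2$) to bound $\tau$ and $|Q_{\text{Del-N}}(G,\tau)|$, then control the Laplace noise in Line~8 to transfer these bounds to $\tau^*$ via~(\ref{e:ntau_bound}). The only cosmetic difference is that the paper packages the two Laplace tails into a single two-sided event with probability $\geq 1-\min(\beta/2,\delta)$, whereas you split them into separate one-sided events summing to $\beta/2$; both closures are equivalent.
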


\begin{proof}
Let $\tau$ denote the output from SVT in Algorithm~\ref{alg:exp_node_max_deg_approx}.
When $\tau \geq \mathrm{deg}(G)$, $Q_{\text{Del-N}}(G, \tau) = 0 = T + {8} \ln (4/\beta)/ {\varepsilon}$. 
Since $\tau$ is evaluated on a logarithmic scale, the number of iterations required to reach $\tau \geq \mathrm{deg}(G)$ is bounded by $\ln \log (4\deg(G))$. 
Thus, by Lemma~\ref{lem:svt}, with probability at least $1 - \beta/2$, we have $\tau \leq 2\deg(G)$ and
\begin{equation*}
    |Q_{\text{Del-N}}(G, \tau)| \leq \frac{8}{\varepsilon} \ln \log(4\deg(G)) + \frac{16}{\varepsilon} \ln \frac{4}{\beta}.
\end{equation*}

Next, $\tau$ is post-processed to $\tau^*$ in Line~8 of Algorithm~\ref{alg:exp_node_max_deg_approx}.
By the tail bound of the Laplace distribution, with probability at least $1 - \min(\beta/2, \delta)$,
\begin{equation*}
    0 \leq \mathrm{Lap}(\frac{2}{\varepsilon}) + \frac{2}{\varepsilon}\ln\max(\frac{1}{\delta}, \frac{2}{\beta}) \leq \frac{4}{\varepsilon}\ln\max(\frac{1}{\delta}, \frac{2}{\beta}).
\end{equation*}
Thus, under the same probability,
\begin{equation*} 
   \tau^* \leq \tau + |Q_{\text{Del-N}}(G, \tau)| + 1 + \frac{4}{\varepsilon}\ln\max(\frac{1}{\delta}, \frac{2}{\beta}),
\end{equation*}
and
\begin{equation*}
    \tau^* \geq \tau + |Q_{\text{Del-N}}(G, \tau)| + 1.
\end{equation*}
Moreover, by (\ref{e:ntau_bound}), it follows from the above that
\begin{equation*}
    N_{\tau^*}(G) \leq |Q_{\text{Del-N}}(G, \tau)|.
\end{equation*}

By applying the union bound and incorporating the upper bounds on $\tau$ and $|Q_{\text{Del-N}}(G, \tau)|$, we conclude that, with probability at least $1 - \beta$, the bounds on $\tau^*$ and $N_{\tau^*}(G)$ hold as claimed.

Finally, with probability at least $1 - \delta$,
$
    N_{\tau^*}(G) \leq |Q_{\text{Del-N}}(G, \tau)|
$
holds. Since $\tau^*$ contains the term $|Q_{\text{Del-N}}(G, \tau)|$, it follows directly that
\begin{equation*}
    \tau^* + N_{\tau^*}(G) \leq 2\tau^*. \qedhere
\end{equation*}
\end{proof}

\begin{lemma}
\label{lem:appendix_lp_score_bound}
For any $G$ and any $\tau$,
\begin{equation}
   |Q_{\text{Del-N}}(G, 3\tau)| \leq 3 \cdot |Q_{\text{LP-Del-N}}(G, \tau)|. 
\end{equation}
\end{lemma}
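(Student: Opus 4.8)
The plan is to use the rounding argument sketched in the text immediately before the lemma: take an optimal fractional solution to the LP $Q_{\text{LP-Del-N}}(G,\tau)$ and round it into a feasible integral solution for $Q_{\text{Del-N}}(G,3\tau)$, losing only a factor of $3$ in the objective. First I would fix an optimal solution $\{x_v\}_{v\in V}, \{y_e\}_{e\in E}$ to the LP so that $|Q_{\text{LP-Del-N}}(G,\tau)| = \sum_{v\in V} x_v$. Then define the node set $S = \{v \in V : x_v > 1/3\}$ and let $G^* = G \setminus S$, i.e. delete exactly the nodes in $S$ (together with their incident edges). I claim $G^*$ is a witness for $Q_{\text{Del-N}}(G,3\tau)$, so that $|Q_{\text{Del-N}}(G,3\tau)| \le |S|$, and I would then bound $|S|$ by $3\sum_v x_v$.

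The two things to check are the size bound on $S$ and the degree bound on $G^*$. For the size bound: every $v \in S$ contributes more than $1/3$ to $\sum_{v\in V} x_v$, so $|S| \le 3 \sum_{v\in V} x_v = 3|Q_{\text{LP-Del-N}}(G,\tau)|$. For the degree bound: consider any node $v$ that survives in $G^*$, so $x_v \le 1/3$. For each edge $e = (v,v')$ incident to $v$ that also survives (i.e. $v' \notin S$, so $x_{v'} \le 1/3$), constraint~(\ref{eq:cons1}) gives $y_e \ge 1 - x_v - x_{v'} \ge 1 - 1/3 - 1/3 = 1/3$. Hence, summing over the surviving incident edges $E^*(v)$ of $v$ in $G^*$,
\[
\deg_{G^*}(v) = |E^*(v)| \le 3\sum_{e \in E^*(v)} y_e \le 3\sum_{e \in E(v)} y_e \le 3\tau,
\]
where the last inequality is constraint~(\ref{eq:cons2}). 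Since this holds for every surviving node, $\deg(G^*) \le 3\tau$, so $G^*$ is indeed feasible for $Q_{\text{Del-N}}(G,3\tau)$, and by minimality in the definition of $Q_{\text{Del-N}}$ we get
\[
|Q_{\text{Del-N}}(G,3\tau)| \le |N_G - N_{G^*}| = |S| \le 3|Q_{\text{LP-Del-N}}(G,\tau)|,
\]
which is the claimed inequality.

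I don't expect a serious obstacle here — the argument is a clean deterministic LP rounding. The one point that needs a little care is making sure the definitions line up with the correct sign conventions: $Q_{\text{LP-Del-N}}$ and $Q_{\text{Del-N}}$ are both defined as negatives (the LP \emph{maximizes} $-\sum_v x_v$, and $Q_{\text{Del-N}}$ is the negative of the minimum number of deleted nodes), so when I pass to absolute values the LP objective becomes $\sum_v x_v$ and $|Q_{\text{Del-N}}(G,3\tau)|$ becomes the minimum deletion count, and the chain of inequalities reads in the intended direction. A second minor point is the choice of threshold $1/3$: it is exactly what is needed so that $1 - x_v - x_{v'} \ge 1/3$ for two surviving endpoints while also keeping $|S| \le 3\sum_v x_v$; any threshold $t \in (0,1/2)$ would give a factor $\max\{1/t,\ 1/(1-2t)\}$, and $t = 1/3$ balances these to $3$. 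No randomness is required, so there is nothing probabilistic to union-bound over.
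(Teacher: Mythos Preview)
Your proposal is correct and follows essentially the same LP-rounding argument as the paper: take the optimal fractional solution, delete the nodes with $x_v > 1/3$, show the surviving edges have $y_e \ge 1/3$ so each surviving node has degree at most $3\tau$, and bound the number of deleted nodes by $3\sum_v x_v$. Your additional remark on why the threshold $1/3$ is the balanced choice is a nice touch not spelled out in the paper.
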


\begin{algorithm}[htbp]
\caption{$\tau$-degree Subgraph Construction with LP}
\label{alg:lp_clip_graph}
\SetAlgoLined
\LinesNumbered
\SetNoFillComment 
\DontPrintSemicolon
\KwIn{Graph \(G= (V, E)\), degree bound \(\tau\)}
\KwOut{A subgraph $G^*$}

$ \{x_v\}, \{y_e\} \leftarrow Q_{\text{LP-Del-N}}(G, \tau)$\;

$G^* \gets G$ 

\For{each node \(v \in G^*\)}{
    \If{\(x_v > 1/3\)}{
    \tcc{Remove \(v\) and incident edges from \(G^*\)}
        $G^* \gets G^* \setminus \{v\}$ 
    }
}

\Return $ G^*$  
\end{algorithm}

\begin{proof}
    To begin with, with the optimal solution of $Q_{\text{LP-Del-N}}(G, \tau)$, we can construct a subgraph of $G$ with a degree upper bound of $3\tau$ by Algorithm~\ref{alg:lp_clip_graph}.

    Given $G$ and $\tau$, let the output from Algorithm~\ref{alg:lp_clip_graph} be $G^*=(V^*, E^*)$
    An edge $e = (v, v')$ is retained in $G^*$ only if both $v$ and $v'$ is not removed from $G$, i.e., $x_v \leq 1/3$ and $x_{v'} \leq 1/3$. 
For such edges $e$, $y_e \geq 1 - x_v - x_{v'} \geq 1/3$. By definition of the LP, we have 
\begin{equation*}
    \sum_{e \in E(v)} y_e \leq \tau
\end{equation*}
for any node $v \in V$.
Assign $d_e = 1$ for all retained edges $e$. Since $d_e \leq 3 \cdot y_e$, the degree of any node $v$ in $G^*$ is bounded by
\begin{equation*}
    \sum_{e \in E^*(v)} d_e \leq 3 \cdot \sum_{e \in E(v)} y_e \leq 3\tau. 
\end{equation*}

Furthermore, to construct $G^*$, a node $v$ is removed if $x_v > 1/3$, otherwise $v$ is retained. 
We define $n_v = 1$ if $v$ is clipped and $n_v = 0$ otherwise. It follows that $n_v \leq 3 \cdot x_v$.
Therefore, the total number of node removals from $G$ to obtain $G^*$ is
\begin{equation*}
    \sum_{v \in V} n_v \leq 3 \cdot \sum_{v \in V}  x_v = 3 \cdot |Q_{\text{LP-Del-N}}(G, \tau)|.
\end{equation*}
By definition, $|Q_{\text{Del-N}}(G, 3\tau)|$ represents the minimum node removal to bound the degree by $3\tau$. Therefore, we can conclude that
\begin{equation*}
     |Q_{\text{Del-N}}(G, 3\tau)| \leq 3 \cdot |Q_{\text{LP-Del-N}}(G, \tau)|. \qedhere
\end{equation*}
\end{proof}

\begin{lemma}
\label{lem:appendix_poly_bound}
Given any $\varepsilon$, $\beta$, and $\delta$, for any graph $G$, the polynomial-time node-DP maximum degree approximation algorithm returns a $\tau^*$ such that with probability at least $1-\beta$,
\begin{equation*}
    \tau^* \leq 6\mathrm{deg}(G) + \frac{24}{\varepsilon}\ln\log(4\mathrm{deg}(G)) + \frac{48}{\varepsilon}\ln\frac{4}{\beta} + \frac{12}{\varepsilon}\ln\max(\frac{1}{\delta}, \frac{2}{\beta})+ 1,
\end{equation*}
and 
\begin{equation*}
    N_{\tau^*}(G) \leq \frac{24}{\varepsilon}\ln\log(4\mathrm{deg}(G)) + \frac{48}{\varepsilon}\ln\frac{4}{\beta}.
\end{equation*}
Furthermore, with probability at least $1-\delta$, 
\begin{equation*}
    \tau^*+ N_{\tau^*}(G) \leq 2\tau^*. 
\end{equation*}
\end{lemma}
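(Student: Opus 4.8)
The plan is to follow the proof of Lemma~\ref{lem:appendix_exp_utility} almost verbatim, substituting $Q_{\text{LP-Del-N}}$ for $Q_{\text{Del-N}}$ in the SVT stage and invoking Lemma~\ref{lem:appendix_lp_score_bound} at the one place where the integral quantity $N_{\tau^*}(G)$ must be controlled. First I would analyze the SVT stage as before: by Lemma~\ref{lem:lp_property}, $Q_{\text{LP-Del-N}}(\cdot,\tau)$ is sensitivity-monotonic, has global sensitivity $1$, and equals $0$ once $\tau \ge \deg(G)$; since $\tau$ is scanned on the dyadic scale, the first candidate with $Q_{\text{LP-Del-N}}(G,\tau)=0$ is reached within $k \le \log(4\deg(G))$ steps, and the offset $T = -\tfrac{8}{\varepsilon}\ln\tfrac{4}{\beta}$ is exactly what Lemma~\ref{lem:svt} requires for an SVT run at budget $\varepsilon/2$ with failure probability $\beta/2$. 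Hence with probability at least $1-\beta/2$ the SVT returns $\tau$ with $\tau \le 2\deg(G)$ and $|Q_{\text{LP-Del-N}}(G,\tau)| \le \tfrac{8}{\varepsilon}\ln\log(4\deg(G)) + \tfrac{16}{\varepsilon}\ln\tfrac{4}{\beta}$.

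Next I would treat the modified post-processing line~(\ref{e:tau*}), $\tau^* = 3\tau + 3|Q_{\text{LP-Del-N}}(G,\tau)| + \mathrm{Lap}(6/\varepsilon) + \tfrac{6}{\varepsilon}\ln\max(\tfrac1\delta,\tfrac2\beta) + 1$. By the Laplace tail bound, the additive noise term $\mathrm{Lap}(6/\varepsilon) + \tfrac{6}{\varepsilon}\ln\max(\tfrac1\delta,\tfrac2\beta)$ lies in $\bigl[0, \tfrac{12}{\varepsilon}\ln\max(\tfrac1\delta,\tfrac2\beta)\bigr]$ except with probability within the $\beta/2$ budget for the upper endpoint and within the $\delta$ budget for the lower endpoint. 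Combining the upper endpoint with the SVT bounds on $\tau$ and $|Q_{\text{LP-Del-N}}(G,\tau)|$ and a union bound over the two $\beta/2$ failure events gives $\tau^* \le 6\deg(G) + \tfrac{24}{\varepsilon}\ln\log(4\deg(G)) + \tfrac{48}{\varepsilon}\ln\tfrac4\beta + \tfrac{12}{\varepsilon}\ln\max(\tfrac1\delta,\tfrac2\beta) + 1$ with probability at least $1-\beta$, since each factor of $3$ simply triples the corresponding term from the exponential-time bound.

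The crux is the bound on $N_{\tau^*}(G)$, and here Lemma~\ref{lem:appendix_lp_score_bound} does the work. Set $q := |Q_{\text{Del-N}}(G,3\tau)|$, so that $q \le 3\,|Q_{\text{LP-Del-N}}(G,\tau)|$. By the definition of $Q_{\text{Del-N}}$ there is a set $S$ of $q$ nodes whose deletion from $G$ leaves maximum degree at most $3\tau$; since removing the $q$ nodes of $S$ drops the degree of any surviving node by at most $q$, every node of $G$ outside $S$ has $G$-degree at most $3\tau + q$, hence $N_{3\tau + q + 1}(G) \le q$. On the event that the Laplace noise in~(\ref{e:tau*}) is nonnegative---an event deterministic in $\tau$ that holds with probability at least $1-\delta$---we have $\tau^* \ge 3\tau + 3|Q_{\text{LP-Del-N}}(G,\tau)| + 1 \ge 3\tau + q + 1$, so $N_{\tau^*}(G) \le N_{3\tau+q+1}(G) \le q \le 3|Q_{\text{LP-Del-N}}(G,\tau)|$. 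Plugging in the SVT bound gives $N_{\tau^*}(G) \le \tfrac{24}{\varepsilon}\ln\log(4\deg(G)) + \tfrac{48}{\varepsilon}\ln\tfrac4\beta$ with probability at least $1-\beta$; and on the same nonnegativity event $\tau^* \ge 3|Q_{\text{LP-Del-N}}(G,\tau)| \ge N_{\tau^*}(G)$, which is exactly $\tau^* + N_{\tau^*}(G) \le 2\tau^*$, with probability at least $1-\delta$. Privacy requires no new argument: it follows from the three properties in Lemma~\ref{lem:lp_property} exactly as in the exponential-time case.

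The main obstacle I anticipate is bookkeeping the two independent factors of $3$ lost to the LP relaxation: one appears in the degree level $3\tau$ that the rounding of Algorithm~\ref{alg:lp_clip_graph} actually certifies, and one in the node-removal budget $3|Q_{\text{LP-Del-N}}(G,\tau)|$; both must be baked into line~(\ref{e:tau*}) as $3\tau + 3|Q_{\text{LP-Del-N}}(G,\tau)| + \cdots$ (rather than the $\tau + |Q_{\text{Del-N}}(G,\tau)| + \cdots$ of the exponential-time version) for the inequality $\tau^* \ge 3\tau + q + 1$, and hence the bound on $N_{\tau^*}(G)$, to go through. Everything else---the SVT tail bound, the Laplace tail bounds, the union bounds, and the translation of constants---is routine.
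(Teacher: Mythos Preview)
Your proposal is correct and follows essentially the same route as the paper: the SVT analysis via Lemma~\ref{lem:lp_property} and Lemma~\ref{lem:svt}, the Laplace tail bounds on the post-processing term in~(\ref{e:tau*}), and the use of Lemma~\ref{lem:appendix_lp_score_bound} to pass from $|Q_{\text{LP-Del-N}}(G,\tau)|$ to $|Q_{\text{Del-N}}(G,3\tau)|$ and then to $N_{\tau^*}(G)$ all match the paper's argument. The only cosmetic difference is that you spell out inline the ``removing $q$ nodes can lower any remaining node's degree by at most $q$'' reasoning, whereas the paper simply cites the earlier inequality~(\ref{e:ntau_bound}) applied at level $3\tau$.
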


\begin{proof}
Similar to the proof of Lemma~\ref{lem:appendix_exp_utility}, the output $\tau$ from the SVT in the node-DP polynomial-time maximum degree approximation satisfies $\tau \leq 2\mathrm{deg}(G)$ and 
\begin{equation*}
|Q_{\text{LP-Del-N}}(G, \tau)| \leq \frac{8}{\varepsilon} \ln \log(4\deg(G)) + \frac{16}{\varepsilon} \ln \frac{4}{\beta}
\end{equation*}
with probability at least $1 - \beta/2$.

Next, $\tau$ is post-processed to $\tau^*$ as defined in (\ref{e:tau*}). Similarly to Lemma~\ref{lem:appendix_exp_utility}, with probability at least $1 - min(\beta/2, \delta)$,
\begin{equation*}
    \tau^* \leq 3\tau + 3|Q_{\text{LP-Del-N}}(G, \tau)| + 1 + \frac{12}{\varepsilon}\ln\max(\frac{1}{\delta}, \frac{2}{\beta}), 
\end{equation*}
and
\begin{equation*}
    \tau^* \geq 3\tau + 3 \cdot |Q_{\text{LP-Del-N}}(G, \tau)| + 1.
\end{equation*}

By Lemma~\ref{lem:appendix_lp_score_bound}, this directly implies that 
\begin{equation*}
     \tau^* \geq 3\tau + |Q_{\text{Del-N}}(G, 3\tau)| + 1. 
\end{equation*}
Applying (\ref{e:ntau_bound}) with $\tau = 3\tau$ and using Lemma~\ref{lem:appendix_lp_score_bound}, we obtain
\begin{equation*}
     N_{\tau^*}(G) \leq |Q_{\text{Del-N}}(G, 3\tau)| \leq 3 \cdot |Q_{\text{LP-Del-N}}(G, \tau)|.
\end{equation*}
    
By applying the union bound and incorporating the upper bound on $\tau$ and $|Q_{\text{LP-Del-N}}(G, \tau)|$, we conclude that, with probability at least $1 - \beta$, the bounds on $\tau^*$ and $N_{\tau^*}(G)$ hold as claimed.

Finally, $N_{\tau^*}(G) \leq 3 \cdot |Q_{\text{LP-Del-N}}(G, \tau)|$ holds with probability at least $1 - \delta$. Since $\tau^*$ includes the term $3 \cdot |Q_{\text{LP-Del-N}}(G, \tau)|$, it follows directly that 
 \begin{equation*}
     \tau^* + N_{\tau^*}(G) \leq 2\tau^*. \qedhere
 \end{equation*}
\end{proof}

\begin{theorem}
The N2E framework satisfies $(\varepsilon, \delta)$-node-DP.
\end{theorem}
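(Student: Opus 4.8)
The plan is to treat N2E as a two-stage adaptive mechanism: first the polynomial-time node-DP maximum degree approximation $\mathcal{A}$, run with budget $2\varepsilon/3$ and utility parameter $\delta/2$, which by Section~\ref{subsubsec:method-deg-nodedp_poly} is $(2\varepsilon/3)$-node-DP (pure); then, for the value $\tau^*$ it returns, the deterministic clip $\mathrm{Clip}(\cdot,\tau^*)$ followed by $\mathcal{N}_{\tau^*}(\cdot) := \mathcal{M}_Q^{\mathrm{edge}}(\mathrm{Clip}(\cdot,\tau^*),\,\varepsilon/6\tau^*,\,\delta/4\tau^*,\,\beta/3)$. For any graph $H$ and output set $S$ we have $\Pr[\mathcal{M}_Q^{\mathrm{node}}(H)\in S] = \mathbb{E}_{\tau^*\sim\mathcal{A}(H)}\big[\Pr[\mathcal{N}_{\tau^*}(H)\in S\mid\tau^*]\big]$; note $\tau^*\ge 1$ always (Lemma~\ref{lem:poly_bound}), so all the split budgets are well defined (if $\tau^*$ is non-integer, read $\lfloor\tau^*\rfloor$ throughout and nothing changes). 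Fix node-neighbors and assume WLOG $G\subseteq G'$. The three facts I would combine are: (i) pure $(2\varepsilon/3)$-node-DP of $\mathcal{A}$ together with post-processing (Lemma~\ref{lem:pp}); (ii) group privacy (Lemma~\ref{lem:gp}) applied to the edge-DP mechanism; and (iii) the distance-preserving bound $d_{\mathrm{edge}}(\mathrm{Clip}(G,\tau),\mathrm{Clip}(G',\tau))\le g(G,\tau)=\tau+N_\tau(G)$ from Lemma~\ref{lem:clip_edge_dis} and~(\ref{e:n_contribution}), together with the high-probability guarantee of Lemma~\ref{lem:poly_bound}.

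The first step is to carve out a ``good'' set of thresholds on which the distance bound is usable. Call $t$ good if $g(G,t)\le 2t$, i.e.\ $N_t(G)\le t$. By Lemma~\ref{lem:poly_bound} instantiated with its $\delta$-parameter set to $\delta/2$, when $\tau^*\sim\mathcal{A}(G)$ we have $\tau^*+N_{\tau^*}(G)\le 2\tau^*$ — hence $\tau^*$ good — with probability at least $1-\delta/2$. Crucially, the same good event is also available when $\tau^*\sim\mathcal{A}(G')$: $G\subseteq G'$ forces $\deg_G(v)\le\deg_{G'}(v)$ for every node, so $N_t(G)\le N_t(G')$, and the event $\tau^*+N_{\tau^*}(G')\le 2\tau^*$ (probability $\ge 1-\delta/2$ by Lemma~\ref{lem:poly_bound}) implies $N_{\tau^*}(G)\le\tau^*$, i.e.\ $\tau^*$ good. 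Thus in either orientation the contribution of the non-good thresholds to $\Pr[\mathcal{M}_Q^{\mathrm{node}}(H)\in S]$ is at most $\Pr[\tau^*\ \mathrm{not\ good}]\le\delta/2$, and it suffices to compare the two mechanisms on good $t$.

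Next, fix a good $t$. Then $\mathrm{Clip}(G,t)$ and $\mathrm{Clip}(G',t)$ are at edge distance at most $g(G,t)\le 2t$, so applying group privacy (Lemma~\ref{lem:gp}) to the $(\varepsilon/6t,\delta/4t)$-edge-DP mechanism over $\lambda=2t$ differing edges gives $\Pr[\mathcal{N}_t(G)\in S]\le e^{\lambda\varepsilon/(6t)}\Pr[\mathcal{N}_t(G')\in S]+\lambda\delta/(4t)=e^{\varepsilon/3}\Pr[\mathcal{N}_t(G')\in S]+\delta/2$, and symmetrically with $G,G'$ exchanged. Averaging this against the mixing weights from $\mathcal{A}(G)$ and absorbing the non-good mass into a single $\delta/2$ yields $\Pr[\mathcal{M}_Q^{\mathrm{node}}(G)\in S]\le e^{\varepsilon/3}\,\mathbb{E}_{\tau^*\sim\mathcal{A}(G)}[h(\tau^*)]+\delta/2+\delta/2$, where $h(t):=\Pr[\mathcal{N}_t(G')\in S]\in[0,1]$ is a fixed $[0,1]$-valued function once $G'$ and $S$ are frozen. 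Finally I would collapse the mixture's first stage: since flipping an $h(\tau^*)$-biased coin is a post-processing of $\mathcal{A}$, the pure $(2\varepsilon/3)$-node-DP of $\mathcal{A}$ (Lemma~\ref{lem:pp}) gives $\mathbb{E}_{\tau^*\sim\mathcal{A}(G)}[h(\tau^*)]\le e^{2\varepsilon/3}\,\mathbb{E}_{\tau^*\sim\mathcal{A}(G')}[h(\tau^*)]=e^{2\varepsilon/3}\Pr[\mathcal{M}_Q^{\mathrm{node}}(G')\in S]$. Composing the multiplicative factors, $e^{\varepsilon/3}\cdot e^{2\varepsilon/3}=e^{\varepsilon}$, and summing the additive terms, $\delta/2+\delta/2=\delta$, yields $\Pr[\mathcal{M}_Q^{\mathrm{node}}(G)\in S]\le e^{\varepsilon}\Pr[\mathcal{M}_Q^{\mathrm{node}}(G')\in S]+\delta$; the reverse direction is the identical argument with $G,G'$ exchanged, which is legitimate precisely because the good event transfers via $N_t(G)\le N_t(G')$.

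I expect the main obstacle to be the adaptive-composition bookkeeping rather than any single inequality. Because the last stage's privacy budgets $\varepsilon/6\tau^*,\,\delta/4\tau^*$ and the very validity of the bound $g(G,\tau^*)\le 2\tau^*$ both depend on the private output $\tau^*$, one cannot simply invoke sequential composition (Lemma~\ref{lem:bc}); instead the mixture over $\tau^*$ must be opened up, each $\tau^*$-slice handled, and the mixture re-collapsed using the pure DP of $\mathcal{A}$. The two points to get right are (a) charging the ``distance bound fails'' event exactly once — it is absorbed into a single $\delta/2$, not once per slice — and (b) ensuring this failure event is simultaneously available for both orientations of the neighboring pair, which is exactly why the monotonicity $N_t(G)\le N_t(G')$ under $G\subseteq G'$ is needed rather than a symmetric statement.
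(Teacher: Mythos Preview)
Your proposal is correct and follows the same high-level decomposition as the paper: the same budget split ($2\varepsilon/3$ for the degree approximator, $\varepsilon/6\tau^*$ and $\delta/4\tau^*$ for the edge-DP stage), the same use of Lemma~\ref{lem:clip_edge_dis} and Lemma~\ref{lem:poly_bound} to bound the post-clipping edge distance by $2\tau^*$ with failure probability $\delta/2$, and the same application of group privacy to turn that into an $(\varepsilon/3,\delta/2)$ contribution.

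Where you differ is in rigor rather than route. The paper's proof simply asserts that the clipped-and-noised stage is $(\varepsilon/3,\delta)$-node-DP and then invokes sequential composition (Lemma~\ref{lem:bc}) to finish; it does not explicitly address that the second stage's privacy parameters depend on the random output $\tau^*$ of the first stage, nor does it spell out why the $1-\delta/2$ guarantee on $g(G,\tau^*)\le 2\tau^*$ is available for \emph{both} orientations of the neighboring pair. You unpack both of these points: you open the mixture over $\tau^*$, handle each slice via group privacy, re-collapse using the pure DP of $\mathcal{A}$ together with post-processing, and you make explicit the monotonicity step $N_t(G)\le N_t(G')$ under $G\subseteq G'$ that lets the good event computed on $G'$ still control the edge distance bound (which is stated in terms of the smaller graph $G$). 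Your argument is thus a faithful, more carefully justified version of the paper's sketch.
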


\begin{proof}
In the first step of N2E, the polynomial-time maximum degree approximation algorithm ensures $2\varepsilon/3$-node-DP to compute the approximation $\tau^*$,
and guarantees an edge distance bound of \( 2\tau^* \) between neighboring graphs after clipping.  
This bound holds with probability at least $1 - \delta/2$.

The second step of N2E, graph clipping, does not consume any privacy budget.

In the third step, by group privacy (Lemma~\ref{lem:gp}), the edge-DP mechanism with $\varepsilon' = \varepsilon / 6\tau^*$ and $\delta' = \delta / 4\tau^*$ achieves $(\varepsilon/3, \delta/2)$-node-DP when the edge distance bound holds. 
Since this bound fails with probability at most $\delta/2$, the mechanism satisfies $(\varepsilon/3, \delta)$-node-DP.

By sequential composition (Lemma~\ref{lem:bc}), combining all three steps yields an overall $(\varepsilon, \delta)$-node-DP for the N2E framework.
\end{proof}

 \begin{theorem}

Given any $\varepsilon$, $\beta$, and $\delta$, for any graph $G$, task $Q$, and edge-DP mechanism $\mathcal{M}_Q^{\text{edge}}$, with probability at least $1-\beta$, the N2E framework outputs a node-DP result with error 
\begin{equation*}
       |Q(G) - Q(\overline{G})| + \mathrm{Err}_Q^{\text{\emph{edge}}}(\overline{G},\varepsilon/6\tau^*, \delta/4\tau^*,\beta/3),
\end{equation*}
where $\overline{G}$ is obtained by clipping edges of $O(\log(\log(\mathrm{deg}(G))/ \beta)/\varepsilon)$ nodes from $G$, and $\mathrm{Err}_Q^{\text{\emph{edge}}} \\
(\overline{G},\varepsilon/6\tau^*, \delta/4\tau^*,\beta/3)$ denotes the error of the edge-DP mechanism on task $Q$ given clipped graph $\overline{G}$, privacy budgets $\varepsilon/6\tau^*$ and $\delta/4\tau^*$, and error failure probability $\beta/3$.

\end{theorem}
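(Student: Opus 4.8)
The plan is to decompose the deviation of the framework's output from $Q(G)$ by the triangle inequality, bound the two resulting terms separately, and combine everything with a single union bound over the randomness of the two non-trivial steps of N2E. Recall that N2E returns $Z := \mathcal{M}_Q^{\text{edge}}(\overline{G},\varepsilon/6\tau^*,\delta/4\tau^*,\beta/3)$, where $\tau^*$ is the output of the polynomial-time node-DP maximum degree approximation run with failure probability $2\beta/3$, and $\overline{G}=\mathrm{Clip}(G,\tau^*)$ is produced by Algorithm~\ref{alg:clip_graph}. I would start from
\[
\bigl|Q(G)-Z\bigr|\;\le\;\bigl|Q(G)-Q(\overline{G})\bigr|\;+\;\bigl|Q(\overline{G})-Z\bigr|.
\]
The first term is precisely the clipping bias and is deterministic given $\tau^*$; it contributes the $|Q(G)-Q(\overline{G})|$ part of the claimed bound with no probabilistic argument. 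For the second term I would appeal to the meaning of $\mathrm{Err}_Q^{\text{edge}}$: conditioned on any realized value of $\tau^*$ (hence on a fixed $\overline{G}$ and fixed budgets $\varepsilon/6\tau^*,\delta/4\tau^*$), over the internal coins of $\mathcal{M}_Q^{\text{edge}}$ we have $|Q(\overline{G})-Z|\le\mathrm{Err}_Q^{\text{edge}}(\overline{G},\varepsilon/6\tau^*,\delta/4\tau^*,\beta/3)$ except with probability $\beta/3$. Since this holds for every value of $\tau^*$ and the coins of the two steps are independent, the same bound holds unconditionally with probability at least $1-\beta/3$.

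Next I would pin down the structure of $\overline{G}$. In Algorithm~\ref{alg:clip_graph}, an edge $(u,v)$ survives unless it fails the rank-$\le\tau^*$ test at $u$ or at $v$, and this test can fail at an endpoint $w$ only when $\deg_G(w)>\tau^*$. Hence every edge deleted in forming $\overline{G}$ is incident to some node of degree at least $\tau^*+1$, and the set of such nodes has size $N_{\tau^*+1}(G)\le N_{\tau^*}(G)$. Now I would invoke Lemma~\ref{lem:poly_bound} with error failure probability $2\beta/3$ --- exactly the value passed to the first step of N2E --- which gives, with probability at least $1-2\beta/3$,
\[
N_{\tau^*}(G)\;\le\;\frac{24}{\varepsilon}\ln\log(4\deg(G))+\frac{48}{\varepsilon}\ln\frac{6}{\beta}\;=\;O\!\left(\frac{\log(\log(\deg(G))/\beta)}{\varepsilon}\right).
\]
On this event, $\overline{G}$ is obtained from $G$ by clipping edges incident to only $O(\log(\log(\deg(G))/\beta)/\varepsilon)$ nodes, which is the structural description asserted in the theorem.

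Finally I would assemble the argument by a union bound: the maximum-degree-approximation accuracy event (probability at least $1-2\beta/3$) and the edge-DP accuracy event (probability at least $1-\beta/3$) fail with combined probability at most $\beta$. On the complement, both the structural description of $\overline{G}$ and the inequality $|Q(\overline{G})-Z|\le\mathrm{Err}_Q^{\text{edge}}(\overline{G},\varepsilon/6\tau^*,\delta/4\tau^*,\beta/3)$ hold, and combining these with the deterministic triangle inequality above yields the stated error with probability at least $1-\beta$. The $(\varepsilon,\delta)$-node-DP guarantee of the returned value is already established separately in Theorem~\ref{the:n2e_privacy}, so it need not be re-derived here.

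I expect the only delicate point to be the conditioning step in the first paragraph: one must ensure that the $\beta/3$ failure probability of $\mathcal{M}_Q^{\text{edge}}$ is uniform over the random choice of $\tau^*$, and that $\mathrm{Err}_Q^{\text{edge}}$ is consistently read as the instance- and budget-dependent accuracy guarantee evaluated at the realized $\overline{G}$ and the realized budget $\varepsilon/6\tau^*$ rather than at some fixed quantity. This is exactly what guarantees of the form ``$\Pr[\,|\cdot|\le\mathrm{Err}_Q^{\text{edge}}(\cdot)\,]\ge 1-\beta'$'' provide, so beyond this bookkeeping the argument uses no deep inequality; the \emph{main obstacle is stating it cleanly} rather than proving anything hard.
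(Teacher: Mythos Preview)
Your proposal is correct and follows essentially the same approach as the paper's own proof: decompose the error via the triangle inequality into clipping bias plus edge-DP error, bound the number of nodes affected by clipping via $N_{\tau^*}(G)$ using Lemma~\ref{lem:poly_bound} with failure probability $2\beta/3$, bound the edge-DP term with failure probability $\beta/3$, and combine with a union bound. Your treatment is in fact more explicit than the paper's (particularly the conditioning argument for the edge-DP accuracy and the observation that only nodes with $\deg_G(w)>\tau^*$ can lose edges under Algorithm~\ref{alg:clip_graph}), but the structure and ingredients are the same.
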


\begin{proof}
In the worst case, all edges incident to nodes with degree exceeding $\tau^*$ are removed by the distance-preserving clipping mechanism. By Lemma~\ref{lem:appendix_poly_bound}, with probability at least $1-2\beta/3$, $N_{\tau^*}(G)$ is bounded by $O(\log(\log(\mathrm{deg}(G))/ \beta)/\varepsilon)$. 
Hence, $\overline{G}$ is obtained by clipping edges of at most $O(\log(\log(\mathrm{deg}(G))/ \beta)/\varepsilon)$ nodes from $G$ under the same probability.

The edge-DP mechanism error $\mathrm{Err}_Q^{\text{{edge}}}(\overline{G}, \varepsilon/6\tau^*, \delta/4\tau^*, \beta/3)$ holds with probability at least $1 - \beta/3$. By the union bound, the total error of the N2E output holds with probability at least $1 - \beta$, completing the proof.
\end{proof}

\section{Application Details}

\paragraph{Edge counting}

Under $\varepsilon$-edge-DP, edge counting using the Laplace mechanism incurs an error of $O(1/\varepsilon)$.  
We integrate this into our N2E framework as the edge-DP mechanism.
For the clipping bias, since each node contributes at most $\mathrm{deg}(G)$ edges, the total number of edges clipped to obtain $\overline{G}$ is bounded by $O(\mathrm{deg}(G)\log\log\mathrm{deg}(G)/\varepsilon)$.
The error from the edge-DP mechanism is bounded by $O(\mathrm{deg}(G)/\varepsilon + \log\log\mathrm{deg}(G)/\varepsilon^2 + \log(1/\delta)/\varepsilon^2)$.
As a result, the total error is bounded by
$\tilde{O}(\mathrm{deg}(G)/\varepsilon + 1/\varepsilon^2)$,
which matches our target of upgrading the error of edge-DP by only a factor of $\tilde{O}(\deg(G))$ with a small additive term of $\tilde{O}(1/\varepsilon)$. 

\paragraph{Maximum degree estimation.}

For $\varepsilon$-edge-DP, the Laplace mechanism incurs an error of $O(1/\varepsilon)$.  
However, when applied within our N2E framework, this mechanism introduces an error of $\tilde{O}(\mathrm{deg}(G)/\varepsilon)$, which has limited utility.

To address this, we use Algorithm~\ref{alg:edge_dp_max_deg_esti} as the edge-DP mechanism in our N2E framework, which incurs an edge distance of $O(\log\mathrm{deg}(G)/\varepsilon)$ under $\varepsilon$-edge-DP. 
We also evaluate the edge distance instead of the exact error for the N2E framework output.
For the clipping bias, the edge distance between $G$ and $\overline{G}$ is bounded by $O(\mathrm{deg}(G)\log\log\mathrm{deg}(G)/\varepsilon)$.
For the edge-DP mechanism application, the edge distance is bounded by $O(\mathrm{deg}(G)\log\mathrm{deg}(G)/\varepsilon + \log\mathrm{deg}(G)\log\log\mathrm{deg}(G)/\varepsilon^2 + \log\mathrm{deg}(G)\log(1/\delta)/\varepsilon^2)$.
Therefore, the total edge distance is asymptotically dominated by the edge-DP mechanism application and is bounded by $\tilde{O}(\mathrm{deg}(G)/\varepsilon + 1/ \varepsilon^2)$, achieving our target of upgrading the edge-DP error by $\tilde{O}(\mathrm{deg}(G))$ with an additive term of $\tilde{O}(1/\varepsilon)$.

\paragraph{Degree distribution}

Under $\varepsilon$-edge-DP, the Laplace mechanism achieves an $L1$ error of $O(\mathrm{deg}(G)^{0.5}/\varepsilon)$.
In the node-DP setting, the existing approach by~\citet{day2016publishing} employs a graph clipping mechanism $\pi_{\theta}(G)$ that limits the global sensitivity of the degree histogram to $2\theta + 1$ with a degree bound $\theta$.
We adopt this strategy and compute an instance-specific $\theta$ using our polynomial-time maximum degree approximation algorithm.  
Since each clipped edge affects $2$ node degrees, the resulting clipping bias is bounded by $O(\mathrm{deg}(G)\log\log\mathrm{deg}(G)/\varepsilon)$.
The error introduced by the Laplace noise is bounded by $O(\mathrm{deg}^{1.5}(G)/\varepsilon + \mathrm{deg}^{0.5}(G)\log\log\mathrm{deg}(G)/\varepsilon^2 + \mathrm{deg}^{0.5}(G)\log(1/\delta)/\varepsilon^2)$. 
In summary, the total error is bounded by $\tilde{O}(\mathrm{deg}^{1.5}(G)/\varepsilon + \mathrm{deg}^{0.5}(G)/\varepsilon^2)$, which meets our goal of amplifying the edge-DP error by $\tilde{O}(\mathrm{deg}(G))$ with an additive term of $\tilde{O}(1/\varepsilon)$.

\end{document}